\numberwithin{equation}{section}
\numberwithin{figure}{section}
\numberwithin{table}{section}
\newtheorem{Theorem}{Theorem}[section]
\newtheorem{Lemma}[Theorem]{Lemma}
\newtheorem{Remark}[Theorem]{Remark}
\newtheorem{Example}[Theorem]{Example}
\newtheorem{Proposition}[Theorem]{Proposition}
\newtheorem{Definition}[Theorem]{Definition}
\newtheorem{Corollary}[Theorem]{Corollary}
\def\btheta{\boldsymbol \theta}
\newcommand{\deltain}{\delta_{\text in}}
\newcommand{\deltaout}{\delta_{\text out}}
\newcommand{\bthe}{\begin{Theorem}}
\newcommand{\ethe}{\end{Theorem}}
\newcommand{\ble}{\begin{Lemma}}
\newcommand{\ele}{\end{Lemma}}
\newcommand{\bde}{\begin{Definition}}
\newcommand{\ede}{\end{Definition}}
\newcommand{\bco}{\begin{Corollary}}
\newcommand{\eco}{\end{Corollary}}
\newcommand{\bpr}{\begin{Proposition}}
\newcommand{\epr}{\end{Proposition}}
\newcommand{\brem}{\begin{Remark}}
\newcommand{\erem}{\end{Remark}}
\newcommand{\bexam}{\begin{Example}}
\newcommand{\eexam}{\end{Example}}
\newcommand{\beqq}{\begin{equation}}
\newcommand{\eeqq}{\end{equation}}
\newcommand{\beao}{\begin{eqnarray*}}
\newcommand{\eeao}{\end{eqnarray*}\noindent}
\newcommand{\beam}{\begin{eqnarray}}
\newcommand{\eeam}{\end{eqnarray}\noindent}
\newcommand{\barr}{\begin{array}}
\newcommand{\earr}{\end{array}}
\newcommand{\bproof}{\begin{proof}}
\newcommand{\eproof}{\end{proof}}
\newcommand{\algrule}[1][.2pt]{\par\vskip.5\baselineskip\hrule height #1\par\vskip.5\baselineskip}
\newcommand{\Din}{D_{\text{in}}}
\newcommand{\Dout}{D_{\text{out}}}
\newcommand{\Nin}{N^{\text{in}}}
\newcommand{\Nout}{N^{\text{out}}}
\newcommand{\pin}{p^{\text{in}}}
\newcommand{\pout}{p^{\text{out}}}
\newcommand{\Cin}{C_{\text{in}}}
\newcommand{\Cout}{C_{\text{out}}}
\newcommand{\ain}{\iota_\text{in}}
\newcommand{\aout}{\iota_\text{out}}
\newcommand{\din}{\delta_{\text{in}}}
\newcommand{\dout}{\delta_{\text{out}}}
\newcommand{\hatdin
}{\hat{\delta}_{\text{in}}}
\newcommand{\hatdout}{\hat{\delta}_{\text{out}}}
\newcommand{\tildedin}{\tilde{\delta}_{\text{in}}}
\newcommand{\tildedout}{\tilde{\delta}_{\text{out}}}
\newcommand{\pij}{p_{ij}}
\newcommand{\vstart}{v^{(1)}}
\newcommand{\vend}{v^{(2)}}
\newcommand{\PP}{\textbf{P}}
\newcommand{\EE}{\textbf{E}}
\newcommand{\ind}{\textbf{1}}
\newcommand{\argmax}{\operatornamewithlimits{argmax}}
\newcommand{\convp}{\stackrel{p}{\rightarrow}}
\newcommand{\convd}{\stackrel{d}{\rightarrow}}
\newcommand{\convas}{\stackrel{\text{a.s.}}{\longrightarrow}}
\newcommand{\Var}{\text{Var}}
\begin{document}
\bibliographystyle{plain}

\begin{frontmatter}

\title{Fitting the Linear Preferential Attachment Model}
\runtitle{Linear Preferential Attachment Model}

\begin{aug}
	
\author[1]{Phyllis Wan}\ead[label=e1]{phyllis@stat.columbia.edu},
\author[2]{Tiandong Wang}\ead[label=e2]{tw398@cornell.edu},
\author[1]{Richard A. Davis}\ead[label=e3]{rdavis@stat.columbia.edu},
\and
\author[2]{Sidney I.\ Resnick}\ead[label=e4]{sir1@cornell.edu}
\address[1]{Department of Statistics\\
Columbia University\\
1255 Amsterdam Avenue, MC 4690\\
New York, NY 10027\\
\printead{e1,e3}}
\address[2]{School of Operations Research and Information Engineering\\
Cornell University \\
Ithaca, NY 14853\\
\printead{e2,e4}}

\runauthor{Wan et al.}


\end{aug}

\begin{keyword}
\kwd{power laws}
\kwd{multivariate heavy tail statistics}
\kwd{preferential attachment}
\kwd{estimation}
\end{keyword}

\begin{abstract} 

Preferential attachment is an {appealing mechanism} for modeling
power-law behavior of the degree distributions in directed social
networks. In this paper, we consider {methods for} fitting a 5-parameter linear
preferential model to network {data} under two data scenarios. In the case
where full history of the network formation is given, we derive the
maximum likelihood estimator of the parameters and show that it is
strongly consistent and asymptotically normal. In the case where only
a {single-time} snapshot of the network is available, we propose an estimation
method which combines method of moments with an approximation to the likelihood. The resulting estimator is also strongly consistent and
performs quite well compared to the MLE estimator. We illustrate both
estimation procedures through simulated data, and explore the usage of
this model in a real data example. 
\end{abstract}

\end{frontmatter}
\maketitle


\section{Introduction}\label{intro}

The preferential attachment mechanism, in which edges and nodes are
added to the network based on probabilistic rules, provides an
appealing description for the evolution of a network. The rule for how
edges connect nodes depends on node degree; {large degree nodes} 
attract more edges. The idea is applicable to both directed and
undirected graphs and is often the basis for studying social networks,
collaborator and citation networks, and recommender
networks. Elementary descriptions of the preferential attachment model
can be found in \cite{easley:kleinberg:2010} while more 
mathematical treatments are available in \cite{durrett:2010b,vanderHofstad:2017,bhamidi:2007}.
Also see \cite{kolaczyk:csardi:2014} for a statistical survey of
methods for network data, \cite{rinaldoEtAl2013} for
  consideration of statistics of an undirected network and
  \cite{yanEtal:2016} for asymptotics of a directed exponential
  random graph models. Limit theory for estimates of an undirected
  preferential attachment model was considered in \cite{GAO2017}.

{For many networks, empirical evidence} supports the hypothesis
that in- and out-degree distributions  follow a power law. This
property has been shown to hold in linear 
 preferential attachment models, which makes {preferential
   attachment} an attractive choice 
 for network modeling \cite{durrett:2010b,vanderHofstad:2017,
   krapivsky:2001,krapivsky:redner:2001,bollobas:borgs:chayes:riordan:2003}.  
While the marginal degree power laws in a simple linear preferential
attachment model were established in
\cite{krapivsky:2001,krapivsky:redner:2001,bollobas:borgs:chayes:riordan:2003},
the joint regular variation (see \cite{resnickbook:2008,
  resnickbook:2007})
 which is akin to a {\it joint power law}, was only recently
 established
 \cite{resnick:samorodnitsky:towsley:davis:willis:wan:2016, resnick:samorodnitsky:2015}.   
In addition, it was shown in \cite{wang:resnick:2016} that the joint probability mass function of the in- and out-degrees is multivariate regularly varying. This is a key result as the degrees of a network are integer-valued.

In this paper, we {discuss methods of} fitting a simple linear
preferential attachment model, which {is parametrized by}
$\boldsymbol{\theta} =(\alpha,\beta,\gamma,\din,\dout)$. The first
three parameters, $\alpha,\beta,\gamma$, correspond to probabilities of the 3
scenarios for adding an edge and hence sum to 1, i.e.,
$\alpha+\beta+\gamma=1$. The other two, $\din$ and $\dout$, are tuning
parameters related to growth rates. The tail indices of the marginal
power laws for the in- and out-degrees can be expressed as explicit
functions of $\boldsymbol{\theta}$ (see \eqref{c1} and \eqref{c2} below). The graph
$G(n)=(V(n),E(n))$, where $V(n)$ is the set of nodes and $E(n)$ is the
set of edges at the $n$th iteration, evolves  {based on postulates} that describe how new edges and nodes are formed. This construction of the network is Markov in the sense that the probabilistic rules for obtaining $G(n+1)$ once $G(n)$ is known do not require prior knowledge of earlier stages of the construction.

The Markov structure of the model allows us to construct a likelihood
function based on {observing}
$G(n_0),G(n_0+1),\ldots,G(n_0+n)$. After deriving the likelihood
function, we show that there exists a unique maximum at
$\hat{\boldsymbol{\theta}} =
(\hat\alpha,\hat\beta,\hat\gamma,\hatdin,\hatdout)$ and that the
resulting maximum likelihood estimator is strongly consistent and
asymptotically normal. The normality is proved using a martingale
central limit theorem applied to the score function. The limiting
distribution also reveals that $(\hat\alpha,\hat\beta,\hat\gamma)$,
$\hatdin$, and $\hatdout$ are asymptotically independent.  
From these results, asymptotic properties of the MLE for the power law indices can be derived.

For some network data, only a snapshot of the nodes and edges is
available at a single point in time, that is, only $G(n)$ is available
for some $n$. In such cases, we propose an estimation procedure for
the parameters of the network using an approximation to the
likelihood and method of moments. This also produces strongly consistent
estimators. {These estimators perform} reasonably well compared to the MLE where the entire evolution of the network is known {but predictably there is some loss of efficiency.}

We illustrate the estimation procedure for both scenarios {using}
simulated data. {S}imulation plays an important role in the process
of modeling networks {since} it provides a way to assess the
performance of model fitting procedures in the idealized setting of
knowing the true model. {Also}, after fitting a model to real
data, simulation provides a check on the quality of fit. Departures
from model assumptions can often be detected via simulation of
multiple realizations from the fitted network. Hence it is important
to have efficient simulation algorithms for producing realizations of
the preferential attachment network for a given set of parameter
values.  
We adopt a simulation method, learned from Joyjit Roy, that was {inspired} by \cite{atwood:2015} and is similar to that of \cite{tonelli:concas:locci:2010}.

{Our fitting methods are implemented} in a real data setting using
the Dutch Wiki talk network {\cite{kunegis:2013}.} While one should not
expect the simple 
5-parameter (later extended to 7-parameter) linear preferential
attachment model to fully explain a network with millions of edges, it
does provide a reasonable fit to the tail behavior of the degree
distributions. We are also able to detect important structural
features in the network through 
{fitting the model over} separate time intervals. 

{Often it is difficult to believe in the existence of a true model, especially one whose parameters remain constant over time.  Allowing, as we do, a preferential attachment model with only a few parameters and no possibility for node removal may seem simplistic and unrealistic for social network data.  Of course, preferential attachment is only one mechanism for network formation and evidence for its use in fields outside data networks is mixed \cite{jones:handcock:2003, jones2003social}  and we restrict attention to linear preferential attachment.  Even imperfect models have the potential to capture salient properties in the data, such as heavy-tailedness of the in-degree and out-degree distributions, and to identify departures from model assumptions.  While maximum likelihood estimation is essentially the gold standard for cases when the underlying model is a good representation of the data, it may perform poorly in case the model is far from being appropriate.  In forthcoming work, we consider a semi-parametric estimation approach for network models that exhibit heavy-tailed degree distributions.  This alternative estimation methodology borrows ideas from extreme value theory.   }

The rest of the paper is structured as follows. In Section \ref{sec:sim}, we formulate the linear
preferential attachment network model and present an efficient simulation method for the network.  Section \ref{sec:estMLE} gives parameter estimators  when either the full history is known or when only a single snapshot {in time} is available. We test these estimators against simulated data in Section \ref{sec:estSim}  and then explore the Wiki talk network in Section \ref{sec:estReal}. 


\section{Model specification and simulation}\label{sec:sim}

In this section, we present the linear preferential attachment model in detail and provide a fast simulation algorithm for the network.

\subsection{The linear preferential attachment model}\label{subsec:linpref}
The directed edge  preferential attachment model
\cite{bollobas:borgs:chayes:riordan:2003,krapivsky:redner:2001}
constructs  a growing directed random graph $G(n)=(V(n),E(n))$ whose 
dynamics depend on five non-negative real numbers
$\alpha, 
\beta, \gamma$, $\delta_{\text in}$ and $\delta_{\text out}$, where $\alpha+\beta+\gamma=1$ and $\din,\dout >0$. To avoid degenerate situations,
assume  that each of the numbers $\alpha, 
\beta, \gamma$ is strictly smaller than 1. 
We obtain a new graph $G({n})$ 
by adding  one edge to the existing graph $G({n-1})$ and 
index the constructed graphs by the number $n$ of edges in $E(n)$.
We start with an arbitrary initial finite directed
graph $G({n_0})$ with at least one node and $n_0$ edges.
For $n >n_0$,
$G(n)=(V(n),E(n))$ is a graph with $|E(n)|=n$ edges and a random number $|V(n)|=N(n)$ of
nodes. If $u\in V(n)$, $D_{\rm in}^{(n)}(u)$ and $D_{\rm out}^{(n)}(u)$ 
denote the in- and out-degree of $u$ respectively in $G(n)$. 
There are three
scenarios that we call the $\alpha$, $\beta$ and 
$\gamma$-schemes, which are activated by flipping a
3-sided coin whose outcomes are $1,2,3$ with probabilities
$\alpha,\beta,\gamma$. More formally,  we have an iid sequence
 of multinomial random
variables $\{J_n, n>n_0\}$ with cells labelled $1,2,3$ and cell
probabilities $\alpha,\beta,\gamma$. 
Then the graph $G(n)$ is
obtained from  $G(n-1)$ as follows.

\tikzset{
    >=stealth',
    punkt/.style={
           rectangle,
           rounded corners,
           draw=black, very thick,
           text width=6.5em,
           minimum height=2em,
           text centered},
    pil/.style={
           ->,
           thick,
           shorten <=2pt,
           shorten >=2pt,}
}
\newsavebox{\mytikzpic}
\begin{lrbox}{\mytikzpic} 
     \begin{tikzpicture}
    \begin{scope}[xshift=0cm,yshift=1cm]
      \node[draw,circle,fill=white] (s1) at (2,0) {$v$};
      \node[draw,circle,fill=gray!30!white] (s2) at (.5,-1.5) {$w$};
      \draw[->] (s1.south west)--(s2.north east){};
      \draw[dashed] (0,-2.2) circle [x radius=2cm, y radius=15mm];
    \end{scope}
    
     \begin{scope}[xshift=5cm,yshift=1cm]
      \node[draw,circle,fill=gray!30!white] (s1) at (.5,-1.5) {$v$};
      \node[draw,circle,fill=gray!30!white] (s2) at (-.5,-2.5) {$w$};
      \draw[->] (s1.south west)--(s2.north east){};
      \draw[dashed] (0,-2.2) circle [x radius=2cm, y radius=15mm];
    \end{scope}
    
     \begin{scope}[xshift=10cm,yshift=1cm]
      \node[draw,circle,fill=white] (s1) at (2,0) {$w$};
      \node[draw,circle,fill=gray!30!white] (s2) at (.5,-1.5) {$v$};
      \draw[->] (s2.north east)--(s1.south west){};
      \draw[dashed] (0,-2.2) circle [x radius=2cm, y radius=15mm];
    \end{scope}
    
     \node at (0,-3.5) {$\alpha$-scheme};
     \node at (5,-3.5) {$\beta$-scheme};
     \node at (10,-3.5) {$\gamma$-scheme};
  \end{tikzpicture}    

\end{lrbox}
  \begin{figure}[h]
    \centering 
    \usebox{\mytikzpic} 
\end{figure} 

\begin{itemize}
\item 
If $J_n=1$ (with probability
$\alpha$),  append to $G(n-1)$ a new node $v\in V(n)\setminus V(n-1)$ and an edge
$(v,w)$ leading
from $v$ to an existing node $w \in V(n-1)$.
Choose the existing node $w\in V(n-1)$  with probability depending
on its in-degree in $G(n-1)$:
\beqq \label{eq:probIn}
\PP[\text{choose $w\in V(n-1)$}] = \frac{D_{\rm in}^{(n-1)}(w)+\delta_{\text
    in}}{n-1+\delta_{\text in}N(n-1)} \,.
\eeqq
\item If $J_n=2$ (with probability $\beta$), add a directed edge
$(v,w) $ to $E({n-1})$ with $v\in V(n-1)=V(n) $ and $w\in V(n-1)=V(n) $ and 
 the existing nodes $v,w$ are chosen independently from the nodes of $G(n-1)$ with
 probabilities 
\beqq \label{eq:probBoth}
\PP[\text{choose $(v,w)$}] = \Bigl(\frac{D_{\rm out}^{(n-1)}(v)+\delta_{\text
    out}}{n-1+\delta_{\text out}N(n-1)}\Bigr)\Bigl(
 \frac{D_{\rm in}^{(n-1)}(w)+\delta_{\text
    in}}{n-1+\delta_{\text in}N(n-1)}\Bigr).
\eeqq
\item  If $J_n=3$ (with probability
$\gamma$),  append to $G(n-1)$ a new node $w\in V(n)\setminus V(n-1)$ and an edge $(v,w)$ leading
from  the existing node $v\in V(n-1)$  to the new node $w$. Choose
the existing node $v\in V(n-1)$ with probability
\beqq \label{eq:probOut}
\PP[\text{choose $v \in V(n-1)$}] = \frac{D_{\rm out}^{(n-1)}(v)+\delta_{\text
    out}}{n-1+\delta_{\text out}N(n-1)}\,.
\eeqq
\end{itemize}
{
Note that this construction allows the possibility of having self loops in the case where $J_n=2$, 
but the proportion of edges that are self loops goes to 0 as $n\to\infty$. Also, multiple edges are allowed between two nodes.}


{
\subsection{Power law of degree distributions}

Given an observed network with $n$ edges, let $N_{ij}(n)$ denote the number of nodes with in-degree $i$ and out-degree $j$. If the network is generated from the linear preferential attachment model described above, then from \cite{bollobas:borgs:chayes:riordan:2003}, 
  there exists a proper probability distribution $\{f_{ij}\}$ such
  that almost surely
\beqq\label{pij}
\frac{N_{ij}(n) }{N(n)} \to f_{ij}=:\frac{p_{ij}}{1-\beta},\quad n\to\infty.
\eeqq
Consider the limiting marginal in-degree distribution $\pin_i:=\sum_j p_{ij}$.
It is calculated from \cite[Equation (3.10)]{bollobas:borgs:chayes:riordan:2003} that
\begin{align*}
\pin_0 &= \frac{\alpha}{1+a_1(\din)\din},\\
\pin_i &= \frac{\Gamma(i+\din)\Gamma(1+\din+a_1(\din)^{-1})}{\Gamma(i+1+\din+a_1(\din)^{-1})\Gamma(1+\din)}\left(\frac{\alpha\din}{1+a_1(\din)\din}+\frac{\gamma}{a_1(\din)}\right),\quad i\ge 1,
\end{align*}
where 
$$a_1(\lambda) := \frac{\alpha + \beta}{1+\lambda(1-\beta)}, \quad \lambda>0.$$
Moreover, $\pin_i$ satisfies
\begin{align}
\pin_i &:= \sum_{j=0}^\infty \pij \sim \Cin i^{-\ain}\mbox{ as }i\to\infty, \quad\text{as long as }\alpha\din+\gamma>0,\label{asyI}
\end{align}
for some finite positive constant $\Cin$,
where the power index
\beqq\label{c1}
\ain = 1+\frac{1+\din(\alpha+\gamma)}{\alpha+\beta}
\eeqq
Similarly, the limiting marginal out-degree distribution has the same property:
\begin{align*}
\pout_j &:= \sum_{i=0}^\infty \pij \sim \Cout i^{-\aout}\mbox{ as }j\to\infty, \quad\text{as long as }\gamma\dout+\alpha>0,
\end{align*}
for $\Cout$ positive and 
\beqq\label{c2}
\aout = 1+\frac{1+\dout(\alpha+\gamma)}{\beta+\gamma}.
\eeqq

}

\subsection{Simulation algorithm}\label{subsec:sim}
\begin{algorithm}[t]
\label{algo:1}
\SetAlgoLined\DontPrintSemicolon
\SetKwProg{myalg}{Algorithm}{}{}
\myalg{}{
	\vspace{.05in}
	\KwIn{$\alpha,\beta,\deltain,\deltaout$, the parameter values; $G(n_0) = (V(n_0),E(n_0))$, the initialization graph; $n$, the targeted number edges}
	\KwOut{$G(n) = (V(n),E(n))$, the resulted graph}
	\vspace{.05in}
 	$t\gets n_0$\;
 	\While{$t < n$}{
 		$N(t) \gets |V(t)|$\;
 		Generate $U\sim Uniform(0,1)$\;
		\uIf{$U<\alpha$}{
			$v^{(1)} \gets N(t)+1$\;
			$v^{(2)} \gets $\textsf{ Node\_Sample}$(E(t),2,\deltain)$\;
			$V(t) \gets {\textsf{Append}}(V(t),N(t)+1)$\;
		}
		\uElseIf{$\alpha\le U<\alpha+\beta$}{
			$v^{(1)} \gets $\textsf{ Node\_Sample}$(E(t),1,\deltaout)$\;
			$v^{(2)} \gets $\textsf{ Node\_Sample}$(E(t),2,\deltain)$\;
		}
		\uElseIf{$U\ge\alpha+\beta$}{
			$v^{(1)} \gets $\textsf{ Node\_Sample}$(E(t),1,\deltaout)$\;
			$v^{(2)} \gets N(t)+1$\;
			$V(t) \gets {\textsf{Append}}(V(t),N(t)+1)$\;
		}
		$E(t+1) \gets $ {\textsf{Append}}$(E(t), (v^{(1)},v^{(2)}))$\;
		$t \gets t+1$\;
	 }
\Return{$G(n)=(V(n),E(n))$}\;
}

\algrule[.8pt]
\setcounter{AlgoLine}{1}
\SetKwProg{myproc}{Function}{}{}
\myproc{\textsf{ Node\_Sample}}{
	\vspace{.05in}
  	\DontPrintSemicolon
 	\KwIn{$E(t)$, the edge list up to time $t$; $j=1,2$, the node to be sample, representing outgoing and incoming nodes, respectively;  $\delta \in \{\deltain,\deltaout\}$, the offset parameter}
 	\KwOut{the sampled node, $v$}
	Generate $W\sim Uniform(0,t+N(t)\delta)$\;
		\uIf{$W\le t$}{
			$v \gets v_{\lceil W\rceil}^{(j)}$\;
		}
		\uElseIf{$W > t$}{
			$v \gets \left\lceil \frac{W-t}{\delta}\right\rceil$\;
		}

 	\Return{$v$}\;
}
\caption{Simulating a directed edge preferential attachment network}
\end{algorithm}

We describe an efficient simulation procedure for the preferential attachment network given the parameter values $(\alpha,\beta,\gamma,\deltain,\deltaout)$, where $\alpha+\beta+\gamma=1$. The simulation cost of the algorithm is linear in time. 
{This algorithm, which was provided by Joyjit Roy during his graduate work at Cornell University, is presented below for completeness.}  {Note that this simulation algorithm is specifically designed for the case where the preferential attachment probabilities \eqref{eq:probIn}--\eqref{eq:probOut} are linear in the degrees. A similar idea for the simulation of the Yule-Simon process appeared in \cite{tonelli:concas:locci:2010}. Efficient simulation methods for the case where the preferential attachment probabilities are non-linear are studied in \cite{atwood:2015}, where their algorithm trades some efficiency for the flexibility to model non-linear preferential attachment.}

Using the notation from the introduction, at time $t=0$, we initiate with an arbitrary graph $G(n_0) = (V(n_0),E(n_0))$ of $n_0$ edges, where the elements of $E(n_0)$ are represented in form of $(v_i^{(1)},v_i^{(2)}) \in V(n_0)\times V(n_0)$, $i=1,\ldots,n_0$, with $v_i^{(1)},v_i^{(2)}$ denoting the outgoing and incoming vertices of the edge, respectively. To grow the network, we update the network at each stage from $G(n-1)$ to $G(n)$ by adding a new edge $(v_{n}^{(1)},v_{n}^{(2)})$. Assume that the nodes are labeled using positive integers starting from 1 according to the time order in which they are created, and let the random number $N(n) = |V(n)|$ denote the total number of nodes in $G(n)$.

Let us consider the situation where an existing node is to be chosen from $V(n)$ as the vertex of the new edge. Naively sampling from the multinomial distribution requires $O(N(n))$ evaluations, where $N(n)$ increases linearly with $n$. Therefore the total cost to simulate a network of $n$ edges is $O(n^2)$. This is significantly burdensome when $n$ is large, which is usually the case for observed networks. Algorithm~\ref{algo:1} describes a simulation algorithm which uses the alias method \cite{kronmal:peterson:1979} for node sampling. Here sampling an existing node from $V(n)$ requires only constant execution time, regardless of $n$. Hence the cost to simulate $G(n)$ is only $O(n)$. This method allows generation of a graph with $10^7$ nodes on a personal laptop in less than 5 seconds.

To see that the algorithm indeed produces the intended network, it suffices to consider the case of sampling an existing node from $V(n-1)$ as the incoming vertex of the new edge. In the function \textsf{Node\_Sample} in Algorithm~\ref{algo:1}, we generate $W\sim \text{Uniform}(0,n-1+N(n-1)\deltain)$ and set
$$
v \gets 
v_{\lceil W\rceil}^{(j)}\,\mathbf1_{\{W\le n-1\}} + \left\lceil \frac{W-(n-1)}{\deltain}\right\rceil \,\mathbf1_{\{W>n-1\}}.
$$
Then
\begin{eqnarray*}
\PP\left( v = w \right) &=& \PP\left( v_{\lceil W\rceil}^{(j)} = w \right) \PP\left( W \le n-1 \right)  +  \PP\left( \left\lceil \frac{W-(n-1)}{\deltain}\right\rceil  = w \right) \PP\left( W > n-1 \right) \\
&=& \frac{D_{\text {in}}^{(n-1)}(w)}{n-1} \, \frac{n-1}{n-1 + N(n-1)\deltain} + \frac{1}{N(n-1)} \, \frac{N(n-1) \deltain}{n-1 + N(n-1) \deltain}  \\
&=&  \frac{D_{\text {in}}^{(n-1)}(w) +\deltain}{n-1 + N(n-1) \deltain},
\end{eqnarray*}
which corresponds to the desired selection probability \eqref{eq:probIn}.


\section{Parameter estimation: MLE based on the full network history}\label{sec:estMLE}

In this section, we  estimate  the preferential attachment 
parameter vector $(\alpha,\beta,\din,\dout)$ under two
{assumptions about what data is available.}
In the first scenario, the full evolution of the network is
observed, from which the likelihood function can be computed. 
The resulting MLE is {strongly} consistent and asymptotically {normal}.  For the
second {scenario}, the data only consist of one snapshot of the network
with $n$ edges, without the knowledge of the network history that
produced these edges. {For this scenario we give}
an estimation approach {through approximating the score function and moment matching}, which produces parameter estimators that are
also {strongly} consistent but less efficient than those based on the full
evolution of the network.  
In both cases, {the estimators are uniquely determined.} 


\subsection{Likelihood calculation}\label{MLEdef}
Assume  the network begins with the graph $G(n_0)$ (consisting of
$n_0$ edges) and then evolves {according to the description in
Section \ref{subsec:linpref}} 
with parameters
$(\alpha, \beta, \din, \dout)$, where $\din,\dout>0$ and
$\alpha,\beta$ are non-negative probabilities. The $\gamma$ is
implicitly defined by $\gamma=1-\alpha-\beta$.  
{To avoid trivial cases, we will also assume $\alpha,\beta,\gamma<1$ for the rest of the paper.}
For MLE estimation we restrict the parameter space for $\din,\dout$ to be $[\epsilon,K]$, for some sufficiently small $\epsilon>0$ and large $K$. In particular, the true value of $\din,\dout$ is assumed to be contained in $(\epsilon,K)$.
Let $e_t=(\vstart_t, \vend_t)$ be the newly created edge when the random graph evolves from
$G(t-1)$ to $G(t)$. {W}e sometimes refer to $t$ as the
time rather than the number of edges. 

{Assume} we {observe the initial graph $G(n_0)$ and} the edges $\{e_t\}_{t=n_0+1}^n$ in the order of their formation. For $t=n_0+1,\ldots,n$, the values of the following variables are known:
\begin{itemize}
\item
	$N(t)$, the number of nodes in graph $G(t)$;
\item
	$\Din^{(t-1)}(v)$, $\Dout^{(t-1)}(v)$, the in- and out-degree of node $v$ in $G(t-1)$, for {all} $v\in V(t-1)$;
\item
	$J_t$, the scenario under which $e_t$ is created.
\end{itemize}
Then the  likelihood function is
\begin{align}
L & (\alpha,\beta,\din,\dout|\ G(n_0), (e_t)_{t=n_0+1}^n) \nonumber\\
=& \prod_{t=n_0+1}^n \left(\alpha\frac{\Din^{(t-1)}(\vend_t)+\din}{t-1+\din N(t-1)}\right)^{\ind_{\{J_t=1\}}} \nonumber\\
&\times\prod_{t=n_0+1}^n\left(\beta
  \Bigl(\frac{\Din^{(t-1)}(\vend_t)+\din}{t-1+\din N(t-1)} \Bigr)\Bigl(\frac{\Dout^{(t-1)}(\vstart_t)+\dout}{t-1+\dout N(t-1)}\Bigr)\right)^{\ind_{\{J_t=2\}}}\nonumber\\
&\times\prod_{t=n_0+1}^n \left((1-\alpha-\beta)\frac{\Dout^{(t-1)}(\vstart_t)+\dout}{t-1+\dout N(t-1)}\right)^{\ind_{\{J_t=3\}}}\label{jointL} 
\end{align}
and the log likelihood function is
\begin{align} 
\log&   L(\alpha,\beta,\din,\dout|\ G(n_0),   (e_t)_{t=n_0+1}^n) \label{logL}\\
=& \log\alpha \sum_{t=n_0+1}^n \ind_{\{J_t=1\}} 
+ \log\beta \sum_{t=n_0+1}^n \ind_{\{J_t=2\}} + \log(1-\alpha-\beta) \sum_{t=n_0+1}^n \ind_{\{J_t=3\}}\nonumber\\
&+ \sum_{t=n_0+1}^n \log\left(\Din^{(t-1)}(\vend_t)+\din\right)\ind_{\{J_t\in\lbrace 1,2\rbrace\}} 
+ \sum_{t=n_0+1}^n \log\left(\Dout^{(t-1)}(\vstart_t)+\dout\right)\ind_{\{J_t\in\lbrace 2,3\rbrace\}} \nonumber\\
&- \sum_{t=n_0+1}^n \log(t-1+\din N(t-1))\ind_{\{J_t\in\lbrace 1,2\rbrace\}} 
- \sum_{t=n_0+1}^n \log(t-1+\dout N(t-1))\ind_{\{J_t\in\lbrace 2,3\rbrace\}}.\nonumber
\end{align}
The score functions for $\alpha, \beta, \din, \dout$ are calculated as follows:
\begin{align}
&\frac{\partial}{\partial\alpha} \log L(\alpha,\beta,\din,\dout|\ G(n_0),   (e_t)_{t=n_0+1}^n)
= \frac{1}{\alpha} \sum_{t=n_0+1}^n \ind_{\{J_t=1\}} - \frac{1}{1-\alpha-\beta} \sum_{t=n_0+1}^n \ind_{\{J_t=3\}},
\label{score_alpha}\\
&\frac{\partial}{\partial\beta} \log L(\alpha,\beta,\din,\dout| \ G(n_0),  (e_t)_{t=n_0+1}^n)
= \frac{1}{\beta} \sum_{t=n_0+1}^n \ind_{\{J_t=2\}} - \frac{1}{1-\alpha-\beta} \sum_{t=n_0+1}^n \ind_{\{J_t=3\}},
\label{score_beta}\\
&\frac{\partial}{\partial\din} \log L(\alpha,\beta,\din,\dout|\
  G(n_0),   (e_t)_{t=n_0+1}^n) \label{score_din}\\
&\quad = \sum_{t=n_0+1}^n \frac{1}{\Din^{(t-1)}(\vend_t)+\din}
  \ind_{\{J_t\in\lbrace 1, 2\rbrace\}} 
- \sum_{t=n_0+1}^n \frac{N(t-1)}{t-1+\din N(t-1)}\ind_{\{J_t\in\lbrace 1, 2\rbrace\}} ,
\nonumber \\
&\frac{\partial}{\partial\dout} \log L(\alpha,\beta,\din,\dout|\
  G(n_0),   (e_t)_{t=n_0+1}^n)
  \nonumber\\
&\quad = \sum_{t=n_0+1}^n \frac{1}{\Dout^{(t-1)}(\vstart_t)+\dout}
  \ind_{\{J_t\in\lbrace 2,3\rbrace\}} 
- \sum_{t=n_0+1}^n \frac{N(t-1)}{t-1+\dout N(t-1)}\ind_{\{J_t\in\lbrace 2,3\rbrace\}}. 
\nonumber
\end{align} 

Note that the {score functions \eqref{score_alpha}, \eqref{score_beta} for} $\alpha$ and $\beta$ do not depend on $\din$ and $\dout$. 
One can show that the Hessian matrix of the log-likelihood for $(\alpha,\beta)$ is positive definite. Setting \eqref{score_alpha} and \eqref{score_beta} to zero gives the unique MLE estimates for $\alpha$ and $\beta$,
\begin{align}
\hat{\alpha}^{MLE} &= \frac{1}{n-n_0} \sum_{t=n_0+1}^n \ind_{\{J_t=1\}}\label{alpha_MLE},\\
\hat{\beta}^{MLE} &= \frac{1}{n-n_0} \sum_{t=n_0+1}^n \ind_{\{J_t=2\}}\label{beta_MLE}.
\end{align}
These estimates are strongly consistent by applying the strong law of large numbers for the $\{J_t\}$ sequence.

{Next, consider the first term of the score function for $\din$ in \eqref{score_din}, and we have}
\begin{align*}
 \sum_{t=n_0+1}^n \frac{1}{\Din^{(t-1)}(\vend_t)+\din}\ind_{\left\{J_t\in\lbrace 1, 2\rbrace\right\}}
 &= \sum_{i=0}^\infty \frac{1}{i+\din}\sum_{t=n_0+1}^n \ind_{\left\{\Din^{(t-1)}(\vend_t)=i, J_t\in\lbrace 1, 2\rbrace\right\}}.
\end{align*} 
Observe that $\left\{\Din^{(t-1)}(\vend_t)=i, J_t\in\lbrace 1,
  2\rbrace\right\}$ describes the event that the in-degree of node
$\vend_t {\in V(t-1)}$ is $i$ at time $t-1$ and is augmented to $i+1$ at time
$t$. For {each $i\ge1$, such {an} event happens at some stage $t\in\{n_0+1, n_0+2,\ldots, n\}$ only for those nodes 
with
in-degree $\le i$ at time $n_0$ and in-degree $> i$ at time $n$. }
Let $N_{ij}(n)$ denote the number of nodes with in-degree $i$ and
out-degree $j$ at time $n$, and $\Nin_i(n)$ and $\Nin_{>i}(n)$ to be
the number of nodes with in-degree equal to $i$ and greater than $i$,
respectively, i.e.,
\begin{align*}
 \Nin_i(n) = \sum_{j=0}^\infty N_{ij}(n), & \quad \Nin_{>i}(n) = \sum_{k>i}\Nin_{k} (n).
\end{align*}
Then 
$$\sum_{t=n_0+1}^n \ind_{\left\{\Din^{(t-1)}(\vend_t)=i,J_t\in\lbrace 1, 2\rbrace\right\}}=\Nin_{>i}(n) - \Nin_{>i}(n_0),\quad i\ge 1.$$
On the other hand, when $i=0$,  $\left\{\Din^{(t-1)}(\vend_t)=0,J_t\in\lbrace 1, 2\rbrace\right\}$ occurs for some $t$ if and only if all of the following three events happen: 
\begin{itemize}
\item[(i)]  $\vend_t$ has in-degree $>0$ at time $n$;
\item[(ii)] $\vend_t$ does not have in-degree $>0$ at time $n_0$;
\item[(iii)] $\vend_t$ was not created under the $\gamma$-scheme (otherwise it would have been born with in-degree 1).
\end{itemize}
This implies:
\[
\sum_{t=n_0+1}^n \ind_{\left\{\Din^{(t-1)}(\vend_t)=0,J_t\in\lbrace 1, 2\rbrace\right\}}=\Nin_{>0}(n) - \Nin_{>0}(n_0) - \sum_{t=n_0+1}^n \ind_{\{J_t=3\}},
\]
since there are, in total, $\sum_{t=n_0+1}^n \ind_{\{J_t=3\}}$ nodes
created under the $\gamma$-scheme. {Therefore,}
\begin{align}\label{approx-p1}
\sum_{t=n_0+1}^n \frac{1}{\Din^{(t-1)}(\vend_t)+\din}\ind_{\left\{J_t\in\lbrace 1, 2\rbrace\right\}}
&= \sum_{i=0}^\infty \frac{1}{i+\din}\sum_{t=n_0+1}^n \ind_{\left\{\Din^{(t-1)}(\vend_t)=i, J_t\in\lbrace 1, 2\rbrace\right\}}\nonumber\\
&= \sum_{i=0}^\infty \frac{\Nin_{>i}(n) - \Nin_{>i}(n_0)}{i+\din}-\frac{\sum_{t=n_0+1}^n \ind_{\{J_t=3\}}}{\din}.
\end{align}
Setting {the score function \eqref{score_din}  for $\din$} to 0 and dividing both sides by $n-n_0$ leads to
\begin{align} \label{MLEdin}
\frac{1}{n-n_0}\sum_{i=0}^\infty &\frac{\Nin_{>i}(n) - \Nin_{>i}(n_0)}{i+\din}\nonumber\\
&-\frac{1}{\din(n-n_0)}\sum_{t=n_0+1}^n \ind_{\{J_t=3\}}
-\frac{1}{n-n_0}\sum_{t=n_0+1}^n \frac{N(t-1)}{t-1+\din N(t-1)}\ind_{\{J_t\in\lbrace 1, 2\rbrace\}}=0,
\end{align}
{where the only unknown parameter is $\din$.} 
In Section~\ref{subsec:consistency}, we show that the solution to \eqref{MLEdin} actually maximizes the likelihood function in $\din$.
Similarly, the MLE for $\dout$ can be solved from
\begin{align*} 
\frac{1}{n-n_0}\sum_{j=0}^\infty &\frac{\Nout_{>j}(n) - \Nout_{>j}(n_0)}{j+\dout}\nonumber\\
&-\frac{\frac{1}{n-n_0}\sum_{t=n_0+1}^n \ind_{\{J_t=1\}}}{\dout}
-\frac{1}{n-n_0}\sum_{t=n_0+1}^n \frac{N(t-1)}{t-1+\dout N(t-1)}\ind_{\{J_t\in\lbrace 2, 3\rbrace\}}=0,
\end{align*}
where $\Nout_{>j}(n)$ is defined in the same fashion as $\Nin_{>i}(n)$.

\begin{Remark}\label{sufficiency}
The arguments leading to \eqref{approx-p1} allow us to rewrite the likelihood function \eqref{jointL}:
\begin{align*}
L&(\alpha,\beta,\din,\dout| \ G(n_0), (e_t)_{t=n_0+1}^n) \\
=&\  \alpha^{\sum_{t=n_0+1}^n \ind_{\{J_t=1\}}} 
\ \beta ^{\sum_{t=n_0+1}^n \ind_{\{J_t=2\}}}
\ (1-\alpha-\beta) ^{\sum_{t=n_0+1}^n \ind_{\{J_t=3\}}}\nonumber\\
& \times\prod_{t=n_0+1}^n (t-1+\din N(t-1))^{-\ind_{\left\{J_t\in\lbrace 1, 2\rbrace\right\}} }
\ (t-1+\dout N(t-1)) ^{-\ind_{\left\{J_t\in\lbrace 2,3\rbrace\right\}}}\\
&{\times  \prod_{t=n_0+1}^n \left[\prod_{i=0}^\infty \ (i+\din)^{\ind_{\left\{\Din^{(t-1)}(\vend_t)=i,J_t\in\lbrace 1, 2\rbrace\right\}}} 
\prod_{j=0}^\infty (j+\dout)^{\ind_{\left\{\Dout^{(t-1)}(\vstart_t)=j,J_t\in\lbrace 2, 3\rbrace\right\}}}\right]}\\
= & \alpha^{\sum_{t=n_0+1}^n \ind_{\{J_t=1\}}} 
\ \beta ^{\sum_{t=n_0+1}^n \ind_{\{J_t=2\}}}
\ (1-\alpha-\beta) ^{\sum_{t=n_0+1}^n \ind_{\{J_t=3\}}}\nonumber\\
& \times\prod_{t=n_0+1}^n (t-1+\din N(t-1))^{-\ind_{\left\{J_t\in\lbrace 1, 2\rbrace\right\}} }
\ (t-1+\dout N(t-1)) ^{-\ind_{\left\{J_t\in\lbrace 2,3\rbrace\right\}}}
\ \din^{-\ind_{\{J_t=3\}}}\ \dout^{-\ind_{\{J_t=1\}}}\\
&\times  \prod_{i=0}^\infty \ (i+\din)^{\Nin_{>i}(n)-\Nin_{>i}(n_0)} 
\ {\prod_{j=0}^\infty} (j+\dout)^{\Nout_{>j}(n)-\Nout_{>j}(n_0)}.
\end{align*}
Hence by the factorization theorem, {$N(n_0)$,} $(J_t)_{t=n_0+1}^n$, $(\Nin_{>i}(n)-\Nin_{>i}(n_0))_{i\ge 0}$,  $(\Nout_{>j}(n)-\Nout_{>j}(n_0))_{j\ge 0}$ are sufficient statistics for $(\alpha,\beta,\din, \dout)$.
\end{Remark}


\subsection{Consistency of MLE}\label{subsec:consistency}

{We remarked after}   \eqref{alpha_MLE} and \eqref{beta_MLE} that
  $\hat\alpha^{MLE}$ and $\hat\beta^{MLE}$ converge almost surely to
  $\alpha$ and $\beta$. We now prove that the MLE of $(\din,\dout)$
is also {strongly} consistent.  
Note that if we initiate the network with $G(n_0)$ (for both $n_0$ and $N(n_0)$ finite), then almost surely for all $i,j\ge0$,
\[
\frac{\Nin_{>i}(n_0)}{n}\le\frac{N(n_0)}{n}\to 0, \quad
\frac{\Nout_{>j}(n_0)}{n}\le\frac{N(n_0)}{n}\to 0, 
\quad \text{as }n\to\infty,
\]
and $(n-n_0)/n\to 1$.
In other words, $n_0$, $\Nin_{>i}(n_0)$, $\Nout_{>j}(n_0)$ are all
{$o(n)$.}
{So for simplicity,}
we assume that the graph is initiated with finitely many nodes and no
edge{s}, that is, $n_0=0$ and $N(0)\ge1$. 
In particular, these assumptions imply the sum of the in-degrees at time $n$ is equal to $n$.

Let $\Psi_n(\cdot),\Phi_n(\cdot)$ be the functional forms of the terms in the log-likelihood function \eqref{logL} involving $\din$ and $\dout$ respectively, normalized by $1/n$, i.e.,
\begin{align*}
\Psi_n(\lambda) &:= \sum_{i=0}^\infty \frac{\Nin_{>i}(n)}{n}\log(i+\lambda) - \frac{\log\lambda}{n}\sum_{t=1}^n \ind_{\lbrace J_t=3\rbrace}
 - \frac{1}{n}\sum_{t=1}^n \log\left(t-1+\lambda N(t-1)\right)\ind_{\lbrace J_t\in\{1, 2\}\rbrace},\\
 \Phi_n(\mu) &:= \sum_{j=0}^\infty \frac{\Nout_{>j}(n)}{n}\log(j+\mu) - \frac{\log\mu}{n}\sum_{t=1}^n \ind_{\lbrace J_t=1\rbrace}
 - \frac{1}{n}\sum_{t=1}^n \log\left(t-1+\mu N(t-1)\right)\ind_{\lbrace J_t\in\{2, 3\}\rbrace}.
\end{align*}
The following theorem gives the consistency of the MLE of $\din$ and $\dout$.
\begin{Theorem} \label{thm:consistency}
Suppose $\din, \dout \in (\epsilon,K) \subset (0,\infty)$. 
Define 
\[
\hatdin^{MLE}=\hatdin^{MLE}(n) := \argmax_{\epsilon\le\lambda\le K} \Psi_n(\lambda),\qquad
\hatdout^{MLE}=\hatdout^{MLE}(n) := \argmax_{\epsilon\le\mu\le K} \Phi_n(\mu).
\]
{Then these are the MLE estimators of  $\din, \dout$ and they} are
strongly consistent; that is,
$$
\hatdin^{MLE}\convas \din,\qquad \hatdout^{MLE}\convas \dout, \qquad n \to \infty.
$$
\end{Theorem}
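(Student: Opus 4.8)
The plan is to treat $\hatdin^{MLE}=\argmax_{[\epsilon,K]}\Psi_n$ as an M-estimator and run the standard argmax-consistency recipe: identify a deterministic limit of $\Psi_n$ (up to a $\lambda$-free additive term that cannot affect the maximizer), upgrade pointwise to uniform convergence on the compact $[\epsilon,K]$, show the limit is \emph{uniquely} maximized at the interior point $\din$, and conclude that the maximizers converge. The argument for $\hatdout^{MLE}$ is identical after interchanging in- and out-degrees and the roles of $\alpha$ and $\gamma$, so I would only write out the $\din$ case. For the limit, combine $N(n)/n\to 1-\beta$ a.s.\ (SLLN for $\ind_{\{J_t\in\{1,3\}\}}$) with \eqref{pij} to get $\Nin_{>i}(n)/n\to \pin_{>i}:=\sum_{k>i}\pin_k$ a.s.\ for each $i$, and $n^{-1}\sum_t\ind_{\{J_t=3\}}\to\gamma$. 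The third term of $\Psi_n$ is the delicate one: writing $\log(t-1+\lambda N(t-1))=\log(t-1)+\log\!\bigl(1+\lambda N(t-1)/(t-1)\bigr)$ isolates a $\lambda$-free piece $c_n:=n^{-1}\sum_t\log(t-1)\ind_{\{J_t\in\{1,2\}\}}$ (which diverges like $(\alpha+\beta)\log n$ but drops out of the $\argmax$) and, via $N(t-1)/(t-1)\to1-\beta$ and a Toeplitz/Ces\`aro argument, a convergent piece tending to $(\alpha+\beta)\log(1+\lambda(1-\beta))$. Hence $\Psi_n(\lambda)+c_n\to\Psi(\lambda)$ a.s., where
\[
\Psi(\lambda):=\sum_{i=0}^\infty \pin_{>i}\log(i+\lambda)-\gamma\log\lambda-(\alpha+\beta)\log\!\bigl(1+\lambda(1-\beta)\bigr).
\]
Since $c_n$ is $\lambda$-free, $\argmax_\lambda\Psi_n=\argmax_\lambda(\Psi_n+c_n)$, and the uniform bound $|\Psi_n'(\lambda)|\le 2/\epsilon$ on $[\epsilon,K]$ (each derivative term is controlled using $\sum_i\Nin_{>i}(n)/n=1$) makes $\{\Psi_n+c_n\}$ equi-Lipschitz, so pointwise convergence on a dense set upgrades to uniform convergence on $[\epsilon,K]$.

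Next I would show $\Psi$ is uniquely maximized at $\din$, which I split into the first-order condition $\Psi'(\din)=0$ and strict concavity. For the first-order condition, by \eqref{approx-p1} the count-form derivative $\Psi_n'(\din)$ equals $n^{-1}S_n$ with $S_n=\sum_t\xi_t$ and $\xi_t=(\Din^{(t-1)}(\vend_t)+\din)^{-1}\ind_{\{J_t\in\{1,2\}\}}-\frac{N(t-1)}{t-1+\din N(t-1)}\ind_{\{J_t\in\{1,2\}\}}$. A direct conditional computation, selecting the target with the true probability \eqref{eq:probIn}, gives
\[
\E\!\Bigl[(\Din^{(t-1)}(\vend_t)+\din)^{-1}\ind_{\{J_t\in\{1,2\}\}}\cond\mathcal F_{t-1}\Bigr]=(\alpha+\beta)\,\frac{N(t-1)}{t-1+\din N(t-1)},
\]
so $\E[\xi_t\cond\mathcal F_{t-1}]=0$; thus $S_n$ is a martingale with bounded increments and $n^{-1}S_n\to0$ a.s., giving $\Psi'(\din)=0$ in the limit.

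For strict concavity the key observation is a Cauchy--Schwarz inequality: for every $\lambda>0$,
\[
N(t-1)^2=\Bigl(\sum_i \Nin_i(t-1)\Bigr)^2\le\Bigl(\sum_i (i+\lambda)\Nin_i(t-1)\Bigr)\Bigl(\sum_i \tfrac{\Nin_i(t-1)}{i+\lambda}\Bigr)=\bigl(t-1+\lambda N(t-1)\bigr)\sum_i \tfrac{\Nin_i(t-1)}{i+\lambda},
\]
where I used $\sum_i i\,\Nin_i(t-1)=t-1$ and $\sum_i\Nin_i(t-1)=N(t-1)$. This makes the conditional expectation of the second derivative of the log-likelihood nonpositive, and passing to the limit yields $\Psi''(\lambda)\le0$; strictness holds because the limiting in-degree distribution is nondegenerate (Cauchy--Schwarz is strict unless all mass sits at a single in-degree). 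A strictly concave function with an interior critical point at $\din\in(\epsilon,K)$ is uniquely maximized there. Uniform convergence of $\Psi_n+c_n$ to the continuous $\Psi$ with unique maximizer $\din$ then gives consistency by the usual argument: for any neighborhood $U\ni\din$ one has $\sup_{[\epsilon,K]\setminus U}\Psi<\Psi(\din)$, and uniform convergence forces $\hatdin^{MLE}\in U$ eventually, whence $\hatdin^{MLE}\convas\din$.

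I expect the main obstacle to be the analytic control of the infinite sum $\sum_i \Nin_{>i}(n)\,n^{-1}\log(i+\lambda)$: because $\log(i+\lambda)$ is unbounded, interchanging the limit with the sum \emph{uniformly in} $\lambda$ requires a tail estimate on the empirical in-degree distribution. I would obtain this from the regular variation $\pin_i\sim\Cin i^{-\ain}$ of \eqref{asyI}, for which \eqref{c1} guarantees $\ain>2$ (so $\pin_{>i}\log(i+\lambda)$ is summable with room to spare), combined with a uniform-integrability/truncation bound showing $\Nin_{>i}(n)/n$ is dominated uniformly in $n$ by a summable envelope. The concavity and first-order verification are conceptually the crux, but this uniform summability is the most technical step.
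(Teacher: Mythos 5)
You have correctly identified the limiting objective (its derivative is exactly the paper's $\psi$ in \eqref{defpsi}), and your martingale argument that the score at the true parameter averages to zero is a legitimate alternative to the paper's algebraic derivation of $\psi(\din)=0$. The fatal problem is the concavity step, which carries the entire burden of uniqueness of the maximizer. Your Cauchy--Schwarz inequality is true, but it does not control the curvature away from $\din$: under the true dynamics the node $\vend_t$ is chosen with probability proportional to $\Din^{(t-1)}(w)+\din$ (see \eqref{eq:probIn}), while $\dot u_t(\lambda)$ (the $\lambda$-derivative of $u_t(\lambda)$ in \eqref{ut_def}) involves $\bigl(\Din^{(t-1)}(w)+\lambda\bigr)^{-2}$, so
\[
\EE\bigl[\dot u_t(\lambda)\mid\mathcal F_{t-1}\bigr]=(\alpha+\beta)\left[\frac{N(t-1)^2}{(t-1+\lambda N(t-1))^2}-\sum_{w\in V(t-1)}\frac{\Din^{(t-1)}(w)+\din}{\bigl(\Din^{(t-1)}(w)+\lambda\bigr)^2\,\bigl(t-1+\din N(t-1)\bigr)}\right],
\]
and your inequality, which has $\lambda$ in both slots, yields nonpositivity only when $\lambda=\din$; that is precisely the statement $I_\text{in}\ge 0$, not concavity. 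In fact the claimed strict concavity of $\Psi$ is \emph{false}: using the paper's identity \eqref{sumb} one gets $\Psi''(\lambda)=-a_1(\din)\sum_i(i+\din)\pin_i/(i+\lambda)^2+(\alpha+\beta)(1-\beta)^2/(1+\lambda(1-\beta))^2$, and a direct expansion shows $\lambda^{3}\Psi''(\lambda)\to 2\,a_1(\din)\bigl(\sum_{i}i^{2}\pin_i-\tfrac{1}{1-\beta}\bigr)$, which is strictly positive (possibly $+\infty$) because $\sum_i i\pin_i=1$, $\sum_i\pin_i=1-\beta$ and the limiting in-degree distribution is nondegenerate, so Cauchy--Schwarz gives $\sum_i i^2\pin_i>1/(1-\beta)$. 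Thus $\Psi$ is strictly convex for all large $\lambda$: it is unimodal, not concave, and for a large cutoff $K$ your ``concave $+$ interior critical point'' argument collapses on $[\epsilon,K]$. The missing ingredient is the global sign structure of the derivative --- $\psi>0$ on $(0,\din)$ and $\psi<0$ on $(\din,\infty)$ --- which is the content of the paper's Lemma~\ref{phi}, proved from the recursions \eqref{rec_pin} via the representation $\psi(\lambda)=C(\lambda)\int_\lambda^{\din}(\alpha+\beta)(1+s(1-\beta))^{-2}\,ds$ with $C(\lambda)>0$; your critical-point argument gives $\psi(\din)=0$ but excludes no other critical points.

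Separately, the convergence of $\sum_i(\Nin_{>i}(n)/n)\log(i+\lambda)$, which you flag as ``the most technical step,'' is a genuine gap rather than a technicality. The only envelope available from the paper's tools is $\Nin_{>i}(n)/n\le 1/i$ (from $\sum_i i\,\Nin_i(n)=n$), and $\sum_i\log(i+\lambda)/i=\infty$, so even \emph{pointwise} (fixed $\lambda$) convergence of your sum does not follow; \eqref{pij} and \eqref{asyI} concern the limiting distribution and provide no uniform-in-$n$ bound on the empirical tails, so the summable envelope you invoke would require a new concentration argument. Note that your equi-Lipschitz device cannot repair this: the obstruction is the interchange of $n\to\infty$ with the sum over $i$, not the supremum over $\lambda$. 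This is exactly why the paper works with the derivative $\psi_n$ rather than with $\Psi_n$ --- there the weights $(i+\lambda)^{-1}$ decay and the crude $1/i$ envelope suffices for the truncation argument in Lemma~\ref{unifconv}. If you want to keep the argmax framework, the clean repair is to write $\Psi_n(\lambda)-\Psi_n(\epsilon)=\int_\epsilon^\lambda\psi_n(s)\,ds$, transfer the uniform convergence of $\psi_n$ to $\Psi_n-\Psi_n(\epsilon)$, and replace concavity by the sign property of Lemma~\ref{phi}; but at that point you have reconstructed the paper's proof, which passes through the derivative precisely to avoid both issues above.
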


\begin{proof}[Proof of Theorem \ref{thm:consistency}]

{
We only verify the consistency of $\hatdin^{MLE}$ since similar
arguments apply to $\hatdout^{MLE}$. 
Define
\begin{equation}
\psi_n(\lambda)  := \Psi'_n(\lambda) = \sum_{i=0}^\infty \frac{\Nin_{>i}(n)/n}{i+\lambda}-\frac{\frac{1}{n}\sum_{t=1}^n \ind_{\{J_t=3\}}}{\lambda}
-\frac{1}{n}\sum_{t=1}^n \frac{N(t-1)}{t-1+\lambda N(t-1)}\ind_{\{J_t\in\lbrace 1, 2\rbrace\}}.\label{psin}
\end{equation}
Let us consider a limit version of $\psi_n$:
\beqq
\psi(\lambda) :=  \sum_{i=0}^\infty\frac{\pin_{>i}(\din)}{i+\lambda} -\frac{\gamma}{\lambda}
-(1-\beta)a_1(\lambda), \label{defpsi}
\eeqq
where $\pin_{>i}(\din):= \sum_{k>i} \pin_k(\din)$ with $\pin_k(\din):=\pin_k$ as defined in \eqref{asyI}, and
\[
a_1(\lambda) := \frac{\alpha+\beta}{1+\lambda(1-\beta)},\qquad \lambda>0.
\]
Here we write $\pin_{i}(\din)$ to emphasize the dependence on $\din$. 
In Lemmas~\ref{phi} and \ref{unifconv}, provided in the appendix, it is shown that}
{$\psi(\cdot)$ has a unique zero at $\din$,}
{where $\psi(\lambda)>0$ when $\lambda<\din$ and $\psi(\lambda)<0$ when $\lambda>\din$,
and
\beqq\label{eq:unifconv}
	\sup_{\lambda{\ge\epsilon}}|\psi_n(\lambda)-\psi(\lambda)|{\to} 0.
\eeqq
Since $\psi$ is continuous, for any
$\kappa>0$ arbitrarily small, there exists $\varepsilon_\kappa>0$ such
that $\psi(\lambda)>\varepsilon_\kappa$ for $\lambda \in
[\epsilon,\din-\kappa ]$ and $\psi(\lambda)<-\varepsilon_\kappa$ for $\lambda \in [\din+\kappa,K]$. From \eqref{eq:unifconv}, 
\beqq \label{kpbound}
\PP\left(\exists N_\kappa\ s.t. \sup_{n>N_\kappa}\sup_{\lambda\in[\epsilon,K]}|\psi_n(\lambda) - \psi(\lambda)|< \varepsilon_\kappa/2\right) = 1.
\eeqq
Note
 $\sup_{\lambda\in[\epsilon,K]}|\psi_n(\lambda) - \psi(\lambda)|< \varepsilon_\kappa/2$ implies
$$
\psi_n(\lambda) \ge \psi{(\lambda)} - \sup_{\lambda\in[\epsilon,K]}|\psi_n(\lambda) - \psi(\lambda)| \ge \varepsilon_\kappa - \varepsilon_\kappa /2 > 0,\quad \lambda \in [\epsilon,\din-\kappa),
$$
and
$$
\psi_n(\lambda) \le \psi{(\lambda)} + \sup_{\lambda\in[\epsilon,K]}|\psi_n(\lambda) - \psi(\lambda)| \le -\varepsilon_\kappa + \varepsilon_\kappa /2  < 0,\quad \lambda \in (\din+\kappa,K].
$$
These jointly indicate that $\din-\kappa \le \hatdin^{MLE}\le \din+\kappa$.
Hence \eqref{kpbound} implies
$$
\PP\left(\lim_{n\to\infty} |\hatdin^{MLE}-\din|\le \kappa\right) = 1,
$$
for arbitrary $\kappa>0$. That is, $\hatdin^{MLE} \convas \din$.
}
\end{proof}


\subsection{Asymptotic normality of MLE}\label{subsec:asynorm}

In the following theorem, we establish the asymptotic normality for the MLE estimator
$$ \label{mleest}
\hat{\boldsymbol{\theta}}_n^{MLE} = (\hat{\alpha}^{MLE},\, \hat{\beta}^{MLE},\, \hatdin^{MLE},\, \hatdout^{MLE}).
$$

\begin{Theorem}\label{asymp_normality}
Let $\hat{\boldsymbol{\theta}}_n^{MLE} $ be the MLE estimator for $\boldsymbol{\theta}$, the parameter vector of the preferential attachment model. Then
\beqq \label{MLElimit}
\sqrt{n}(\hat{\boldsymbol{\theta}}^\text{MLE}_n - {\boldsymbol{\theta}})\convd N\left(\mathbf0,\Sigma(\boldsymbol{\theta})\right),
\eeqq
where
\beqq \label{fisher}
\Sigma^{-1}(\boldsymbol{\theta}) = I(\boldsymbol{\theta})
:= \begin{bmatrix}
\frac{1-\beta}{\alpha(1-\alpha-\beta)} & \frac{1}{1-\alpha-\beta} & 0 & 0 \\
\frac{1}{1-\alpha-\beta} & \frac{1-\alpha}{\beta(1-\alpha-\beta)} & 0 & 0 \\
0 & 0 & I_\text{in} & 0\\
0 & 0 & 0 & I_\text{out}
\end{bmatrix},
\eeqq
with
\begin{align}
I_\text{in} &:= \sum_{i=0}^\infty \frac{\pin_{>i}}{(i+\din)^2} - \frac{\gamma}{\din^2} - \frac{(\alpha+\beta)(1-\beta)^2}{\left(1+\din(1-\beta)\right)^2}, \label{I_in}\\
I_\text{out} &:= \sum_{j=0}^\infty \frac{\pout_{>j}}{(j+\dout)^2} - \frac{\alpha}{\dout^2} - \frac{(\gamma+\beta)(1-\beta)^2}{\left(1+\dout(1-\beta)\right)^2}. \nonumber
\end{align}
In particular, $I(\boldsymbol{\theta})$ is the asymptotic Fisher information matrix for the parameters, and hence the MLE estimator is efficient.
\end{Theorem}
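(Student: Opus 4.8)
The plan is to realize the entire score vector, evaluated at the true $\btheta=(\alpha,\beta,\din,\dout)$, as a sum of martingale differences, run a multivariate martingale central limit theorem to get $\frac{1}{\sqrt n}\nabla\log L|_\btheta\convd N(\bzero,I(\btheta))$, and then invert by a Taylor/Slutsky argument. Let $\mathcal F_t=\sigma(G(n_0),\ldots,G(t))$ and let $\bs_t=\nabla_\btheta\ell_t(\btheta)$ be the gradient of the one-step conditional log-likelihood $\ell_t$, so that $\nabla_\btheta\log L|_\btheta=\sum_{t=n_0+1}^n\bs_t$. The first step is to verify $\E[\bs_t\mid\mathcal F_{t-1}]=\bzero$. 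For the $\alpha,\beta$ coordinates this is immediate from the multinomial law of $J_t$. For the $\din$ coordinate it is exactly the cancellation behind \eqref{approx-p1}: conditioned on $J_t\in\{1,2\}$ the incoming endpoint $\vend_t$ is selected with probability proportional to $\Din^{(t-1)}(\cdot)+\din$, so that
\[
\E\Bigl[\tfrac{1}{\Din^{(t-1)}(\vend_t)+\din}\,\ind_{\{J_t\in\{1,2\}\}}\,\big|\,\mathcal F_{t-1}\Bigr]=(\alpha+\beta)\tfrac{N(t-1)}{t-1+\din N(t-1)},
\]
which cancels the second sum in \eqref{score_din}; the $\dout$ coordinate is symmetric. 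Hence $\sum_t\bs_t$ is a mean-zero martingale.

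Next I would identify the limiting predictable quadratic variation $\frac1n\sum_t\E[\bs_t\bs_t^\top\mid\mathcal F_{t-1}]$ and show it converges to $I(\btheta)$. The cleanest route to the diagonal blocks is the conditional Bartlett identity $\E[\bs_t\bs_t^\top\mid\mathcal F_{t-1}]=-\E[\nabla^2\ell_t\mid\mathcal F_{t-1}]$, valid because each $\ell_t$ is a genuine conditional log-likelihood. The average $\frac1n\sum_t\E[\nabla^2\ell_t\mid\mathcal F_{t-1}]$ differs from the observed information $\frac1n\nabla^2\log L|_\btheta$ by $\frac1n$ times a martingale with bounded increments, hence by a negligible term; and the observed information is computed directly by differentiating the \eqref{approx-p1} rewriting, giving for the $\din$-block $-\tfrac1n\partial_\din^2\log L=\sum_i\frac{\Nin_{>i}(n)/n}{(i+\din)^2}-\frac{\frac1n\sum_t\ind_{\{J_t=3\}}}{\din^2}-\frac1n\sum_t\frac{N(t-1)^2\,\ind_{\{J_t\in\{1,2\}\}}}{(t-1+\din N(t-1))^2}$. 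Using the a.s.\ limits $\Nin_{>i}(n)/n\to\pin_{>i}$ and $N(n)/n\to1-\beta$ from \cite{bollobas:borgs:chayes:riordan:2003}, the strong law $\frac1n\sum_t\ind_{\{J_t=k\}}\to$ (the cell probability), and the Cesàro limits of the $\mathcal F_{t-1}$-measurable averages, this converges to $I_\text{in}$ in \eqref{I_in}; the $\dout$-block gives $I_\text{out}$, and the $(\alpha,\beta)$ block is the inverse multinomial covariance, computed directly from the law of $J_t$. For the off-diagonal blocks I would show they vanish: the $(\din,\dout)$ cross term is supported on $\{J_t=2\}$, where $\vstart_t$ and $\vend_t$ are chosen independently and each centered increment already has conditional mean zero within that scenario, while every cross term between $(\alpha,\beta)$ and $(\din,\dout)$ vanishes because, conditioned on the value of $J_t$, the $\din,\dout$ increments are centered. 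This block-diagonal structure is precisely the asymptotic independence asserted in the theorem.

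With the predictable variation identified, the martingale CLT needs only a conditional Lindeberg condition, which is immediate: on $\din,\dout\in[\epsilon,K]$ each increment is bounded (e.g.\ $|\partial_\din\ell_t|\le2/\epsilon$, since $\Din^{(t-1)}\ge0$ and $N(t-1)/(t-1+\din N(t-1))\le1/\din$), so a conditional Lyapunov bound of order $n^{-\delta/2}$ holds trivially, and likewise for the $\alpha,\beta$ coordinates. Thus $\frac1{\sqrt n}\nabla_\btheta\log L|_\btheta\convd N(\bzero,I(\btheta))$. To finish, a componentwise mean-value expansion of the score at $\hat\btheta_n^{MLE}$ about $\btheta$ gives $\sqrt n(\hat\btheta_n^{MLE}-\btheta)=-\bigl(\tfrac1n\nabla^2\log L|_{\bar\btheta}\bigr)^{-1}\tfrac1{\sqrt n}\nabla\log L|_\btheta$, where $\bar\btheta$ lies on the segment to $\btheta$ and the Hessian is exactly block-diagonal in $(\alpha,\beta)$ versus $(\din,\dout)$. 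Using the consistency from Theorem~\ref{thm:consistency} (and the explicit strong consistency of $\hat\alpha,\hat\beta$) together with local uniform convergence of the normalized Hessian — the analogue of \eqref{eq:unifconv} for $\psi_n'$ — one obtains $\tfrac1n\nabla^2\log L|_{\bar\btheta}\convp-I(\btheta)$, and Slutsky delivers $\sqrt n(\hat\btheta_n^{MLE}-\btheta)\convd N(\bzero,I(\btheta)^{-1})=N(\bzero,\Sigma(\btheta))$. Efficiency is then automatic, since the limiting covariance equals the inverse Fisher information.

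The step I expect to be the main obstacle is the convergence of the predictable quadratic variation to the explicit $I_\text{in},I_\text{out}$: it requires interchanging the limit with the infinite sum over degrees $i$ (resp.\ $j$), justified by dominated convergence since $\Nin_{>i}(n)/n\le N(n)/n$ is a.s.\ bounded in $n$ and hence the summand $\frac{\Nin_{>i}(n)/n}{(i+\din)^2}$ is dominated by a summable sequence, and it depends on the a.s.\ degree-count asymptotics holding simultaneously across $i$. Verifying the Cesàro limits of the $\mathcal F_{t-1}$-measurable averages and the local uniform Hessian convergence needed at the mean-value point $\bar\btheta$ are the remaining technical points.
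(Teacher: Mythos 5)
Your proposal is correct, and its skeleton is the same as the paper's: the score at the true parameter is a martingale, a martingale CLT gives $n^{-1/2}\mathbf S_n(\btheta)\convd N(\mathbf 0,I(\btheta))$, and a Taylor expansion of the score at $\hat\btheta_n^{MLE}$ plus convergence of the normalized Hessian at the intermediate point plus Slutsky finishes. The execution differs in two genuine ways. First, the paper proves the marginal limits separately — an ordinary iid CLT for $(\hat\alpha^{MLE},\hat\beta^{MLE})$, and for $\hatdin^{MLE}$ the Hall--Heyde form of the martingale CLT whose conditions (Lemma~\ref{normality_lemma2}) are negligibility of $\max_t|u_t(\din)|$ and convergence of the \emph{raw} sum of squares $\frac1n\sum_t u_t^2(\din)$, verified by expanding the square into three terms $T_1-2T_2+T_3$ — and then assembles joint convergence from the likelihood factorization of Remark~\ref{sufficiency}, pairwise uncorrelatedness of the score components, and the Cram\'er--Wold device. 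You instead run a single multivariate martingale CLT, verifying a conditional Lindeberg condition (trivial for the same boundedness reasons the paper uses) and convergence of the \emph{predictable} quadratic variation, which you evaluate through the conditional Bartlett identity and the observed information rather than by expanding $u_t^2$. Your route buys a cleaner treatment of the off-diagonal blocks — conditional centering of the $\din$- and $\dout$-increments within each scenario $J_t$ is more transparently rigorous for this discrete model than the paper's unconditional uncorrelatedness computation written in integral form — and it avoids the $T_1-2T_2+T_3$ bookkeeping; the price is the extra step you correctly flag, namely showing $\frac1n\sum_t\bigl(\nabla^2\ell_t-\E[\nabla^2\ell_t\mid\mathcal F_{t-1}]\bigr)\to0$ via a martingale strong law for bounded increments, after which you still need exactly the same degree-count limits ($\Nin_{>i}(n)/n\convas\pin_{>i}$, $N(n)/n\convas 1-\beta$, Ces\`aro averaging) that the paper uses to identify the limit as \eqref{I_in}. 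Second, at the intermediate point $\bar\btheta$ you invoke local uniform convergence of the normalized Hessian, where the paper (Lemma~\ref{normality_lemma3}) instead uses explicit Lipschitz-type bounds on $\dot u_t$ combined with the consistency of Theorem~\ref{thm:consistency}; both are adequate, and your dominated-convergence justification for interchanging the limit with the sum over $i$ is sound precisely because the information terms carry $(i+\din)^2$ in the denominator, so boundedness of $\Nin_{>i}(n)/n$ suffices there (for the first-order score terms one would need the sharper bound $\Nin_{>i}(n)/n\le 1/i$, as in the paper's Lemma~\ref{unifconv}, but your argument never requires this).
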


\begin{Remark}
From Theorem~\ref{asymp_normality}, the estimators
$(\hat{\alpha}^{MLE},\, \hat{\beta}^{MLE})$, $\hatdin^{MLE}$, and
$\hatdout^{MLE}$ are asymptotically independent. 
\end{Remark}

{
\begin{proof} [Proof of Theorem~\ref{asymp_normality}]
We first show the limiting distributions for $(\hat{\alpha}^{MLE},\, \hat{\beta}^{MLE})$, $\hatdin^{MLE}$, and $\hatdout^{MLE}$, respectively.
From \eqref{alpha_MLE} and \eqref{beta_MLE},
$$
(\hat{\alpha}^{MLE},\, \hat{\beta}^{MLE}) = \frac{1}{n} \sum_{t=1}^n \left(\ind_{\{J_t=1\}},\ind_{\{J_t=2\}}\right),
$$
where $\{J_t\}$ is a sequence of iid random variables. Hence the limiting distribution of the pair $\left(\hat{\alpha}^{MLE},\hat{\beta}^{MLE}\right)$ follows directly from standard central limit theorem {for sums of independent random variables}.

Next we show the asymptotic normality for $\hatdin^{MLE}$; the argument for $\hatdout^{MLE}$ is similar.
Recall from \eqref{score_din} that the score function for $\din$ can be written as
$$
\left.\frac{\partial}{\partial\din} \log L(\alpha,\beta,\din,\dout) \right|_\delta =: \sum_{t=1}^n u_t(\delta),
$$
where $u_t$ is defined by
\beqq \label{ut_def}
u_t(\delta):=\frac{1}{\Din^{(t-1)}(\vend_t)+\delta} \ind_{\{J_t\in\lbrace 1, 2\rbrace\}}
- \frac{N(t-1)}{t-1+\delta N(t-1)}\ind_{\{J_t\in\lbrace 1, 2\rbrace\}}.
\eeqq
The MLE estimator $\hatdin^{MLE}$ can be obtained by solving $\sum_{t=1}^n u_t(\delta)=0$. By a Taylor expansion of $\sum_{t=1}^n u_t(\delta)$,
\begin{align}
0 & = \sum_{t=1}^n u_t(\hatdin^{MLE})= \sum_{t=1}^n u_t(\din) + (\hatdin^{MLE} - \din)  \sum_{t=1}^n \dot{u}_t(\hatdin^*), \label{score_taylor}
\end{align}
where $\dot{u}_t$ denotes the derivative of $u_t$ and $\hatdin^*=\din + \xi(\hatdin^{MLE} - \din)$ for some $\xi \in [0,1]$. 
An elementary transformation of \eqref{score_taylor} gives
$$
n^{1/2} (\hatdin^{MLE} - \din) =
\left(- \frac{1}{n^{-1}\sum_{t=1}^n \dot{u}_t(\hatdin^*)} \right) 
\left(n^{-1/2}  \sum_{t=1}^n u_t(\din)  \right).
$$
To establish
$$ 
n^{1/2} (\hatdin^{MLE} - \din) \convd N(0,I_\text{in}^{-1}),
$$
where $I_\text{in}$ is as defined in \eqref{fisher},
it suffices to show the following two results:
\begin{enumerate}
\item[(i)]
	$n^{-1/2}  \sum_{t=1}^n u_t(\din) \convd N(0,I_\text{in})$,
\item[(ii)]
	$n^{-1}\sum_{t=1}^n \dot{u}_t(\hatdin^*) \convp -I_\text{in}$.
\end{enumerate}
These are proved in Lemmas~\ref{normality_lemma2} and \ref{normality_lemma3} in the appendix, respectively.

To establish the joint asymptotic normality of the MLE estimator $\hat{\boldsymbol{\theta}}_n^{MLE}$, denote the joint score function vector for $\boldsymbol{\theta}$ by
$$
\frac{\partial}{\partial\boldsymbol{\theta}} \log L(\boldsymbol{\theta}) =: \mathbf S_n(\boldsymbol{\theta}) = \left(S_n(\alpha),S_n(\beta),S_n(\din),S_n(\dout)\right)^T,
$$
where $S_n(\alpha),S_n(\beta),S_n(\din),S_n(\dout)$ are the score functions for $\alpha,\beta,\din,\dout$, respectively. A multivariate Taylor expansion gives
\beqq \label{joint_score_taylor}
\mathbf0 = \mathbf S_n\left(\hat{\boldsymbol{\theta}}_n^{MLE}\right) = \mathbf S_n(\boldsymbol{\theta}) + \dot{\mathbf S}_n\left(\hat{\boldsymbol{\theta}}_n^*\right) \left(\hat{\boldsymbol{\theta}}_n^{MLE} -\boldsymbol{\theta} \right),
\eeqq
where $ \dot{\mathbf S}_n$ denotes the Hessian matrix of the log-likelihood function $\log L(\boldsymbol{\theta})$, and {$\hat{\boldsymbol{\theta}}_n^* = \boldsymbol\theta + \boldsymbol\xi \circ\left(\hat{\boldsymbol{\theta}}_n^{MLE} -\boldsymbol{\theta} \right)$ for some vector $\boldsymbol\xi \in [0,1]^4$, where ``$\circ$" denotes the Hadamard product.} From Remark~\ref{sufficiency}, the likelihood function $L(\boldsymbol{\theta})$ can be factored into
$$
L(\boldsymbol{\theta}) = f_1(\alpha,\beta)f_2(\din)f_3(\dout).
$$
Hence
\beqq \label{score_hes_conv}
\frac{1}{n}\dot{\mathbf S}_n(\hat{\boldsymbol{\theta}}_n^*) = 
\begin{bmatrix}
\frac{\partial^2\log L_n(\hat{\boldsymbol{\theta}}_n^*)}{\partial\alpha^2} & \frac{\partial^2\log L_n(\hat{\boldsymbol{\theta}}_n^*)}{\partial\alpha\partial\beta} & 0 & 0 \\
\frac{\partial^2\log L_n(\hat{\boldsymbol{\theta}}_n^*)}{\partial\beta\partial\alpha} & \frac{\partial^2\log L_n(\hat{\boldsymbol{\theta}}_n^*)}{\partial\beta^2} & 0 & 0 \\
0 & 0 &\frac{\partial^2\log L_n(\hat{\boldsymbol{\theta}}_n^*)}{\partial\din^2} & 0\\
0 & 0 & 0 & \frac{\partial^2\log L_n(\hat{\boldsymbol{\theta}}_n^*)}{\partial\dout^2}
\end{bmatrix}
\convp I(\boldsymbol{\theta})
\eeqq
as implied in the previous part of the proof, where $I(\boldsymbol{\theta})$ is as defined in \eqref{fisher} and is positive semi-definite. 

Note that $(S_n(\alpha),S_n(\beta)),S_n(\din),S_n(\dout)$ are pairwise uncorrelated. As an example, observe that
\begin{align*}
\EE[S_n(\alpha)S_n(\din)] =&\ \int \frac{\partial\log L(\boldsymbol{\theta})}{\partial\alpha}\frac{\partial\log L(\boldsymbol{\theta})}{\partial\din} L(\boldsymbol{\theta})d\mathbf{x}\\
 =&\ \int\frac{\partial\log f_1(\alpha,\beta)}{\partial\alpha}\frac{\partial\log f_2(\din)}{\partial\din}  f_1(\alpha,\beta)f_2(\din)f_3(\dout) d\mathbf{x} \\
 =&\ \int\frac{\partial f_1(\alpha,\beta)}{\partial\alpha}\frac{\partial  f_2(\din)}{\partial\din} f_3(\dout) d\mathbf{x} \\
 =&\ \frac{\partial^2}{\partial\alpha\partial\din} \int  L(\boldsymbol{\theta}) d\mathbf{x}  \\
 = &\ 0 
 = \EE[S_n(\alpha)]\EE[S_n(\din)].
\end{align*}
Using the Cram\'er-Wold device, the joint convergence of $\mathbf S_n(\boldsymbol{\theta})$ follows easily, i.e.,
$$
n^{-1/2} \mathbf S_n(\boldsymbol{\theta}) \convd N(\mathbf0, I(\boldsymbol{\theta})).
$$
From here, the result of the theorem follows from \eqref{joint_score_taylor} and \eqref{score_hes_conv}.\end{proof}
}


\section{Parameter estimation based on one snapshot}\label{OneSnapshot}

Based only {on} the single snapshot $G(n)$, we propose
a parameter estimation procedure.
We assume that the choice of the snapshot does not depend on any
  endogenous information related to the network. The snapshot merely
  represents a point in time where the data is available.
Since no information on
the initial graph $G(n_0)$ is {available, we merely assume $n_0$ and $N(n_0)$ are fixed and $n\to\infty$}.

Among the sufficient statistics for $(\alpha,\beta,\din,\dout)$ derived in Remark \ref{sufficiency}, $\left(\Nin_{>i}(n)\right)_{i\ge 0}$,  $\left(\Nout_{>j}(n)\right)_{j\ge 0}$ are computable from $G(n)$, but the $(J_t)_{t=1}^n$ are not. However, when $n$ is large, we can use the following approximations according to the proof of {Lemma}~\ref{unifconv}:
$$
\frac{1}{n}\sum_{t=n_0+1}^n \ind_{\{J_t=3\}} \approx 1-\alpha-\beta,
$$
and
\[
\frac{1}{n}\sum_{t=n_0+1}^n \frac{N(t)}{t+\din N(t)}\ind_{\{J_t\in\lbrace 1, 2\rbrace\} }
\approx (\alpha + \beta) \frac{1-\beta}{1+\din (1-\beta)}.
\]
Substituting in \eqref{MLEdin}, we {estimate $\din$ in terms of $\alpha$ and $\beta$ by solving}
\beqq \label{approx_din}
\sum_{i=0}^\infty \frac{\Nin_{>i}(n)/n}{i+\din}-\frac{1-\alpha-\beta}{\din}
-\frac{(\alpha+\beta)(1-\beta)}{1+(1-\beta)\din} = 0.
\eeqq
Note that a strongly consistent estimator of $\beta$ can be obtained directly from $G(n)$:
$$\label{tildebeta}
\tilde\beta= 1-\frac{N(n)}{n} 
\convas \beta.
$$
To obtain an estimate for $\alpha$, we make use of the recursive formula for $\{\pin_i\}$ in \eqref{rec_pin1}:
\beqq\label{alpha-p0}
\left(1+\frac{(\alpha+\beta)\din}{1+(1-\beta)\din}\right)\pin_0 = \alpha,
\eeqq
and replace $\pin_0$ by $\Nin_0(n)/n$ for large $n$,
\beqq\label{alpha1}
\left(1+\frac{(\alpha+\beta)\din}{1+(1-\beta)\din}\right)\frac{\Nin_0(n)}{n} = \alpha.
\eeqq
Plug the strongly consistent estimator $\tilde{\beta}$ into \eqref{approx_din} and \eqref{alpha1}, and
we claim that solving the system of equations:
\begin{subequations}
\label{onesnap}
\begin{align}
 & \sum_{i=0}^\infty \frac{\Nin_{>i}(n)/n}{i+\din}-\frac{1-\alpha-\tilde{\beta}}{\din}
-\frac{(\alpha+\tilde{\beta})(1-\tilde{\beta})}{1+(1-\tilde{\beta})\din} = 0,\label{onesnap1}\\
& \left(1+\frac{(\alpha+\tilde{\beta})\din}{1+(1-\tilde{\beta})\din}\right)\frac{\Nin_0(n)}{n} = \alpha, \label{onesnap2}
\end{align}
\end{subequations}
gives the unique solution $(\tilde{\alpha},\tildedin)$ which is strongly consistent for $(\alpha,\din)$.

\begin{Theorem} \label{ss_consist}
The solution $(\tilde{\alpha},\tildedin)$ to {the system of equations in \eqref{onesnap}} is unique and strongly consistent
for $(\alpha,\din)$, i.e.
$$ \tilde{\alpha}\convas\alpha,\quad \tildedin\convas\din.$$
\end{Theorem}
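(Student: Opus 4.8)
The plan is to follow the template of the consistency proof of Theorem~\ref{thm:consistency}: replace the random coefficients appearing in \eqref{onesnap} by their almost sure limits, show that the resulting deterministic system is solved \emph{uniquely} by the true pair $(\alpha,\din)$, and then transfer both the uniqueness and the location of the root back to the random system via a uniform-convergence sandwiching argument like the one built around \eqref{kpbound}.

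First I would record the almost sure limits of the data-dependent quantities. Since $\tilde\beta = 1-N(n)/n \convas \beta$ we also have $N(n)/n\convas 1-\beta$, and combining this with \eqref{pij} yields $\Nin_{>i}(n)/n \convas \pin_{>i}$ and $\Nin_0(n)/n \convas \pin_0$ for every $i$. The uniform convergence (over $\lambda\in[\epsilon,K]$) of $\sum_i \frac{\Nin_{>i}(n)/n}{i+\lambda}$ to $\sum_i \frac{\pin_{>i}}{i+\lambda}$ is already contained in \eqref{eq:unifconv}/Lemma~\ref{unifconv}, and the remaining $\tilde\beta$-dependent terms are smooth and bounded on the compact box, so the left-hand sides of \eqref{onesnap1}--\eqref{onesnap2}, viewed as functions $F_{1,n}(\alpha,\lambda),F_{2,n}(\alpha,\lambda)$, converge a.s.\ uniformly to the limiting functions $G_1,G_2$ obtained by the substitutions above.

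The heart of the matter is uniqueness of the zero of $(G_1,G_2)$, which I expect to be the main obstacle. The useful structural facts are that $G_1$ is \emph{affine} in $\alpha$ with $\partial_\alpha G_1 = \frac{1}{\lambda(1+(1-\beta)\lambda)}>0$ not depending on $\alpha$, and that $G_1$ evaluated at the true $\alpha$ is exactly the function $\psi$ of \eqref{defpsi}, which by Lemma~\ref{phi} has a unique, sign-changing zero at $\din$. Solving $G_2=0$ for $\alpha$ gives the explicit smooth increasing function $\alpha(\lambda)=\pin_0(1+\lambda)/(1+(1-\beta-\pin_0)\lambda)$ with $\alpha(\din)=\alpha$ (this last identity is precisely \eqref{alpha-p0}). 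Substituting produces the single reduced equation
\[
H(\lambda) := \psi(\lambda) + \frac{\alpha(\lambda)-\alpha}{\lambda(1+(1-\beta)\lambda)} = 0,
\]
which vanishes at $\din$. The difficulty is that the two summands have \emph{opposite} signs on each side of $\din$ (for $\lambda<\din$ one has $\psi(\lambda)>0$ but $\alpha(\lambda)<\alpha$, and conversely for $\lambda>\din$), so ``$\din$ is the only root'' does not follow formally and must be proved quantitatively. I would attempt this by showing $H$ is strictly decreasing on $[\epsilon,K]$, controlling $\psi'$ against the correction term using the explicit value $\pin_0 = \alpha(1+\din(1-\beta))/(1+\din(1+\alpha))$ together with the recursion \eqref{rec_pin1} for $\{\pin_i\}$; equivalently one may verify that the Jacobian of $(G_1,G_2)$ is nonsingular at the truth and rule out further crossings on the compact interval.

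Granting uniqueness of the limiting root, the conclusion follows as in Theorem~\ref{thm:consistency}. The same affine/monotone structure applied to $F_{1,n},F_{2,n}$ (whose coefficients $\Nin_{>i}(n)/n\ge 0$, $\Nin_0(n)/n$ and $\tilde\beta$ preserve the relevant signs) shows that, for $n$ large, the random system has a unique solution $(\tilde\alpha,\tildedin)$. By the strict sign change of $H$ at $\din$, for each $\kappa>0$ there is $\varepsilon_\kappa>0$ with $H>\varepsilon_\kappa$ on $[\epsilon,\din-\kappa]$ and $H<-\varepsilon_\kappa$ on $[\din+\kappa,K]$; the a.s.\ uniform convergence of the reduced random equation then forces its root $\tildedin$ into $[\din-\kappa,\din+\kappa]$ eventually, exactly as in \eqref{kpbound}, giving $\tildedin\convas\din$. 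Finally $\tilde\alpha=\alpha_n(\tildedin)$ is a continuous function of $\tildedin$, $\tilde\beta$ and $\Nin_0(n)/n$, so $\tilde\alpha\convas\alpha(\din)=\alpha$ by continuous mapping, completing the proof.
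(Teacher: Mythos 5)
Your reduction of the system \eqref{onesnap} to a single scalar equation is sound: exploiting the affine dependence of \eqref{onesnap1} on $\alpha$, eliminating $\alpha$ via \eqref{onesnap2}, and arriving at $H(\lambda)=\psi(\lambda)+\bigl(\alpha(\lambda)-\alpha\bigr)/\bigl(\lambda(1+(1-\beta)\lambda)\bigr)$ with $H(\din)=0$ is all correct, as is the final transfer step (a.s.\ uniform convergence plus a sign-change sandwich as in \eqref{kpbound}, then continuous mapping for $\tilde\alpha$). But there is a genuine gap exactly where you flag ``the main obstacle'': you never prove that $\din$ is the \emph{only} zero of $H$ on $[\epsilon,K]$. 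You correctly observe that the two summands of $H$ have opposite signs on each side of $\din$, so no soft sign argument applies, and then you only state that you ``would attempt'' to show $H$ is strictly decreasing, without carrying out any estimate. Your fallback --- checking that the Jacobian of $(G_1,G_2)$ is nonsingular at the truth --- is a purely local statement and cannot rule out additional crossings elsewhere on the compact interval, so it does not close the gap either. Since your sandwich argument needs a uniform sign for $H$ away from $\din$ (the bounds $H>\varepsilon_\kappa$ and $H<-\varepsilon_\kappa$), the consistency claim is unproven as written.

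For comparison, the paper resolves precisely this point by a structural trick rather than a monotonicity estimate on $H$. It rewrites the two equations as $\alpha+\tilde\beta=f_n(\din)$ and $\alpha+\tilde\beta=g_n(\din)$, passes to the limit functions $f,g$, reparametrizes $\eta=\delta/(1+\delta(1-\beta))$ as in \eqref{reparam}, and studies $\tilde h(\eta)=1/\tilde f(\eta)-1/\tilde g(\eta)$. In this variable $\tilde h$ turns out to be \emph{strictly concave} --- its second derivative is negative by the Cauchy--Schwarz inequality --- and it satisfies both $\tilde h(\eta)\to0$ as $\eta\downarrow0$ and $\tilde h(\eta_{\text{in}})=0$ at the true parameter. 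A strictly concave function with vanishing limit at the left endpoint cannot have two interior zeros, which yields global uniqueness at once, and the sign pattern needed for the sandwich argument comes along for free. If you want to rescue your route, you must supply an actual proof that $H$ (or a monotone transform of it) has a single sign change; the paper's experience suggests this hinges on finding the right change of variables, and the concavity of $1/\tilde f-1/\tilde g$ in the $\eta$ variable is the substantive content your proposal is missing.
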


The proof of Theorem~\ref{ss_consist} is given in Section~\ref{subsec:proof3}.
\vspace{.1in}

The parameters $\tildedout$ and $\tilde\gamma$ can be estimated by a mirror argument. We summarize the estimation procedure for $(\alpha,\beta,\gamma,\din,\dout)$ from the snapshot $G(n)$ as follows:
\begin{enumerate}
\item[1.] {Estimate $\beta$} by $\tilde{\beta}=1-N(n)/n$.
\item[2.] Obtain $\tildedin^0$ by solving {(i.e., matching \eqref{onesnap1} and \eqref{onesnap2})}
$$
\sum_{i=1}^\infty \frac{\Nin_{>i}(n)}{n}\frac{i}{i+\din}(1+\din(1-\tilde\beta)) =\frac{\frac{\Nin_0(n)}{n} + \tilde\beta }{1-\frac{\Nin_0(n)}{n} \frac{\din}{1+(1-\tilde\beta)\din}}.
$$
\item[3.] {Estimate $\alpha$} by
$$
\tilde\alpha^0 =   \frac{\frac{\Nin_0(n)}{n} + \tilde\beta}{1-\frac{\Nin_0(n)}{n} \frac{\tildedin^0}{1+(1-\tilde\beta)\tildedin^0}} - \tilde\beta.
$$
\item[4.] Obtain $\tildedout^0$ by solving
$$
\sum_{j=1}^\infty \frac{\Nout_{>j}(n)}{n}\frac{j}{j+\dout}(1+\dout(1-\tilde\beta)) = \frac{\frac{\Nout_0(n)}{n} + \tilde\beta }{1-\frac{\Nout_0(n)}{n} \frac{\dout}{1+(1-\tilde\beta)\dout}}.
$$
\item[5.] {Estimate $\gamma$} by
$$
\tilde\gamma^0 =  \frac{\frac{\Nout_0(n)}{n} + \tilde\beta }{1-\frac{\Nout_0(n)}{n} \frac{\tildedout^0}{1+(1-\tilde\beta)\tildedout^0}} - \tilde\beta.
$$
\end{enumerate}
\smallskip
{
Note that even though all three estimators $\tilde\alpha^0,\tilde\beta,\tilde\gamma^0$ are strongly consistent and hence $\tilde\alpha^0+\tilde\beta+\tilde\gamma^0\convas1$, Step 1--5 do not necessarily imply the strict equality
\beqq \label{eq:probeq}
	\tilde\alpha^0+\tilde\beta+\tilde\gamma^0=1.
\eeqq
We recommend adding the following two steps for a re-normalization to overcome this defect.
}
\begin{enumerate}
\item[6.]  Re-normalize the probabilities
$$ \label{eq:renormalize}
	(\tilde\alpha,\tilde\beta,\tilde\gamma) \leftarrow \left(\frac{\tilde\alpha^0(1-\tilde\beta)}{\tilde\alpha^0+\tilde\gamma^0},\tilde\beta,\frac{\tilde\gamma^0(1-\tilde\beta)}{\tilde\alpha^0+\tilde\gamma^0}\right).
$$
\item[7.] Plug $\tilde\alpha$ into \eqref{onesnap1} to update the estimate of $\din$, i.e., solve for $\tildedin$ from
$$
	\sum_{i=0}^\infty \frac{\Nin_{>i}(n)/n}{i+\tildedin}-\frac{1-\tilde\alpha-\tilde{\beta}}{\tildedin}
-\frac{(\tilde\alpha+\tilde{\beta})(1-\tilde{\beta})}{1+(1-\tilde{\beta})\tildedin} = 0.
$$
Similarly, solve for $\tildedout$ from
$$
	\sum_{j=0}^\infty \frac{\Nout_{>j}(n)/n}{j+\tildedout}-\frac{1-\tilde\gamma-\tilde{\beta}}{\tildedout}
-\frac{(\tilde\gamma+\tilde{\beta})(1-\tilde{\beta})}{1+(1-\tilde{\beta})\tildedout} = 0.
$$
\end{enumerate}


\section{Simulation study}\label{sec:estSim}
We now apply the estimation procedures described in Sections~\ref{sec:estMLE} and \ref{OneSnapshot} to simulated data,
which allows us to compare the estimation results using the full history of the network with {that using} just one snapshot.
Algorithm~\ref{algo:1} is used to simulate realizations of the preferential attachment network.

\subsection{MLE}\label{estSim:MLE}
\begin{figure}[t]\center
\includegraphics[width=12cm, height=10cm]{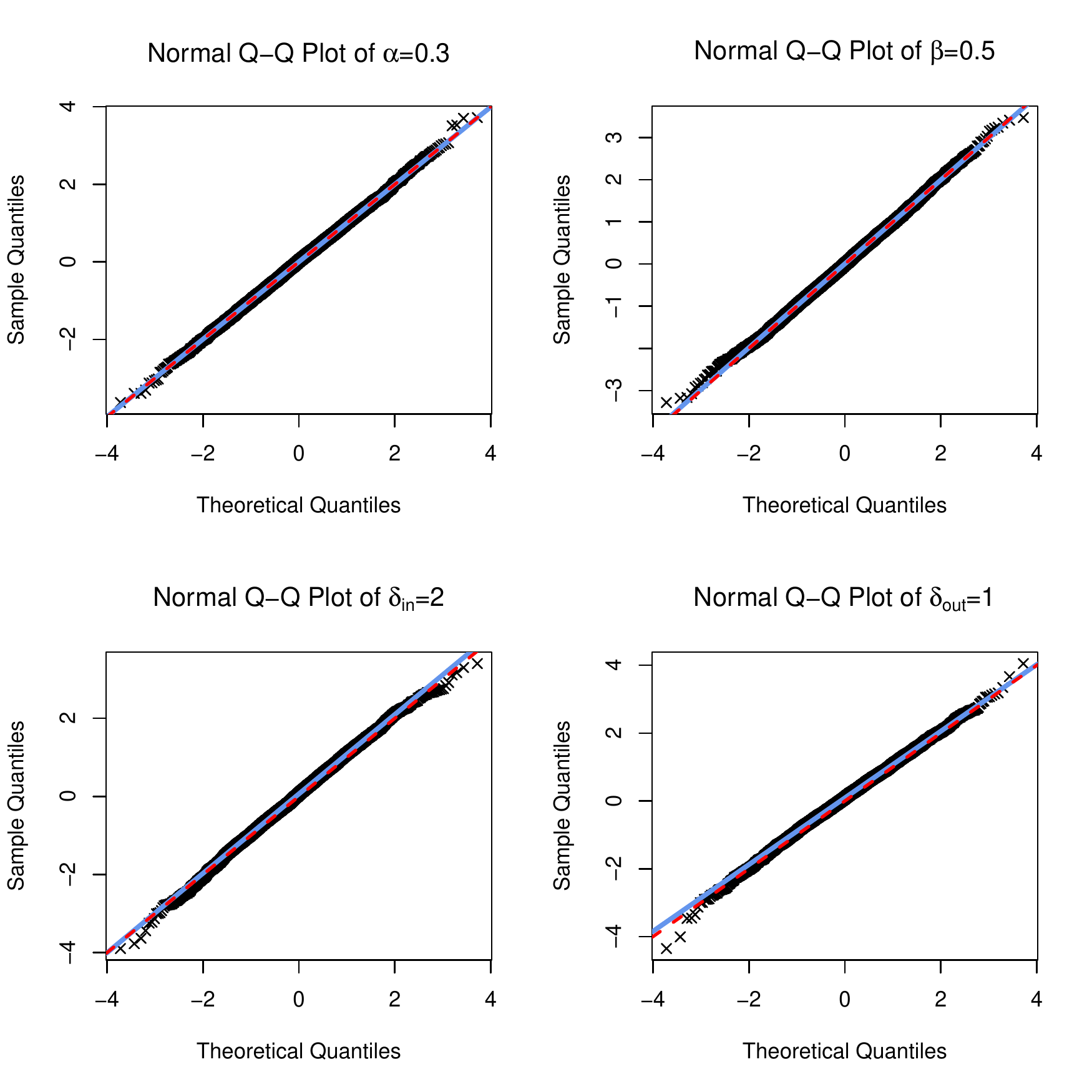}
\caption{Normal QQ-plots {in black} for normalized estimates in \eqref{normalize} under $5000$ replications of a preferential attachment network with $10^5$ edges and $\boldsymbol{\theta}=(0.3,0.5,2,1)$. The fitted lines in {blue} are the traditional qq-lines {(given by R)} used to check normality of the estimates.
The red dashed line represents the $y=x$ line in all plots.}\label{qq-MLE}
\end{figure}

For the scenario of observing the full history of the network, we simulated 5000 independent replications of the preferential attachment network with $10^5$ edges under the true parameter values
\beqq\label{params}
\boldsymbol{\theta} = \left(\alpha,\beta,\din, \dout\right) = (0.3,\, 0.5,\, 2,\, 1).
\eeqq
For each realization, the MLE estimate of the parameters was computed and standardized as
{
\beqq\label{normalize}
\frac{\sqrt{n}\left((\hat{\boldsymbol{\theta}}_n^{MLE})_i-(\boldsymbol{\theta})_i\right)}{\hat{\sigma}_{ii}},
\eeqq
where} $(\hat{\boldsymbol{\theta}}_n)_i$ and $(\boldsymbol{\theta})_i$ denote the $i$-th components of $\hat{\boldsymbol{\theta}}_n^{MLE}$ and $\boldsymbol{\theta}$ respectively, and 
$\hat{\sigma}_{ii}^2$ is the $i$-th diagonal component of the matrix $\hat{\Sigma}:= \Sigma(\hat{\boldsymbol{\theta}}_n^{MLE})$. 
{The explicit formula for the entries of $\hat{\Sigma}$ is}
$$
\hat{\Sigma} = 
\begin{bmatrix}
\hat{\alpha}^{MLE}\left(1-\hat{\alpha}^{MLE}\right) & -\hat{\alpha}^{MLE}\hat{\beta}^{MLE}&0 & 0\\
-\hat{\alpha}^{MLE}\hat{\beta}^{MLE} & \hat{\beta}^{MLE}\left(1-\hat{\beta}^{MLE}\right) & 0 & 0 \\
0 & 0 & \hat{I}^{-1}_\text{in} & 0 \\
0 & 0 & 0 & \hat{I}^{-1}_\text{out}
\end{bmatrix},
$$
where, {see \eqref{fisher} and \eqref{I_in},}
\begin{align*}
\hat{I}_\text{in} &= \sum_{i=0}^\infty \frac{\Nin_{>i}(n)/n}{\left(i+\hatdin^{MLE}\right)^2} - \frac{1-\hat{\alpha}^{MLE}-\hat{\beta}^{MLE}}{\left(\hatdin^{MLE}\right)^2} - \frac{\left(\hat{\alpha}^{MLE}+\hat{\beta}^{MLE}\right)\left(1-\hat{\beta}^{MLE}\right)^2}{\left(1+\hatdin^{MLE}\left(1-\hat{\beta}^{MLE}\right)\right)^2},\\
\hat{I}_\text{out} &= \sum_{j=0}^\infty \frac{\Nout_{>j}(n)/n}{\left(j+\hatdout^{MLE}\right)^2} - \frac{\hat{\alpha}^{MLE}}{\left(\hatdout^{MLE}\right)^2} - \frac{\left(1-\hat{\alpha}^{MLE}\right)\left(1-\hat{\beta}^{MLE}\right)^2}{\left(1+\hatdout^{MLE}(1-\hat{\beta}^{MLE})\right)^2}.
\end{align*}
{By the consistency of the MLEs combined with the convergence of $\{\Nin_{>i}(n)/n\}$ and $\{\Nout_{>j}(n)/n\}$, see \eqref{eq:Nin_over}, we have that $\hat{\Sigma}_n \overset{a.s.}\to \Sigma$.

The QQ-plots of the normalized MLEs are shown in  Figure~\ref{qq-MLE}, all of which line up quite well with the $y=x$ line (the red dashed line).
This is consistent with the asymptotic theory described in Theorem~\ref{asymp_normality}.
Confidence intervals for $\boldsymbol{\theta}$ can be obtained using this theorem.}
Given a single realization, {an approximate $(1-\varepsilon)$-confidence interval} for $(\boldsymbol{\theta})_i$ is
$$
(\hat{\boldsymbol{\theta}}_n^{MLE})_i \pm z_{\varepsilon/2}\sqrt{\frac{\hat{\sigma}_{ii}^2}{n}}\quad\mbox{for } i=1,\ldots,4,
$$
where $z_{\varepsilon/2}$ is the upper $\varepsilon/2$ quantile of $N(0, 1)$.

\subsection{One snapshot}\label{subsec:oness}
\begin{figure}[t]\center
\includegraphics[width=12cm, height =10cm]{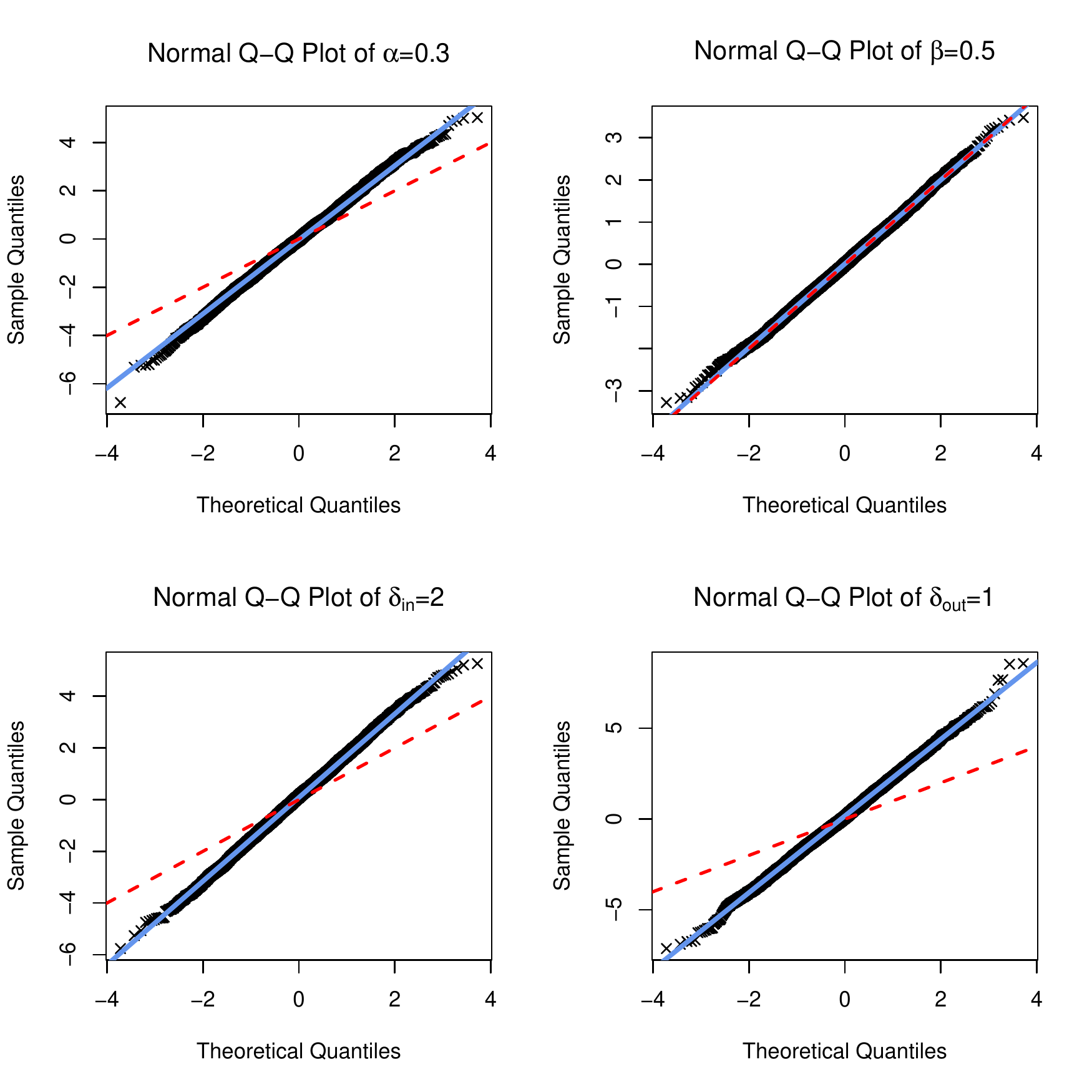}
\caption{Normal QQ-plots for the normalized estimates in \eqref{normalize-oness} under $5000$ replications of a preferential attachment network with $10^5$ edges and $\boldsymbol{\theta}=(0.3,0.5,2,1)$. The fitted lines in {blue} are the traditional qq-lines used to check normality of the estimates.
The red dashed line represents the $y=x$ line in all plots.}\label{qq-onesnapshot}
\end{figure}

We used the same simulated data as in Section~\ref{estSim:MLE} to obtain parameter estimates {$\tilde{\boldsymbol{\theta}}_n := (\tilde{\alpha},\, \tilde{\beta},\, \tildedin,\, \tildedout)$} through only the final snapshot, {i.e.{,} the set of directed edges without timestamps,} following the procedure described at the end of Section~\ref{OneSnapshot}. 
{For the purpose of comparison with MLE, Figure~\ref{qq-onesnapshot} gives the QQ-plots for the normalized estimates from the snapshots using the same standardizations for the MLEs, i.e.,
\beqq\label{normalize-oness}
\frac{\sqrt{n}\left((\tilde{\boldsymbol{\theta}}_n)_i-(\boldsymbol{\theta})_i\right)}{\hat{\sigma}_{ii}},\quad i=1,\ldots,4,
\eeqq
where $(\tilde{\boldsymbol{\theta}}_n)_i$ denotes the $i$-th components of $\tilde{\boldsymbol{\theta}}_n$.}
Again, the fitted lines in {blue} are the traditional QQ-lines and the red dashed lines are the $y=x$ line.
The QQ-plot for $\tilde{\beta}$ exhibits {the} same shape as for $\hat{\beta}^{MLE}$, {since the two estimates are identical.}

{
From Figure~\ref{qq-onesnapshot}, we see that the snapshot estimates of all four parameters are consistent and approximately normal, i.e., the QQ-plots are linear. However, the slope{s} of the QQ-lines for $\tilde{\alpha},\tildedin, \tildedout$ are much steeper than the diagonal line, indicating a loss of efficiency for $\tilde{\boldsymbol{\theta}}_n$ compared with $\hat{\boldsymbol{\theta}}_n$. Indeed the estimator variance is inflated for all parameters  except for $\beta$, where $\tilde\beta$ coincides with the true MLE.  This is as expected since knowing only the final snapshot provides far less information than the whole network history.
}

Recall that for a consistent estimator $T_n$ of a one-dimensional parameter $\theta$ constructed from a random sample of size $n$, the asymptotic relative efficiencies (ARE) of $T_n$ is defined by
\[
ARE(T_n) := \lim_{n\to\infty}\frac{\Var(\sqrt{n}T_n^*)}{\Var(\sqrt{n}T_n)},
\]
where $T_n^*$ denotes the asymptotically efficient estimator.
{We may compute the ARE's for the snapshot parameter estimates }
\begin{subequations}
\label{eff}
\begin{align}
ARE(\tilde{\alpha}) & =\lim_{n\to\infty}\frac{n\Var(\hat{\alpha}^{MLE})}{n\Var(\tilde{\alpha})}  
\approx \frac{\widehat{\Var}(\hat{\alpha}^{MLE})}{\widehat{\Var}(\tilde{\alpha})}  
\approx 0.398, \\ 
ARE(\tildedin)&=\lim_{n\to\infty}\frac{n\Var(\hatdin^{MLE})}{n\Var(\tildedin)} 
\approx \frac{\widehat{\Var}(\hatdin^{MLE})}{\widehat{\Var}(\tildedin)} 
\approx 0.392,\\
ARE(\tildedout) & = \lim_{n\to\infty}\frac{n\Var(\hatdout^{MLE})}{n\Var(\tildedout)} 
\approx \frac{\widehat{\Var}(\hatdout^{MLE})}{\widehat{\Var}(\tildedout)}
\approx 0.226,
\end{align} 
\end{subequations}
{where $\widehat\Var$ denotes the {sample} variance of the parameter estimate based on the 5000 replications. Note that $ARE(\tilde\beta)=1$ since $\tilde\beta = \hat\beta^{MLE}$.}

Given a single realization, the variances of the snapshot estimates can be estimated through resampling as follows.
{
Using the estimated parameter  $\tilde{\boldsymbol{\theta}}_n$,
simulate $10^4$ independent bootstrap replicates of the network with $n=10^5$ edges.
{For each simulated network, the snapshot estimate,
  $\tilde{\boldsymbol{\theta}}^*_n := \left(\tilde{\alpha}^*,\,
    \tilde{\beta}^*,\, \tildedin^*,\, \tildedout^*\right)$, is
  computed.  The sample variance of these $10^4$ snapshot estimates
  can then be used  as an approximation for the variance of
  $\tilde{\boldsymbol{\theta}}_n$ so that assuming asymptotic
  normality, a $(1-\varepsilon)$-confidence interval for $\btheta$ can
  be approximated by 
}
$$
{(\tilde{\boldsymbol{\theta}}_n)_i\pm z_{\varepsilon/2}\sqrt{\widehat{\Var}\left((\tilde{\boldsymbol{\theta}}^*_n)_i\right)}\quad\mbox{for } i=1,\ldots,4,}
$$
where $z_{\varepsilon/2}$ is the upper $\varepsilon/2$ quantile of $N(0, 1)$.}

\subsection{Sensitivity test}
Now {we {investigate} the sensitivity {of} our estimates while values of
  the parameters $(n,\alpha, \beta,\din, \dout)$ are allowed to vary.}
First consider the impact of $n$, the number of edges in the network. {To do so we held the parameters {fixed with values given by} \eqref{params}: $\left(\alpha,\beta,\din, \dout\right) = (0.3,\, 0.5,\, 2,\, 1)$ and varied the value of $n$.} The QQ-plots {(not presented)} for standardized estimates using both full MLE and one-snapshot methods {were} produced to check the asymptotic normality. {When $n = 500,1000$, diagnostics revealed departures from normality for {both} the MLE {and} the snapshot estimates. However, after increasing $n$ to $10000$, estimates obtained from both approaches appeared normally distributed as expected.}

{For each value of $n$ in Table~\ref{varyn}, 5000 replicates of the network with $n$ edges and parameters $\boldsymbol{\theta}=(0.3, 0.5, 2,1)$ were generated.
For each realization, the MLE's $\hat{\boldsymbol{\theta}}_n^{MLE}$ were computed using the full history of the network and the one-snapshot estimates $\tilde{\boldsymbol{\theta}}_n$ were obtained using the 7-step snapshot method proposed in Section~\ref{OneSnapshot}, pretending that only the last snapshot $G(n)$ was available. The mean for these two estimators were recorded in Table~\ref{varyn}.}
{There is little bias for both estimates of $\alpha$ and $\beta$, even for small values of $n$. On the other hand, there is some bias for estimated $\din$ and $\dout$ for $n\le 5000$. The magnitude of the biases for both types of estimates {decrease} as $n$ increases.}
Also the ARE's of the snapshot estimator {stay within a narrow band} as $n$ increases.

\begin{table}[t]
\centering
\caption{\smallskip{Mean of $\hat{\boldsymbol{\theta}}_n^{MLE}$ and $\tilde{\boldsymbol{\theta}}_n$ with ARE's of $\tilde{\boldsymbol{\theta}}_n$ relative to $\hat{\boldsymbol{\theta}}_n^{MLE}$ for $\boldsymbol{\theta}=(0.3, 0.5, 2,1)$ under different choices of $n$.}}\label{varyn}

\begin{tabular}{ccccc}
 \hline
 $n$&  $Mean(\hat{\boldsymbol{\theta}}_n^{MLE})$ & 
 $Mean(\tilde{\boldsymbol{\theta}}_n)$ &  $ARE(\tilde{\boldsymbol{\theta}}_n)$\\
 \hline
  $1000$ & (0.300, 0.500, {2.076}, {1.054}) & ({0.301}, 0.500, {2.128}, {1.066}) & (0.408, 1.000, 0.397, 0.228) \\
  $5000$ & (0.300, 0.500, {2.022}, {1.013}) & ({0.301}, 0.500, {2.036}, {1.010}) & (0.414, 1.000, 0.386, 0.236) \\
  $10000$ & (0.300, 0.500, {2.011}, {1.006}) & (0.301, 0.500, {2.019}, {1.006}) & (0.408, 1.000, 0.388, 0.232) \\
   $50000$ & (0.300, 0.500, {2.003}, {1.002}) & (0.300, 0.500, {2.005}, {1.002}) & (0.399, 1.000, 0.393, 0.230) \\
  $100000$ & (0.300, 0.500, 2.001, 1.001) & (0.300, 0.500, 2.003, 1.000) & (0.392, 1.000, 0.382, 0.223)\\
   \hline
\end{tabular}
\end{table}



{
Next we held $(n, \din, \dout) = (10^5, 2, 1)$ fixed and experimented with various values of $(\alpha, \beta)$ in Table~\ref{varyab}.
For each choice of $(\alpha, \beta)$, 5000 independent realizations of the network were generated and the means of the MLE $\hat{\boldsymbol{\theta}}_n^{MLE}$ and the one-snapshot estimates $\tilde{\boldsymbol{\theta}}_n$ were recorded.}
Overall, the biases for $\hat{\boldsymbol{\theta}}_n^{MLE}$ are remarkably small for virtually all combinations of parameter values, except for those parameter choices where one of $(\alpha, \beta)$ is extremely small. The biases for the snapshot estimates $\tilde{\boldsymbol{\theta}}_n$ exhibit a similar property, but the magnitudes of the biases are consistently larger than those in the MLE case. 

{
In general, the snapshot estimators are able to achieve $20\%$--$50\%$ efficiency over the range of parameters considered.  The loss of efficiency might be less than one would expect given the substantial reduction in the data available to produce the snapshot estimates. It is worth noting that in the case where $(\alpha,\beta)=(0.7,0.2)$, the efficiencies of the snapshot estimators for $\alpha$ and $\din$ are much larger (.73 and .79, respectively).  A heuristic explanation for this increase is that the parameter $\gamma =1-\alpha-\beta=0.1$ is relatively small.  By the implicit constraints used for the snapshot estimates, we have 
$$
	\tilde\alpha+\tilde\gamma=1-\tilde \beta=1-\hat\beta^{MLE}=\hat\alpha^{MLE}+\hat\gamma^{MLE},
$$
that is, the snapshot estimate of the sum $\alpha+\gamma$ is the same as the MLE for the sum.  Now if $\gamma$ is small, one would expect the resulting estimates to also be small so that $\tilde\alpha$ would be nearly the same as $\hat \alpha^{MLE}$.  Hence the ARE would be close to 1.  On the other hand, in the case of a larger $\gamma$, see the bottom row of Table~\ref{varyab} in which $\gamma=0.6$, the ARE for $\alpha$ is not as large (.42), but the ARE for $\tildedout$ is (.63).}

\begin{table}[t]
\caption{\smallskip{Mean of $\hat{\boldsymbol{\theta}}_n^{MLE}$ and $\tilde{\boldsymbol{\theta}}_n$ with ARE's of $\tilde{\boldsymbol{\theta}}_n$ relative to $\hat{\boldsymbol{\theta}}_n^{MLE}$ for $(n,\din,\dout)=(10^5,2,1)$ under different choices of $(\alpha,\beta)$.}}
\label{varyab}
\centering
\begin{tabular}{ccccc}
 \hline
 $(\alpha, \beta)$ & $Mean(\hat{\boldsymbol{\theta}}_n^{MLE})$ & 
 $Mean(\tilde{\boldsymbol{\theta}}_n)$ &  $ARE(\tilde{\boldsymbol{\theta}}_n)$\\
 \hline
  (0.001, 0.99) & (0.001, 0.990, {2.034}, {1.016}) & (0.001, 0.990, {2.071}, {1.049}) & (0.291, 1.000, 0.147, 0.316) \\
  (0.01, 0.9) & (0.010, 0.900, {2.004}, 1.001) & (0.010, 0.900, {2.008}, {1.004}) & (0.331, 1.000, 0.207, 0.381) \\
  (0.1, 0.8) & (0.100, 0.800, 2.003, 1.001) & (0.100, 0.800, {2.004}, {1.002}) & (0.353, 1.000. 0.264, 0.216)\\
  (0.2, 0.6) & (0.200, 0.600, 2.002, 1.001) &  (0.200, 0.600, 2.003, 1.001)	& (0.364, 1.000, 0.309, 0.236)	\\
  (0.5, 0.3) & (0.500, 0.300, 2.001, 1.001) & (0.500, 0.300, {2.002}, 1.000) & (0.472, 1.000, 0.529, 0.202)		\\
  (0.7, 0.2) & (0.700, 0.200, 2.002, 1.000) & ({0.700}, 0.200, {2.002}, 1.000)	& (0.726, 1.000, 0.793, 0.217)	\\
    (0.1, 0.3) & (0.100, 0.300, 2.001, 1.001) & ({0.100}, 0.300, {2.002}, 1.000)	& (0.420, 1.000, 0.313, 0.629)	\\
   \hline
\end{tabular}
\end{table}


\section{Real network example}\label{sec:estReal}

In this section, we explore fitting a preferential attachment model to
a social network. As illustration, we chose the Dutch Wiki talk
network dataset, available on KONECT \cite{kunegis:2013}
(\url{http://konect.uni-koblenz.de/networks/wiki_talk_nl}). The nodes
represent users of Dutch Wikipedia, and an edge from node A to node B
refers to user A writing a message on the talk page of user B at a
certain time point. The network consists of 225,749 nodes (users) and
1,554,699 edges (messages). All edges are recorded with timestamps. 

In order to accommodate all the edge formulation scenarios appeared in the dataset, we extend our model by appending the following two interaction schemes ($J_n=4,5$) in addition to the existing three ($J_n=1,2,3$) described in Section \ref{subsec:linpref}. 
\begin{itemize}
\item If $J_n=4$ (with probability $\xi$), append to $G(n-1)$ two new nodes $v,w\in V(n)\setminus V(n-1)$ and an edge connecting them $(v,w)$.
\item  If $J_n=5$ (with probability $\rho$),  append to $G(n-1)$ a new node $v\in V(n)\setminus V(n-1)$ {with} self loop $(v,v)$.
\end{itemize}
These scenarios have been observed in other social network data, such as the Facebook wall post network (\url{http://konect.uni-koblenz. de/networks/facebook-wosn-wall}), etc. They occur in small proportions and can be easily accommodated by a slight modification in the model fitting procedure. The new model has parameters $(\alpha,\beta,\gamma,\xi,\din,\dout)$, and $\rho$ is implicitly defined through $\rho = 1- (\alpha+\beta+\gamma+\xi)$. Similar to the derivations in Section \ref{sec:estMLE}, the MLE estimators for $\alpha,\beta,\gamma,\xi$ are
\begin{align*}
\hat{\alpha}^{MLE} = \frac{1}{n} \sum_{t=1}^n \ind_{\{J_t=1\}}, & \quad \hat{\beta}^{MLE} = \frac{1}{n} \sum_{t=1}^n \ind_{\{J_t=2\}},\\
\hat{\gamma}^{MLE} = \frac{1}{n} \sum_{t=1}^n \ind_{\{J_t=3\}}, & \quad \hat{\xi}^{MLE} = \frac{1}{n} \sum_{t=1}^n \ind_{\{J_t=4\}},
\end{align*}
and $\din,\dout$ can be obtained through solving
\begin{align*}
\sum_{i=0}^\infty \frac{\Nin_{>i}(n)/n}{i+\din}-\frac{\frac{1}{n}\sum_{t=1}^n \ind_{\{J_t\in\{3,4,5\}\}}}{\din}
-\frac{1}{n}\sum_{t=1}^n \frac{N(t)}{t+\din N(t)}\ind_{\{J_t\in\lbrace 1, 2\rbrace\}}&=0, \\
\sum_{j=0}^\infty \frac{\Nout_{>j}(n)/n}{j+\dout}-\frac{\frac{1}{n}\sum_{t=1}^n \ind_{\{J_t\in\{1,4,5\}\}}}{\dout}
-\frac{1}{n}\sum_{t=1}^n \frac{N(t)}{t+\dout N(t)}\ind_{\{J_t\in\lbrace 2,3 \rbrace\}}&=0.
\end{align*}

We first naively fit the linear preferential attachment model to the full network using MLE. The MLE estimators are 
\begin{align} \label{eq:wikiparBefore}
(\hat\alpha,\hat\beta,\hat\gamma,&\hat\xi,\hat\rho,\hatdin,\hatdout)= \\
& (3.08\times10^{-3}, 8.55\times10^{-1}, 1.39\times10^{-1}, 4.76\times10^{-5},3.06\times10^{-3}, 0.547, 0.134 ).\nonumber
\end{align}
To evaluate the goodness-of-fit, 20 network realizations were simulated from the fitted model. We overlaid the {empirical} in- and out-degree frequencies of the original network with that of the simulations. If the model fits the data well, the {degree frequencies} of the data should lie within the range formed by that of the simulations, {which gives an informal confidence region for the degree distributions}. From Figure \ref{fig:wikiDegConst0}, we see that while the data roughly agrees with the simulations in the out-degree frequencies, the deviation in the in-degree frequencies is noticeable.

\begin{figure}[t]
\includegraphics[width=5in]{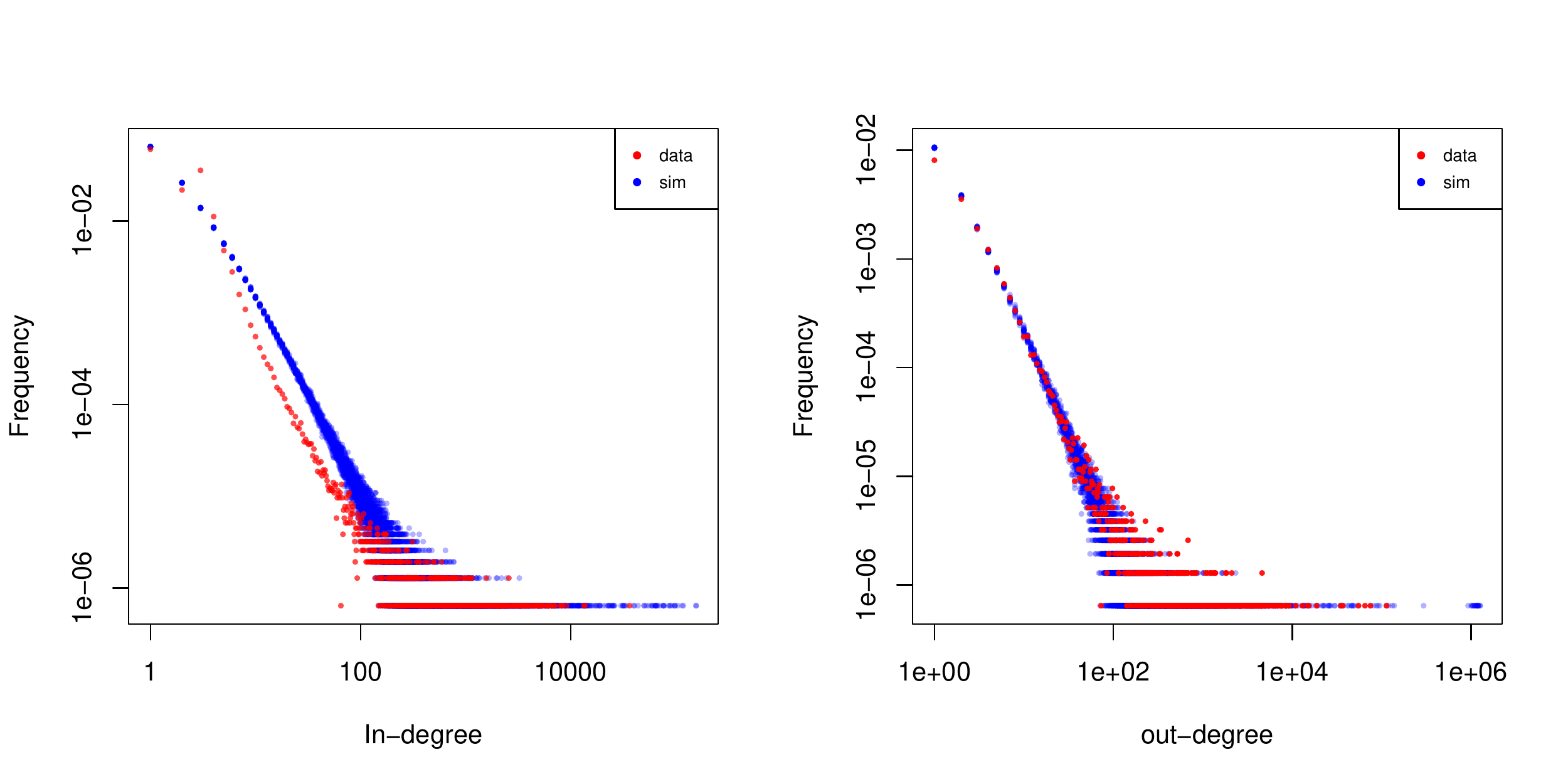}
\caption{{Empirical in- and out-degree frequencies of the full Wiki talk network (red) and that from 20 realizations of the linear preferential attachment network with fitted parameter values \eqref{eq:wikiparBefore} from MLE (blue). The scatter plots for the degree frequencies from the 20 simulations are overlaid together to form {an informal confidence region} for the degree distribution of the fitted model}}
\label{fig:wikiDegConst0}
\end{figure}

To better understand the discrepancy in the in-degree frequencies,
{we examined the link data and their timestamps} and discovered 
bursts of messages originating from certain nodes over small
time intervals. {According to Wikipedia policy \cite{wikipedia:2016}, certain administrating accounts are allowed to send group messages to multiple users simultaneously.} These bursts presumably represent broadcast announcements generated from these accounts.
These administrative broadcasts can {also be detected} if we
apply the linear preferential attachment model to the network in local
time intervals. We {divided the total time frame down to sub-intervals of varying length
each containing the formation of $10^4$ edges. The number $10^4$ is chosen to ensure good asymptotics as shown in Table~\ref{varyn}.
This process generated 155 networks,
$$
G(n_{k-1}),\dots, G(n_k-1), \quad k=1,\dots,155. 
$$
For each of the 155 datasets, we fit a
preferential attachment model using MLE.}
The resulting estimates $(\hatdin,\hatdout)$ are plotted against the
corresponding timeline on the upper left panel of Figure
\ref{fig:wikiParEst}. Notice that $\hatdin$ exhibits large spikes at
various times. Recall from \eqref{eq:probIn}, a large value of $\din$
indicates that the probability of an existing node $v$ receiving a new
message becomes less dependent on its in-degree, i.e., previous
popularity. These spikes appear to be directly related to the
occurrences of group messages. This plot is truncated after the day
2016/3/16, on which a massive group message of size 48,957 was sent
and the model can no longer be fit.

{We identified 37 users who have sent, at least once, 40 or
  more consecutive messages in the message history. This is evidence
  that  group messages were
 sent by this user. We presume these nodes are administrative
 accounts}; they are responsible for about $30\%$ of
the total messages sent. Since their behavior cannot be regarded as
normal social interaction, we excluded messages from these accounts
from the dataset in our analysis. We then also removed
nodes with zero in- and out-degrees. 

The re-estimated parameters after the data cleaning are displayed
in the other three panels of Figure
 \ref{fig:wikiParEst}. Here all parameter estimates are quite stable
 through time. 
\begin{figure}[t]
\includegraphics[width=5in]{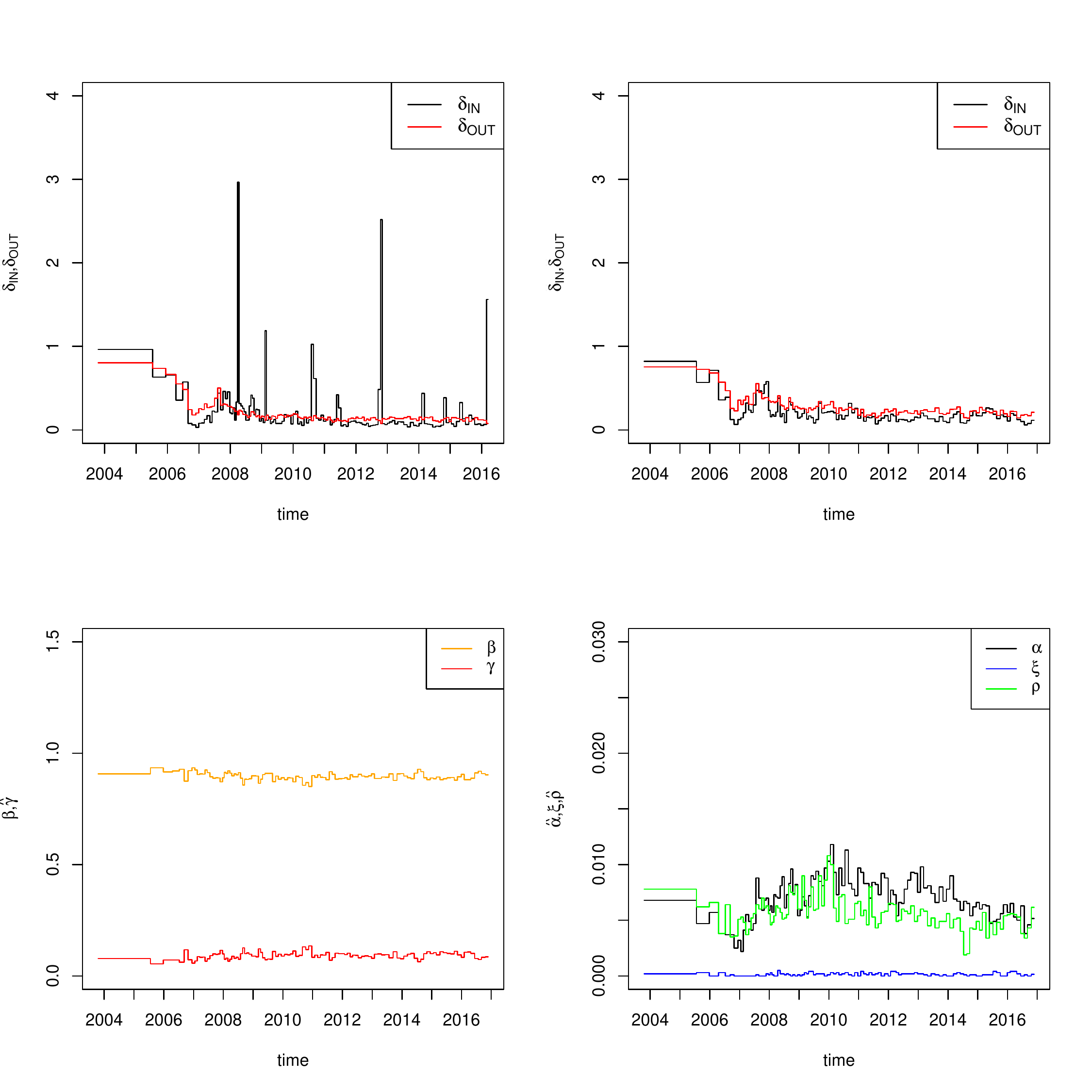}
\caption{Local parameter estimates of the linear preferential attachment model for the full and reduced Wiki talk network. Upper left: $(\hatdin,\hatdout)$ for the full network. Upper right, lower left, lower right: $(\hatdin,\hatdout)$, $(\hat\beta,\hat\gamma)$, $(\hat\alpha,\hat\xi,\hat\rho)$ for the reduced network, respectively.}
\label{fig:wikiParEst}
\end{figure}

The reduced network now contains 112,919 nodes and 1,086,982 edges, to
which we fit the linear preferential attachment model. The fitted
parameters based on MLE for our reduced dataset are  
\begin{align}\label{eq:wikiparAfter}
(\hat\alpha,\hat\beta,\hat\gamma,&\hat\xi,\hat\rho,\hatdin,\hatdout)=\\
&(6.95\times10^{-3}, 8.96\times10^{-1}, 9.10\times10^{-2}, 1.44\times10^{-4},5.61\times10^{-3}, 0.174, 0.257 ).\nonumber
\end{align}
Again the degree distributions of the data and 20 simulations from the
fitted model are displayed in Figure \ref{fig:wikiDegConst}. The
out-degree distribution of the data agrees reasonably well with the
simulations. For the in-degree distribution, the fit is better than that for the entire dataset (Figure \ref{fig:wikiDegConst0}). However, for smaller in-degrees, the fitted model over-estimates the in-degree frequencies. We speculate that in many social networks, the out-degree is {in line} with that predicted by the preferential attachment model. An individual node would be more likely to reach out to others if having done so many times previously. For in-degrees, the situation is complicated and may depend on a multitude of factors. For instance, the choice of recipient may
depend on the community that the sender is in, the topic being
discussed in the message, etc. As an example a group leader might send messages to his/her team on a regular basis. Such examples violate the base assumptions of the preferential attachment model and could result in the deviation between the data and the simulations.

\begin{figure}[t]
\includegraphics[width=5in]{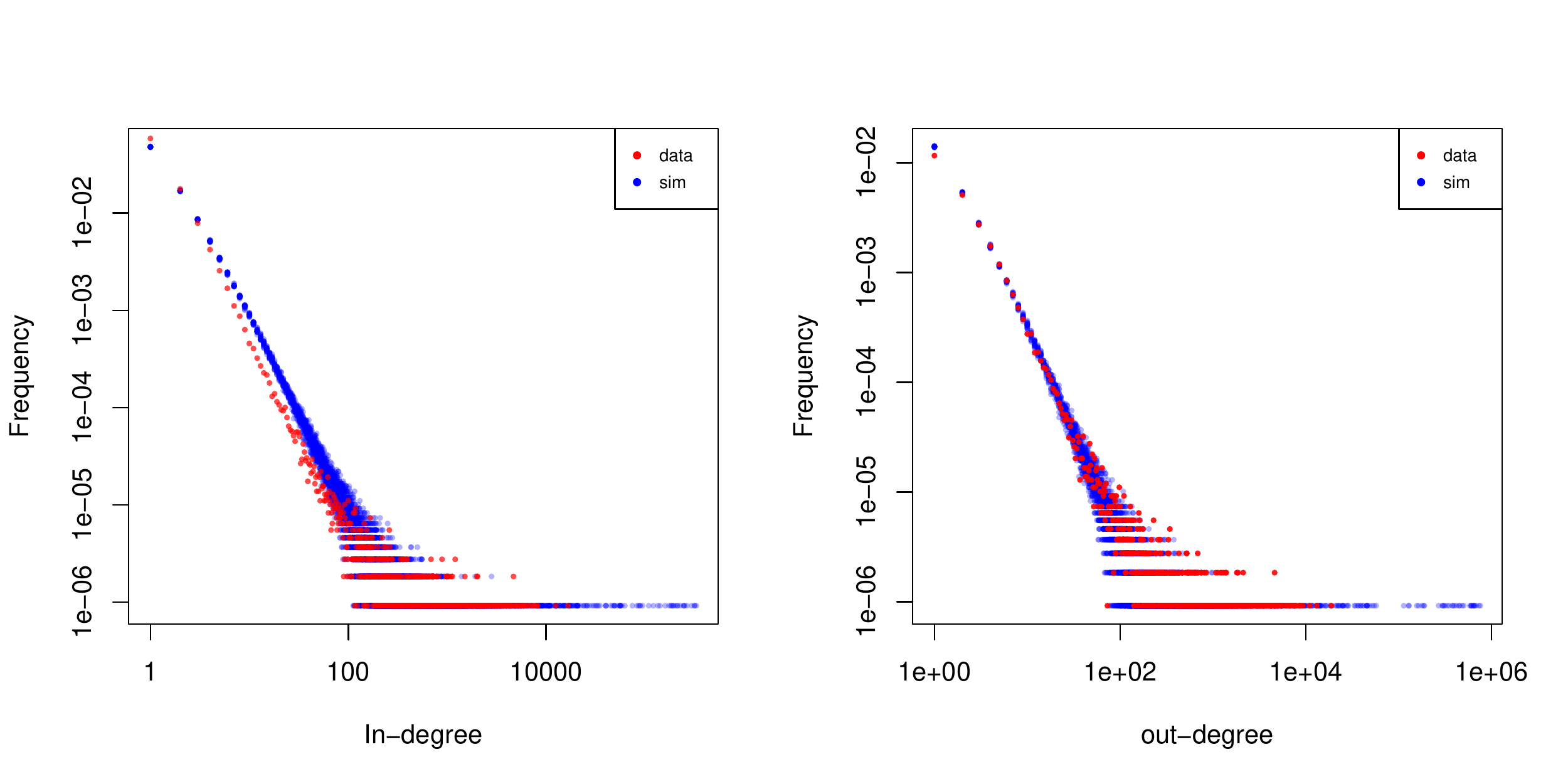}
\caption{{Empirical in- and out-degree frequencies of the reduced Wiki talk network (red) and that from 20 realizations of the linear preferential attachment network with fitted parameter values \eqref{eq:wikiparAfter} from MLE (blue). }}
\label{fig:wikiDegConst}
\end{figure}

{
Next we consider the estimation method of Section~\ref{OneSnapshot} applied to a single snapshot of the data.  In order to implement this procedure, we donned blinders and assumed that our dataset consists only of the information of the wiki data at the last timestamp.  That is, information about administrative broadcasts, and other aspects of the data learned by looking at the previous history of the data are unavailable.  In particular, we would have no knowledge of the existence of the two additional scenarios corresponding to $J_n=4, 5$.  With this in mind, we fit the three scenario model using the methods in Section~\ref{OneSnapshot}.  The fitted parameters are
 \beqq\label{eq:wikiparSnap}
(\tilde\alpha,\tilde\beta,\tilde\gamma,\tildedin,\tildedout)= 
(5.80\times10^{-4}, 8.55\times10^{-1}, 1.45\times10^{-1},0.199, 0.165 ).
\eeqq
The comparison of the degree distributions between the data and simulations from the fitted model is displayed in Figure~\ref{fig:wikiSnap} and is not too dissimilar to the plots in Figure~\ref{fig:wikiDegConst0} that are based on maximum likelihood estimation using the full network data.  In particular, the out-degree distribution is matched reasonably well, but the fitted model does a poor job of {capturing} the in-degree distribution.
}

\begin{figure}[t]
\includegraphics[width=5in]{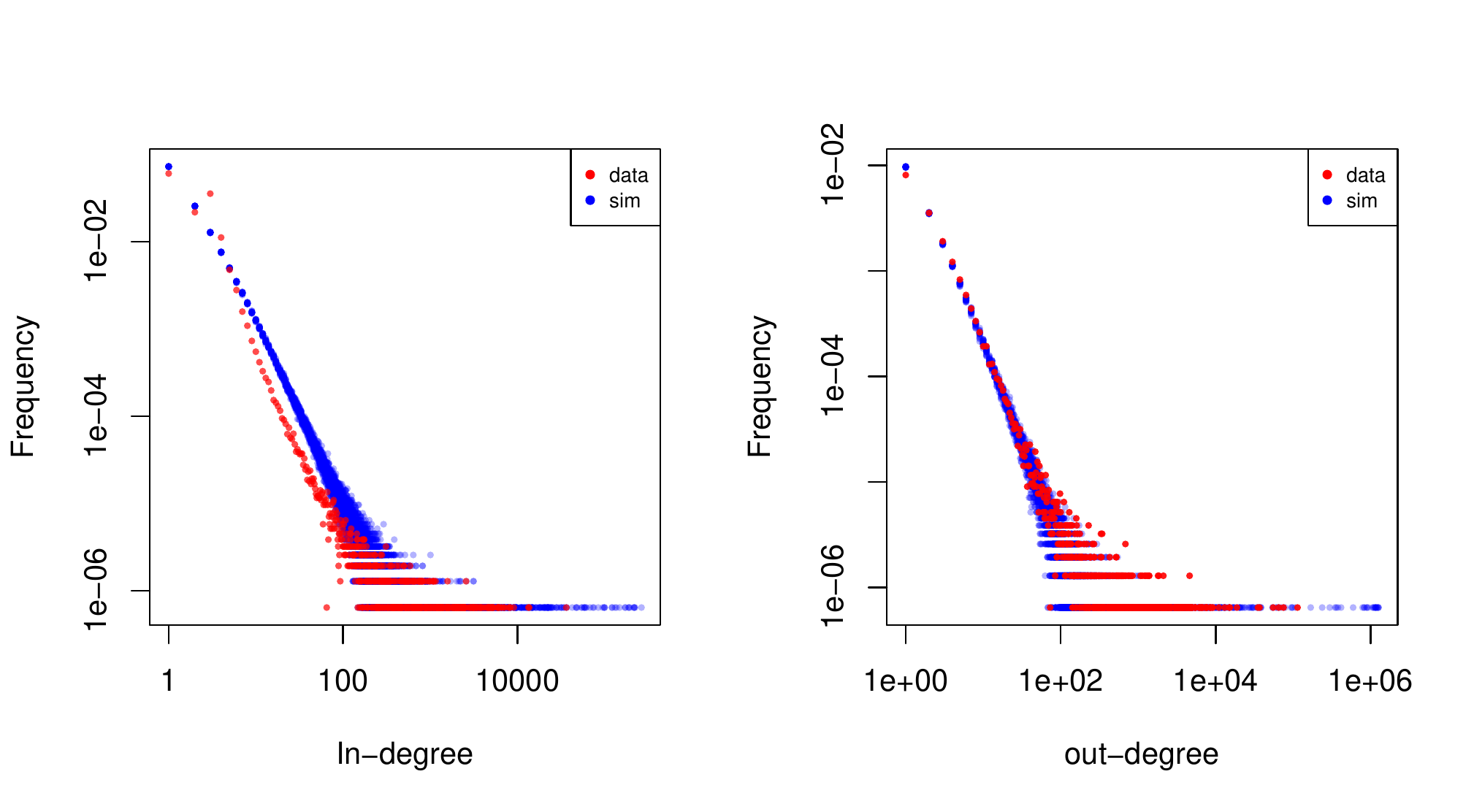}
\caption{{Empirical in- and out-degree frequencies of the full Wiki talk network (red) and that from 20 realizations of the linear preferential attachment network with fitted parameter values \eqref{eq:wikiparSnap} from the snapshot estimator (blue). }}
\label{fig:wikiSnap}
\end{figure}

We see from this example that while the linear preferential attachment model {is} perhaps too simplistic for
the Wiki talk network dataset, it has the ability to
{illuminate} some gross features, such as the out-degrees, as well as to
capture important structural changes such as the group message
behavior. {Consequently, despite its limitation, this model} may be used as a building block for more flexible models. Modification to the existing model formulation and
more careful analysis of change points in parameters is
 a direction for future research.

\section{Acknowledgement}
Research of the four authors was partially supported by Army MURI grant W911NF-12-1-0385. Don Towsley from University of Massachusetts introduced us to the model
and within his group, James Atwood graciously supplied us with a
simulation algorithm designed for a class of growth models broader
than the one specified in Section \ref{subsec:linpref}{; this later became \cite{atwood:2015}}. Joyjit Roy,
formerly of Cornell, created an efficient algorithm designed to
capitalize on the linear growth structure.
{Finally, we appreciate the many  helpful and sensible comments
of the referees and editors.}
\bibliography{./bibfile}


\appendix
\section{Proofs}
\label{sec:proofs}
\subsection{For the proof of Theorem~\ref{thm:consistency}: Lemmas~\ref{phi} and \ref{unifconv}}\label{subsec:proof1}
\begin{Lemma}\label{phi}
For ${ \lambda >0}$, the function $\psi(\lambda)$ in
\eqref{defpsi}
 has a unique zero at $\din$ and, $\psi(\lambda)>0$ when $\lambda <\din$ and $\psi(\lambda)<0$ when $\lambda >\din$.
\end{Lemma}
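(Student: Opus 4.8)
The plan is to prove the lemma in two stages: first show that $\din$ is a zero of $\psi$, and then show it is the \emph{only} zero while reading off the sign of $\psi$ on each side. For the first stage I would exploit the balance (recursion) equation \eqref{rec_pin1} for the limiting in-degree sequence $\{\pin_i(\din)\}$. Writing $a_1:=a_1(\din)$ and introducing the ``flux'' $b_i:=a_1(i+\din)\pin_i$, the recursion becomes the telescoping relation $\pin_i+b_i=b_{i-1}+\gamma\mathbf1_{\{i=1\}}$ for $i\ge1$, together with $\pin_0+b_0=\alpha$, which is exactly \eqref{alpha-p0}. Since $\sum_i i\,\pin_i=1<\infty$ forces $i\,\pin_i\to0$ and hence $b_i\to0$, summing the relation from $k=i+1$ to $\infty$ telescopes to the key identity $\pin_{>i}=b_i=a_1(i+\din)\pin_i$ for every $i\ge1$, while $\pin_{>0}=b_0+\gamma$.

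Feeding these into $\psi(\din)=\sum_{i\ge0}\pin_{>i}/(i+\din)-\gamma/\din-(1-\beta)a_1$ collapses the tail because $b_i/(i+\din)=a_1\pin_i$, so $\sum_{i\ge1}\pin_{>i}/(i+\din)=a_1\sum_{i\ge1}\pin_i=a_1\pin_{>0}$. Together with the $i=0$ term this gives $\sum_{i\ge0}\pin_{>i}/(i+\din)=\pin_{>0}(\din^{-1}+a_1)$. Using $\pin_{>0}=(1-\beta)-\pin_0$ and $1-\beta-\gamma=\alpha$, the expression simplifies to $\psi(\din)=\bigl(\alpha-\pin_0(1+a_1\din)\bigr)/\din$, which vanishes by \eqref{alpha-p0}. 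Hence $\psi(\din)=0$.

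For the second stage I would prove that at \emph{every} zero $\lambda_0$ of $\psi$ one has $\psi'(\lambda_0)<0$. This alone yields uniqueness and the sign pattern: a continuous function whose zeros are all strict down-crossings can have at most one zero, since two consecutive zeros would make $\psi$ negative just to the right of the first and positive just to the left of the second, contradicting the constant sign of $\psi$ on the gap between consecutive zeros. To obtain the strict inequality, view $\psi(\lambda)=\int(x+\lambda)^{-1}\,d\mu(x)$, where $\mu$ carries positive atoms $\pin_{>i}$ at $x=i$ $(i\ge0)$ and negative atoms $\gamma$ at $x=0$ and $\alpha+\beta$ at $x^*:=1/(1-\beta)$. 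Under $t=(x+\lambda_0)^{-1}$ the pushforward $\tilde\mu$ has total mass $\int d\tilde\mu=\mu(\mathbb R)=\sum_i\pin_{>i}-\gamma-(\alpha+\beta)=1-1=0$, its first moment $\int t\,d\tilde\mu=\psi(\lambda_0)=0$, and $\psi'(\lambda_0)=-\int t^2\,d\tilde\mu$. Because the net atom at $x=0$ equals $b_0=a_1\din\pin_0>0$, the measure $\tilde\mu$ must carry a negative atom, and the only possible site is $x^*$; as $x^*>1$ places $t=1/(\lambda_0+x^*)$ strictly interior to the accumulating support $\{1/(i+\lambda_0)\}$, the sign pattern of $\tilde\mu$ in increasing $t$ is exactly $+,-,+$. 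Choosing a quadratic $q(t)=(t-a)(t-b)$ with roots $a<b$ lying in the two sign-change gaps makes $q$ agree in sign with $\tilde\mu$ at every atom, so $\int q\,d\tilde\mu>0$; since $\int d\tilde\mu=\int t\,d\tilde\mu=0$ this integral equals $\int t^2\,d\tilde\mu$, giving $\int t^2\,d\tilde\mu>0$ and hence $\psi'(\lambda_0)<0$.

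Combining the stages, $\din$ is the unique zero of $\psi$, so $\psi$ is of one sign on $(0,\din)$ and on $(\din,\infty)$; as $\din$ is a down-crossing, $\psi>0$ on the former and $\psi<0$ on the latter, which is the claim. As a consistency check, $\psi'(\din)=-\int t^2\,d\tilde\mu$ reproduces $-I_\text{in}$ of \eqref{I_in}, in line with $I_\text{in}>0$. The main obstacle is precisely this uniqueness: $\psi$ is a decreasing Stieltjes term plus two increasing terms, so it need not be globally monotone, and its behavior as $\lambda\to\infty$ is controlled by the (possibly infinite) second moment of the in-degree law. The variation-diminishing / quadratic-interpolation argument is attractive because it sidesteps that delicate limit entirely by working locally at each zero; the crux is getting the sign pattern of $\tilde\mu$ and the two equal-total-mass normalizations ($\sum_i\pin_{>i}=1$ and $\gamma+\alpha+\beta=1$) exactly right.
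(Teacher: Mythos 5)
Your proof is correct, but it takes a genuinely different route from the paper's. Both arguments hinge on the same key identity---your telescoped ``flux'' relation $\pin_{>i}=a_1(\din)(i+\din)\pin_i(\din)+\gamma\ind_{\{i=0\}}$ is exactly the paper's \eqref{sumb}---and on the moment identities $\sum_i\pin_i(\din)=1-\beta$ and $\sum_i i\,\pin_i(\din)=1$ (note that your telescoping needs $i\,\pin_i(\din)\to0$, which, as in the paper's justification of \eqref{ipi_limit}, comes from the power law \eqref{asyI} with index \eqref{c1} greater than $2$). From there the paper proceeds globally: it substitutes \eqref{sumb} into $\psi(\lambda)$, writes $a_1(\din)(i+\din)-a_1(\lambda)(i+\lambda)=\int_\lambda^{\din}\partial_s\bigl(a_1(s)(i+s)\bigr)\,ds$, and obtains the factorization $\psi(\lambda)=C(\lambda)\int_\lambda^{\din}\frac{\alpha+\beta}{(1+s(1-\beta))^2}\,ds$ with $C(\lambda)=\sum_i\frac{\pin_i(\din)}{i+\lambda}\bigl(1-i(1-\beta)\bigr)>0$, so the zero, its uniqueness, and the sign pattern all drop out at once. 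You instead verify $\psi(\din)=0$ pointwise and then argue locally at an arbitrary zero $\lambda_0$: the zero-total-mass signed-measure representation plus a quadratic matched to the $+,-,+$ sign pattern gives $\psi'(\lambda_0)=-\int t^2\,d\tilde\mu<0$, and uniqueness and the sign pattern follow from the down-crossing argument. The two positivity arguments are in fact cousins: the paper exploits the single sign change of $i\mapsto 1-i(1-\beta)$ at $i=1/(1-\beta)$ by comparing against the constant interpolant $1/((1-\beta)^{-1}+\lambda)$, while you exploit the two sign changes of the measure with a degree-two interpolant. Your route additionally buys the identity $\psi'(\din)=-I_\text{in}$, tying the lemma directly to the Fisher information in \eqref{I_in}, a connection the paper only establishes later (Lemma \ref{normality_lemma3}).

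Two edge cases in your sign-pattern step deserve a one-line patch. If $\alpha=0$ then $\pin_0=0$, so your net atom $b_0$ at $x=0$ vanishes rather than being strictly positive; and if $1/(1-\beta)$ is an integer (in particular $\beta=0$, where $x^*=1$), the negative mass $-(\alpha+\beta)$ merges with the positive atom $\pin_{>x^*}$ at the same site, so ``the negative atom sits at $x^*$ strictly interior to the lattice'' is not literally true. Neither breaks the proof: since $\pin_i(\din)>0$ for all $i\ge1$ (which holds because $\alpha\din+\gamma>0$ whenever $\beta<1$), all atoms other than the one at $x^*$ are nonnegative and not all zero, so the zero-total-mass condition forces the (possibly merged) atom at $x^*$ to be strictly negative, and the $+,-,+$ block pattern persists with nonempty positive blocks on both sides. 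State the pattern via this total-mass argument rather than via the claims $b_0>0$ and $x^*>1$, and the proof is complete as written.
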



\begin{proof}
The probabilities $\{\pin_i(\lambda)\}$ satisfy the recursions in $i$ (cf.\ \cite{bollobas:borgs:chayes:riordan:2003}):
\begin{subequations}
\label{rec_pin} 
\begin{align}
\pin_0(\lambda) \left(\lambda + \frac{1}{a_1(\lambda)}\right) &= \frac{\alpha}{a_1(\lambda)},\label{rec_pin1}\\
\pin_1(\lambda) \left(1+\lambda + \frac{1}{a_1(\lambda)}\right) &= \lambda\pin_0(\lambda)+\frac{\gamma}{a_1(\lambda)},\\
\pin_2(\lambda) \left(2+\lambda + \frac{1}{a_1(\lambda)}\right) &= (1+\lambda)\pin_1(\lambda),\\
\vdots &\nonumber\\
\pin_i(\lambda) \left(i+\lambda + \frac{1}{a_1(\lambda)}\right) &=
                                                                  (i-1+\lambda)\pin_{i-1}(\lambda),
                                                                  \quad
                                                                 { (i\geq
                                                                  2),}
\end{align}
\end{subequations}
where $a_1(\lambda) := (\alpha + \beta) / (1+\lambda(1-\beta))$.
Summing {the recursions in \eqref{rec_pin}} from $0$ to $i$, we get (with the convention that $\sum_{i=0}^{-1}=0$)
$$
\sum_{k=0}^i \pin_k(\lambda)\left(k+\lambda + \frac{1}{a_1(\lambda)}\right)  = \sum_{k=0}^{i-1} (k+\lambda)\pin_{k}(\lambda) + \frac{\alpha}{a_1(\lambda)} + \frac{\gamma}{a_1(\lambda)}{\ind_{\{i\ge 1\}}}, \quad i\ge 0,
$$
which {can be simplified to}
\beqq\label{sum}
\frac{1}{a_1(\lambda)}\sum_{k=0}^i \pin_k(\lambda) + (i+\lambda)\pin_i(\lambda) = \frac{1-\beta}{a_1(\lambda)}-\frac{\gamma}{a_1(\lambda)}\ind_{\{i=0\}}, \quad i\ge 0.
\eeqq
From \eqref{pij},
\beqq \label{pisum}
\sum_{i=0}^\infty \pin_i(\lambda) = \sum_{i,j} p_{ij}(\lambda) = 1-\beta.
\eeqq
Hence by rearranging \eqref{sum}, we have
$$
(i+\lambda)\pin_i(\lambda) +\frac{\gamma}{a_1(\lambda)}\ind_{\{i=0\}} =\frac{1}{a_1(\lambda)}\left(1-\beta-\sum_{k=0}^i \pin_k(\lambda)\right) = \frac{1}{a_1(\lambda)}\pin_{>i}(\lambda),
$$
or equivalently,
\beqq\label{sumb}
\pin_{>i}(\lambda) = a_1(\lambda)(i+\lambda)\pin_i(\lambda) +\gamma\ind_{\{i=0\}}.
\eeqq
{Now with the help of \eqref{pisum} and \eqref{sumb},}
we can rewrite $\psi(\lambda)$ {in the following way:}
\begin{align}
\psi(\lambda) =&\ \sum_{i=0}^\infty\frac{\pin_{>i}(\din)}{i+\lambda} -\frac{\gamma}{\lambda} - (1-\beta)a_1(\lambda) \nonumber \\
=&\ \sum_{i=0}^\infty\frac{\pin_{>i}(\din)}{i+\lambda} -\frac{\gamma}{\lambda}-\sum_{i=0}^\infty \frac{\pin_i(\din)a_1(\lambda)(i+\lambda)}{i+\lambda} \nonumber \\
=&\ \sum_{i=0}^\infty\frac{a_1(\din)(i+\din)\pin_i(\din) +\gamma\ind_{\{i=0\}}}{i+\lambda} -\frac{\gamma}{\lambda}-\sum_{i=0}^\infty \frac{\pin_i(\din)a_1(\lambda)(i+\lambda)}{i+\lambda} \nonumber \\
 =&\ \sum_{i=0}^\infty\frac{ \pin_i(\din)}{i+\lambda}
    \Bigl(a_1(\din)(i+\din)- a_1(\lambda)(i+\lambda)\Bigr)
    \nonumber\\ 
 =&\ \sum_{i=0}^\infty\frac{\pin_i(\din) }{i+\lambda}  \int_\lambda^{\din} \frac{\partial}{\partial s} \Bigl(a_1(s)(i+s)\Bigr)ds  \nonumber\\
 =&\ \sum_{i=0}^\infty\frac{\pin_i(\din) }{i+\lambda}  \int_\lambda^{\din} \frac{(\alpha+\beta)(1-i(1-\beta))}{(1+s(1-\beta))^2} ds \nonumber\\
 =&\ \left(\sum_{i=0}^\infty\frac{\pin_i(\din)
    }{i+\lambda}(1-i(1-\beta))\right)  \int_\lambda^{\din}
    \frac{\alpha+\beta}{(1+s(1-\beta))^2} ds \nonumber \\
 =:&\ C(\lambda) \int_\lambda^{\din}
     \frac{\alpha+\beta}{(1+s(1-\beta))^2} ds. \label{extraGoodie}
\end{align}

{The series defining} $C(\lambda)$ converges absolutely for any $\lambda>0$ since
\begin{align*}
\sum_{i=0}^\infty\left|\frac{\pin_i(\din) }{i+\lambda}(1-i(1-\beta))\right| <\sum_{i=0}^\infty\pin_i(\din)\left|\frac{ i(1-\beta)}{i+\lambda} + \frac{1}{i+\lambda}\right| < (1-\beta) (1-\beta+\frac{1}{\lambda}) < \infty.
\end{align*}
{Summing over $i$ in $\eqref{sumb}$, we get by monotone convergence}
$$
\sum_{i=0}^\infty \pin_{>i}(\lambda) =\sum_{i=0}^\infty i\pin_i(\lambda)= a_1(\lambda) \sum_{i = 0}^\infty i\pin_i(\lambda) + a_1(\lambda) \lambda \sum_{i = 0}^\infty \pin_i(\lambda) +\gamma.
$$
{The infinite series converge because $\pin_i(\lambda)$ is a power law
with index greater than 2; see \eqref{asyI} and \eqref{c1}. Solving for the infinite
series we get}
\beqq\label{ipi_limit}
\sum_{i=0}^\infty i\pin_{i}(\lambda) = \frac{a_1(\lambda)\lambda}{1-a_1(\lambda)} (1-\beta) +\frac{\gamma}{1-a_1(\lambda)} = 1.
\eeqq
Hence we have
\begin{align*}
C(\lambda) =&\  \sum_{i\le (1-\beta)^{-1}}\frac{\pin_i(\din) }{i+\lambda}(1-i(1-\beta)) -  \sum_{i > (1-\beta)^{-1}}\frac{\pin_i(\din) }{i+\lambda}(i(1-\beta)-1)\\
>&\ \sum_{i=0}^\infty\frac{\pin_i(\din) }{(1-\beta)^{-1}+\lambda}(1-i(1-\beta)) \\
=&\ \frac{1}{(1-\beta)^{-1}+\lambda}\sum_{i=0}^\infty \pin_i(\din) -\frac{1-\beta}{(1-\beta)^{-1}+\lambda}\sum_{i=0}^\infty i\pin_i(\din) \\
=&\ \frac{1}{(1-\beta)^{-1}+\lambda}(1-\beta) - \frac{1-\beta}{(1-\beta)^{-1}+\lambda} 1 \\
=&\ 0.
\end{align*}

Now recall from \eqref{extraGoodie} that $\psi(\lambda)$ is of the form
$$
\psi(\lambda) =  C(\lambda)\int_\lambda^{\din} \frac{\alpha+\beta}{(1+s(1-\beta))^2} ds,
$$
where $C(\lambda)>0$ for all ${\lambda >0}$. Therefore $\psi(\cdot)$ has a unique zero at $\din$ and $\psi(\lambda)>0$ when $\lambda <\din$ and $\psi(\lambda)<0$ when $\lambda >\din$.\end{proof}

We show the uniform convergence {of}
 $\psi_n$ to $\psi$ in the next lemma.

\begin{Lemma}\label{unifconv}
As $n\to\infty$, for any $\epsilon >0$,
$$
\sup_{\lambda \geq \epsilon}|\psi_n(\lambda)-\psi(\lambda)|\convas 0.
$$
\end{Lemma}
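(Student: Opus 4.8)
The plan is to compare $\psi_n$ in \eqref{psin} with $\psi$ in \eqref{defpsi} term by term, exploiting that each of the three summands is a \emph{monotone} function of $\lambda$; the first two can be controlled uniformly over the whole half-line $\lambda\ge\epsilon$, while the third needs a law of large numbers for a predictably weighted sum followed by a compact-versus-tail decomposition. For the first term, observe that under the convention $n_0=0$ the weights are a probability vector: $\sum_{i\ge0}\Nin_{>i}(n)=\sum_{k\ge1}k\,\Nin_k(n)$ is the total in-degree, which equals $n$, and likewise $\sum_{i\ge0}\pin_{>i}(\din)=\sum_i i\,\pin_i(\din)=1$ by \eqref{ipi_limit}. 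Combining \eqref{pij} with $N(n)/n\convas 1-\beta$ gives $\Nin_{>i}(n)/n\convas\pin_{>i}(\din)$ for each fixed $i$, whence Scheff\'e's lemma yields $\sum_i|\Nin_{>i}(n)/n-\pin_{>i}(\din)|\convas0$ and therefore
\[
\sup_{\lambda\ge\epsilon}\Bigl|\sum_{i\ge0}\frac{\Nin_{>i}(n)/n-\pin_{>i}(\din)}{i+\lambda}\Bigr|
\le\frac1\epsilon\sum_{i\ge0}\bigl|\Nin_{>i}(n)/n-\pin_{>i}(\din)\bigr|\convas0 .
\]
The second term is immediate: the strong law for the i.i.d.\ sequence $\{J_t\}$ gives $n^{-1}\sum_t\ind_{\{J_t=3\}}\convas\gamma$, and dividing by $\lambda\ge\epsilon$ makes the gap with $\gamma/\lambda$ uniformly negligible.

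The substance is the third term $T_n(\lambda):=n^{-1}\sum_t \frac{N(t-1)}{t-1+\lambda N(t-1)}\ind_{\{J_t\in\{1,2\}\}}$, whose target is $(1-\beta)a_1(\lambda)$. Setting $r_t:=N(t-1)/(t-1)\convas 1-\beta$ and $g_t(\lambda)=r_t/(1+\lambda r_t)$, I would write $\ind_{\{J_t\in\{1,2\}\}}=(\alpha+\beta)+M_t$, where $M_t:=\ind_{\{J_t\in\{1,2\}\}}-(\alpha+\beta)$ is a bounded martingale-difference sequence because $J_t$ is independent of $\mathcal F_{t-1}$. The Ces\`aro part $(\alpha+\beta)\,n^{-1}\sum_t g_t(\lambda)$ tends to $(\alpha+\beta)\frac{1-\beta}{1+\lambda(1-\beta)}=(1-\beta)a_1(\lambda)$, and since $|\partial_r\bigl(r/(1+\lambda r)\bigr)|=(1+\lambda r)^{-2}\le1$ one has $|g_t(\lambda)-\tfrac{1-\beta}{1+\lambda(1-\beta)}|\le|r_t-(1-\beta)|$ \emph{uniformly} in $\lambda$, so this piece converges uniformly on $[\epsilon,\infty)$. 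For the martingale part $n^{-1}\sum_t g_t(\lambda)M_t$, the predictable weights obey $0\le g_t(\lambda)\le1/\epsilon$, so $\sum_t t^{-2}\,\EE[(g_t(\lambda)M_t)^2\mid\mathcal F_{t-1}]<\infty$; hence the martingale $\sum_t t^{-1}g_t(\lambda)M_t$ converges a.s.\ and Kronecker's lemma forces $n^{-1}\sum_t g_t(\lambda)M_t\convas0$ at each fixed $\lambda$. Together these give $T_n(\lambda)\convas(1-\beta)a_1(\lambda)$ pointwise.

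To promote this to a uniform statement I would invoke monotonicity: $T_n(\cdot)$ is decreasing in $\lambda$ (each summand is) and its limit $(1-\beta)a_1(\cdot)$ is continuous, so by P\'olya's theorem pointwise convergence of monotone functions to a continuous limit is automatically uniform on the compact interval $[\epsilon,K]$. Since this only requires convergence on a finite grid, the almost-sure exceptional sets can be pooled into one probability-one event. On the tail $\lambda\ge K$ both $T_n(\lambda)\le1/\lambda\le1/K$ and $(1-\beta)a_1(\lambda)\le1/K$, so their difference is at most $2/K$ there, and the same $1/K$ tail bounds hold for the first two summands (using $\sum_i\Nin_{>i}(n)/n=1$ and $\sum_i\pin_{>i}(\din)=1$). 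Adding the three contributions and letting $K\to\infty$ delivers $\sup_{\lambda\ge\epsilon}|\psi_n(\lambda)-\psi(\lambda)|\convas0$.

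I expect the martingale law of large numbers for $T_n$ to be the crux of the argument: it is precisely where independence of $J_t$ from the current graph $G(t-1)$ enters, via summability of the conditional variances. The monotonicity/P\'olya upgrade and the $1/K$ tail estimates are then routine, and the only technical care needed is to note that all the almost-sure inputs (Scheff\'e convergence and convergence on the finite grid) are countably many, so a single probability-one event underlies the entire uniform conclusion.
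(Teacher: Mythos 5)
Your proof is correct, but it takes a genuinely different route from the paper's on two of the three terms. For the first term, you invoke Scheff\'e's lemma, exploiting the exact identities $\sum_{i}\Nin_{>i}(n)/n=1$ and $\sum_i \pin_{>i}=1$ to upgrade pointwise a.s.\ convergence to $\ell^1$ convergence of the weights; the paper instead truncates at a level $M$, using the bound $\Nin_{>i}(n)/n\le 1/i$ and the summability of $\pin_{>i}/(i+\epsilon)$ to make the tails uniformly small before sending $M\to\infty$. Your route is the slicker of the two here. For the third term, you center the indicator $\ind_{\{J_t\in\{1,2\}\}}$ around $\alpha+\beta$, which forces you into a martingale strong law (summable conditional variances plus Kronecker's lemma) for the predictably weighted sum, and then you need P\'olya's monotonicity theorem on $[\epsilon,K]$ together with a $2/K$ tail bound to recover uniformity. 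The paper does the opposite centering: it subtracts the limit $(1-\beta)/(1+\lambda(1-\beta))$ from the random weight $N(t-1)/(t-1+\lambda N(t-1))$, bounds the indicator crudely by $1$, and then needs only Ces\`aro convergence (from $N(t)/t\convas 1-\beta$) plus the classical SLLN for the iid $\{J_t\}$; uniformity in $\lambda\ge\epsilon$ is immediate because every bound is already monotone in $\lambda$. In particular, your closing prediction that the martingale LLN---and hence the independence of $J_t$ from $G(t-1)$---is the crux is not borne out: the paper's proof uses no martingale theory at all for this lemma, and the P\'olya/tail apparatus becomes unnecessary. What your approach buys is robustness (Scheff\'e and the weighted martingale SLLN would survive in settings where the scheme probabilities genuinely depend on the past graph); what the paper's buys is elementarity, with uniform control falling out of the triangle inequality for free.
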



\begin{proof}
By the definition of $\psi$, $\pin_{>i}(\din)$ {is a function of $\din$ and is a constant with respect to $\lambda$}. Hence we {suppress the dependence {on $\din$} and simply write it as $\pin_{>i}$} when considering the difference $\psi_n-\psi$ as a function of $\lambda$:
\begin{align*}
\psi_n(\lambda) -\psi(\lambda) 
=&   \sum_{i=0}^\infty \frac{\Nin_{>i}(n)/n-\pin_{>i}}{i+\lambda}
- \frac{1}{\lambda}\left(\frac{1}{n}\sum_{t=1}^n \ind_{\{J_t=3\}}-(1-\alpha-\beta)\right)\\
&- \frac{1}{n}\sum_{t=1}^n \left(\frac{N(t-1)}{t-1+\lambda N(t-1)}\ind_{\{J_t\in\lbrace 1,2 \rbrace\}}-\frac{(1-\beta)(\alpha+\beta)}{1+\lambda(1-\beta)}\right).
\end{align*}
Thus,
\begin{align}\label{diff}
\sup_{\lambda{\ge\epsilon}}|\psi_n(\lambda)-\psi(\lambda)|
\le & \sup_{\lambda{\ge\epsilon}}\sum_{i=0}^\infty \frac{\left\vert\Nin_{>i}(n)/n-\pin_{>i}\right\vert}{i+\lambda}
+ \sup_{\lambda{\ge\epsilon}}\frac{1}{\lambda}\left|\frac{1}{n}\sum_{t=1}^n \ind_{\{J_t=3\}}-(1-\alpha-\beta)\right|\nonumber\\
&+ \sup_{\lambda{\ge\epsilon}}\left\vert\frac{1}{n}\sum_{t=1}^n \frac{N(t-1)}{t-1+\lambda N(t-1)}\ind_{\{J_t\in\lbrace 1,2 \rbrace\}}-\frac{(1-\beta)(\alpha+\beta)}{1+\lambda(1-\beta)}\right\vert.
\end{align}

For the first term, note that for all $i\ge 0$, 
\[
i\Nin_{>i}(n) = \sum_{k=i+1}^\infty \Nin_k(n) i \le \sum_{k=1}^\infty k\Nin_k(n) = n,
\]
since the assumption on initial conditions implies the sum of in-degrees at $n$ is $n$. Therefore $\Nin_{>i}(n)/n\le i^{-1}$ for $i\ge 1$,
and it then follows that
\[
\sum_{i=0}^\infty \frac{\left\vert\Nin_{>i}(n)/n-\pin_{>i}\right\vert}{i+\lambda}
\le \sum_{i=0}^M \frac{\left\vert\Nin_{>i}(n)/n-\pin_{>i}\right\vert}{i+\lambda}
+ \sum_{i=M+1}^\infty \frac{1/i}{i+\lambda} + \sum_{i=M+1}^\infty\frac{\pin_{>i}}{i+\lambda}.
\]
Note that the last two terms on the right side
can be made arbitrarily small uniformly on $[\epsilon,{\infty)}$ if we choose $M$ sufficiently large.

Recall the convergence of the degree distribution $\{N_{ij}(n)/N(n)\}$ to the probability distribution $\{f_{ij}\}$ in \eqref{pij}, we have
\beqq \label{eq:Nin_over}
\frac{\Nin_{>i}(n)}{n} = \frac{N(n)}{n}\ \frac{\Nin_{>i}(n)}{N(n)} \ \convas\  (1-\beta)\sum_{l\ge0,k>i}f_{kl}  = \pin_{>i},\quad \forall i\ge 0.
\eeqq
Hence, for any fixed $M$, 
\[
\sum_{i=0}^M \frac{\left\vert\Nin_{>i}(n)/n-\pin_{>i}\right\vert}{i+\epsilon}\convas 0,\quad \mbox{as }n\to\infty.
\]
which implies further that choosing $M$ arbitrarily large gives
\[
\sup_{\lambda{\ge\epsilon}}\sum_{i=0}^\infty \frac{\left\vert\Nin_{>i}(n)/n-\pin_{>i}\right\vert}{i+\lambda}
\le \sum_{i=0}^M \frac{\left\vert\Nin_{>i}(n)/n-\pin_{>i}\right\vert}{i+\epsilon}
+ \sum_{i=M+1}^\infty \frac{1/i}{i+\epsilon} 
+ \sum_{i=M+1}^\infty\frac{\pin_{>i}}{i+\epsilon}\stackrel{\text{a.s.}}{\longrightarrow} 0.
\]

The second term in \eqref{diff} converges to 0 almost surely by strong law of large numbers, and
the third term in \eqref{diff} can be written as
\begin{align*}
\left\vert \frac{1}{n}\sum_{t=1}^n \right. & \left. \left(\frac{N(t-1)}{t-1+\lambda N(t-1)}-\frac{(1-\beta)}{1+\lambda(1-\beta)}\right)\ind_{\{J_t\in\lbrace 1,2 \rbrace\}}\right.\\
&\left.+\frac{1-\beta}{1+\lambda(1-\beta)}\frac{1}{n}\sum_{t=1}^n\left(\ind_{\{J_t\in\lbrace 1,2 \rbrace\}}-(\alpha+\beta)\right)\right\vert,
\end{align*}
which is bounded by 
\[
\left\vert\frac{1}{n}\sum_{t=1}^n \frac{N(t-1)}{t-1+\lambda N(t-1)}-\frac{(1-\beta)}{1+\lambda(1-\beta)}\right\vert
+ \frac{1-\beta}{1+\lambda(1-\beta)}\left\vert\frac{1}{n}\sum_{t=1}^n\ind_{\{J_t\in\lbrace 1,2 \rbrace\}}-(\alpha+\beta)\right\vert.
\]
We have
\begin{align*}
 \sup_{\lambda{\ge\epsilon}}\Bigl\vert\frac{1}{n}\sum_{t=1}^n &\frac{N(t-1)}{t-1+\lambda N(t-1)}-\frac{(1-\beta)}{1+\lambda(1-\beta)}\Bigr\vert \\
&= \sup_{\lambda{\ge\epsilon}}\left\vert \frac{1}{n}\sum_{t=1}^n\frac{N(t-1)/(t-1)-(1-\beta)}{(1+\lambda N(t-1)/(t-1))(1+\lambda(1-\beta))} \right\vert\\
&\le \frac{1}{n}\sum_{t=1}^n\left\vert \frac{N(t-1)/(t-1)-(1-\beta)}{(1+\epsilon N(t-1)/(t-1))(1+\epsilon(1-\beta))}\right\vert,
\end{align*}
which converges to 0 almost surely by Ces\`aro convergence of random variables, since
\[
\left\vert \frac{N(n)/n-(1-\beta)}{(1+\epsilon N(n)/n)(1+\epsilon(1-\beta))}\right\vert\convas 0,\, \mbox{ as }n\to\infty.
\]
Further, by the strong law of large numbers,
\begin{align*}
\sup_{\lambda{\ge\epsilon}}\ & \frac{1-\beta}{1+\lambda(1-\beta)}
\left\vert\frac{1}{n}\sum_{t=1}^n\ind_{\{J_t\in\lbrace 1,2 \rbrace\}}-(\alpha+\beta)\right\vert \\
&\le \frac{1-\beta}{1+\epsilon(1-\beta)}\left\vert\frac{1}{n}\sum_{t=1}^n\ind_{\{J_t\in\lbrace 1,2 \rbrace\}}-(\alpha+\beta)\right\vert
\convas 0, \, \mbox{ as }n\to\infty.
\end{align*}
Hence the third term of \eqref{diff} also goes to 0 almost surely as $n\to\infty$. The result of the lemma follows.
\end{proof}

\subsection{For the proof of Theorem~\ref{asymp_normality}: Lemmas~\ref{normality_lemma2} and \ref{normality_lemma3}}\label{subsec:proof2}

\begin{Lemma} \label{normality_lemma2}
As $n\to\infty$,
\beqq \label{score_conv}
n^{-1/2}  \sum_{t=1}^n u_t(\din) \convd N(0,I_\text{in}).
\eeqq
\end{Lemma}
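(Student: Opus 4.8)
The plan is to view the partial sums of $u_t(\din)$ as a martingale and invoke a martingale central limit theorem. Let $\mathcal{F}_t:=\sigma\bigl(G(n_0),e_{n_0+1},\dots,e_t\bigr)$ be the natural filtration, with respect to which $N(t-1)$, all degrees at time $t-1$, and $J_t$ are measurable. First I would verify that $\{u_t(\din),\mathcal{F}_t\}$ is a martingale difference sequence, which is the probabilistic expression of the score being centered at the true parameter. Since under either scheme $J_t\in\{1,2\}$ the incoming endpoint $\vend_t$ is drawn with probability $(\Din^{(t-1)}(w)+\din)/(t-1+\din N(t-1))$,
\begin{align*}
\EE\!\left[\frac{\ind_{\{J_t\in\{1,2\}\}}}{\Din^{(t-1)}(\vend_t)+\din}\,\Big|\,\mathcal{F}_{t-1}\right]
&=(\alpha+\beta)\sum_{w\in V(t-1)}\frac{1}{\Din^{(t-1)}(w)+\din}\,\frac{\Din^{(t-1)}(w)+\din}{t-1+\din N(t-1)}\\
&=(\alpha+\beta)\frac{N(t-1)}{t-1+\din N(t-1)},
\end{align*}
which exactly cancels the conditional mean of the second term in \eqref{ut_def}; hence $\EE[u_t(\din)\mid\mathcal{F}_{t-1}]=0$ and the partial sums form a zero-mean martingale.

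Next I would compute the conditional variance. Expanding the square in \eqref{ut_def}, weighting by the same selection probabilities, and using the identity $\sum_{w\in V(t-1)}(\Din^{(t-1)}(w)+\din)=t-1+\din N(t-1)$ (the in-degrees summing to $t-1$ under the assumed initialization), the cross and quadratic terms collapse to
$$
\EE[u_t(\din)^2\mid\mathcal{F}_{t-1}]=(\alpha+\beta)\left[\frac{1}{t-1+\din N(t-1)}\sum_{i=0}^\infty\frac{\Nin_i(t-1)}{i+\din}-\left(\frac{N(t-1)}{t-1+\din N(t-1)}\right)^2\right].
$$
Using $N(t)/t\convas 1-\beta$ and $\Nin_i(t)/t\convas\pin_i$ (consequences of \eqref{pij}, cf.\ \eqref{eq:Nin_over}), together with a dominated-convergence argument based on the tail bound $\Nin_i(t)/t\le 1/(i-1)$ — which follows, as in Lemma~\ref{unifconv}, from the in-degrees at time $t$ summing to $t$ — each summand converges a.s., and Ces\`aro summation gives
$$
\frac{1}{n}\sum_{t=1}^n\EE[u_t(\din)^2\mid\mathcal{F}_{t-1}]\convas(\alpha+\beta)\left[\frac{1}{1+\din(1-\beta)}\sum_{i=0}^\infty\frac{\pin_i}{i+\din}-\left(\frac{1-\beta}{1+\din(1-\beta)}\right)^2\right].
$$
To identify this limit with $I_\text{in}$ from \eqref{I_in}, I would substitute the recursion \eqref{sumb}, namely $\pin_{>i}=a_1(\din)(i+\din)\pin_i+\gamma\ind_{\{i=0\}}$ with $a_1(\din)=(\alpha+\beta)/(1+\din(1-\beta))$, which yields $\sum_i\pin_{>i}/(i+\din)^2=a_1(\din)\sum_i\pin_i/(i+\din)+\gamma/\din^2$; rearranging shows the displayed limit equals $I_\text{in}$ exactly.

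Finally, the conditional Lindeberg condition is immediate: since $\frac{1}{\Din^{(t-1)}(\vend_t)+\din}\le\frac1\din$ and $\frac{N(t-1)}{t-1+\din N(t-1)}\le\frac1\din$, we have $|u_t(\din)|\le 2/\din\le 2/\epsilon$ uniformly, so for any $\eta>0$ the truncation indicators $\ind_{\{|u_t(\din)|>\eta\sqrt n\}}$ vanish once $n$ is large and $n^{-1}\sum_t\EE[u_t^2\ind_{\{|u_t|>\eta\sqrt n\}}\mid\mathcal{F}_{t-1}]\convp 0$. With the conditional variance and Lindeberg conditions in hand, the martingale central limit theorem delivers $n^{-1/2}\sum_{t=1}^n u_t(\din)\convd N(0,I_\text{in})$. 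I expect the main obstacle to be the third step: carrying out the exact second-moment computation and reconciling the resulting constant with $I_\text{in}$ via \eqref{sumb}, and in particular producing a $t$-uniform dominating function to justify passing the limit inside the infinite sum $\sum_i \Nin_i(t-1)/\bigl((i+\din)(t-1)\bigr)$.
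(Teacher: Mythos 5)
Your proposal is correct, and it follows the same skeleton as the paper's proof (martingale difference property of the score increments, a martingale CLT, identification of the limiting variance with $I_\text{in}$ via the recursion \eqref{sumb}), but the variance condition you verify is genuinely different. The paper invokes Theorem 3.2 of Hall and Heyde, whose hypotheses are the \emph{realized} sum-of-squares convergence $n^{-1}\sum_t u_t^2(\din)\convp I_\text{in}$ together with max-negligibility and a bounded expected maximum; to compute that realized sum the paper expands $u_t^2$ into three time-sums $T_1-2T_2+T_3$ and converts each into degree-level sums such as $\sum_i \Nin_{>i}(n)/\bigl(n(i+\din)^2\bigr)$, re-using the counting identity and the Lemma~\ref{unifconv} machinery. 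You instead verify the \emph{predictable} (conditional) variance condition plus the conditional Lindeberg condition, i.e.\ the Brown-type form of the martingale CLT. Your route buys cleaner algebra: because $\sum_{w}\bigl(\Din^{(t-1)}(w)+\din\bigr)=t-1+\din N(t-1)$, the cross and quadratic terms in $\EE[u_t^2(\din)\mid\mathcal F_{t-1}]$ collapse exactly, leaving only $\sum_i \Nin_i(t-1)/\bigl((t-1+\din N(t-1))(i+\din)\bigr)$ minus a deterministic-looking square, after which a dominated-convergence argument and Ces\`aro summation finish the job; the conditional Lindeberg condition is trivial from $|u_t(\din)|\le 2/\din$. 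The paper's route buys direct reuse of the almost-sure convergence facts already established for the consistency proof, at the cost of a longer term-by-term computation. One small correction: the dominating bound should be $\Nin_i(t)/t\le 1/i$ for $i\ge 1$, which follows from $i\,\Nin_i(t)\le\sum_k k\,\Nin_k(t)=t$; your stated bound $1/(i-1)$ only makes sense for $i\ge 2$, though either version yields a summable dominating sequence, so this does not affect the argument.
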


\begin{proof}
Let $\mathcal{F}_{n}=\sigma(G(0), \ldots, G(n))$ be the $\sigma$-field generated by the
information contained in the graphs.
We first observe that {$\{ \sum_{t=1}^n
u_t(\din),\mathcal{F}_{n}, n\geq 1\}$}
is a martingale. To see this, note from \eqref{ut_def} that
$|u_t(\delta)| \le 2/\delta$ and  
\begin{align*}
\EE[u_t&(\din)|\mathcal{F}_{t-1}] \\
=&\ \EE\left[\left.\frac{1}{\Din^{(t-1)}(\vend_t)+\din}\ind_{\{J_t\in\lbrace 1, 2\rbrace\}}\right|\mathcal{F}_{t-1}\right] - \frac{N(t-1)}{t-1+\din N(t-1)} \EE[
\ind_{\{J_t\in\lbrace 1, 2\rbrace\}}|\mathcal{F}_{t-1}]
\\
=&\ \EE\left[\left.\frac{1}{\Din^{(t-1)}(\vend_t)+\din}\right|J_t=1,\mathcal{F}_{t-1}\right] \PP[J_t=1] \\
&\ + \EE\left[\left.\frac{1}{\Din^{(t-1)}(\vend_t)+\din}\right|J_t=2,\mathcal{F}_{t-1}\right] \PP[J_t=2] -  (\alpha+\beta) \frac{N(t-1)}{t-1+\din N(t-1)} \\
=&\ (\alpha+\beta)\sum_{v\in V_{t-1}}\frac{1}{\Din^{(t-1)}(v)+\din} \frac{\Din^{(t-1)}(v)+\din} {t-1+\din N(t-1)} - (\alpha+\beta)\frac{N(t-1)}{t-1+\din N(t-1)}\\
=&\ (\alpha+\beta) \left( \sum_{v\in V_{t-1}}\frac{1}{t-1+\din N(t-1)} -\frac{N(t-1)}{t-1+\din N(t-1)}\right)\\
=&\ 0,
\end{align*}
which satisfies the definition of a martingale difference. Hence $\left\{n^{-1/2}  \sum_{r=1}^t u_r(\din)\right\}_{t=1,\ldots,n}$ is a zero-mean, square-integrable martingale array. The convergence \eqref{score_conv} follows from {the martingale central limit theory (cf.~Theorem 3.2 of \cite{hall:heyde:1980})} if the following three conditions can be verified:
\begin{enumerate}
\item[(a)]
	$n^{-1/2}\max_t  |u_t(\din)| \convp 0$,
\item[(b)]
	$ n^{-1} \sum_t u_t^2(\din) \convp I_\text{in}$,
\item[(c)]
	$\EE \left(n^{-1} \max_t  u_t^2(\din)\right)$ is bounded in $n$.
\end{enumerate}

Since $|u_t(\din)| \le 2/\din$, we have
$$
n^{-1/2}\max_t | u_t(\din)| \le \frac{2}{n^{1/2}\din} \to 0,
$$
and
$$
n^{-1}\max_t  u_t^2 \le \frac{4}{n\din^2} \to 0.
$$
Hence conditions (a) and (c) are straightforward.

To show (b), observe that
\begin{align*}
\frac{1}{n}\sum_{t=1}^n u_t^2(\din) =&\ \frac{1}{n}\sum_{t=1}^n\ \ind_{\{J_t\in\lbrace 1, 2\rbrace\}} \left(\frac{1}{\Din^{(t-1)}(\vend_t)+\din} -  \frac{N(t-1)}{t-1+\din N(t-1)}\right)^2 \\
=&\ \frac{1}{n}\sum_{t=1}^n \frac{\ind_{\{J_t\in\lbrace 1, 2\rbrace\}}}{\left(\Din^{(t-1)}(\vend_t)+\din\right)^2} -  \frac{2}{n}\sum_{t=1}^n \frac{\ind_{\{J_t\in\lbrace 1, 2\rbrace\}}}{\Din^{(t-1)}(\vend_t)+\din} \frac{N(t-1)}{t-1+\din N(t-1)} \\
&\ + \frac{1}{n}\sum_{t=1}^n \ind_{\{J_t\in\lbrace 1, 2\rbrace\}}\left( \frac{N(t-1)}{t-1+\din N(t-1)}\right)^2 \\
=&: \ T_1 - 2 T_2 + T_3.
\end{align*}
Following the calculations in the proof of Lemma~\ref{unifconv},
we have for $T_1$,
\begin{align*}
T_1 =&  \sum_{i=0}^\infty \frac{\Nin_{>i}(n)/n}{(i+\din)^2} - \frac{1}{\din^2} \frac{1}{n}\sum_{t=1}^{n}\ind_{\{J_t=3\}}
\convp \sum_{i=0}^\infty \frac{\pin_{>i}}{(i+\din)^2} - \frac{\gamma}{\din^2}.
\end{align*}
We then rewrite $T_2$ as
\begin{align*}
T_2 =&\ \frac{1}{n}\sum_{t=1}^n \frac{\ind_{\{J_t\in\lbrace 1, 2\rbrace\}}}{\Din^{(t-1)}(\vend_t)+\din} 
\left(\frac{N(t-1)/(t-1)}{1+\din N(t-1)/(t-1)} - \frac{1-\beta}{1+\din(1-\beta)} \right) \\
&\ + \frac{1}{n}\sum_{t=1}^n \frac{\ind_{\{J_t\in\lbrace 1, 2\rbrace\}}}{\Din^{(t-1)}(\vend_t)+\din} \frac{1-\beta}{1+\din(1-\beta)} \\
=&:\ T_{21} + T_{22},
\end{align*}
where 
$$
|T_{21}| \le \frac{1}{n}\sum_{t=1}^n \frac{1}{\din} 
\left|\frac{N(t-1)/(t-1)}{1+\din N(t-1)/(t-1)}- \frac{1-\beta}{1+\din(1-\beta)}\right| \convp 0
$$
by Ces\`aro's convergence and
\begin{align*}
T_{22} &= \frac{1-\beta}{1+\din(1-\beta)} \left(\sum_{i=0}^\infty \frac{\Nin_{>i}(n)/n}{i+\din} - \frac{1}{\din} \frac{1}{n}\sum_{t=1}^{n}\ind_{\{J_t=3\}}\right)\\
&\convp \frac{1-\beta}{1+\din(1-\beta)}\left(\sum_{i=0}^\infty \frac{\pin_{>i}}{i+\din} - \frac{\gamma}{\din}\right)
=  \frac{(\alpha+\beta)(1-\beta)^2}{(1+\din(1-\beta))^2},
\end{align*}
where the equality follows from \eqref{sumb}.
For $T_3$, similar to $T_1$, we have
\begin{align*}
T_3 =&  \frac{1}{n}\sum_{t=1}^n \ind_{\{J_t\in\lbrace 1, 2\rbrace\}}\left(\left( \frac{N(t-1)/(t-1)}{1+\din N(t-1)/(t-1)}\right)^2 - \frac{(1-\beta)^2}{(1+\din(1-\beta))^2}\right)\\
&+ \frac{(1-\beta)^2}{(1+\din(1-\beta))^2} \frac{1}{n}\sum_{t=1}^n \ind_{\{J_t\in\lbrace 1, 2\rbrace\}}
\convp \frac{(\alpha+\beta) (1-\beta)^2}{(1+\din(1-\beta))^2}.
\end{align*}

Combining these results together,
\begin{align}
\frac{1}{n}\sum_{t=1}^n u_t^2(\din) =&\ T_1 - 2 (T_{21} + T_{22}) + T_3 \nonumber\\
\convp&\  \sum_{i=0}^\infty \frac{\pin_{>i}}{(i+\din)^2} - \frac{\gamma}{\din^2}- \frac{(\alpha+\beta)(1-\beta)^2}{(1+\din(1-\beta))^2}
=\ I_\text{in} \label{conv_ut_sq}.
\end{align}
This completes the proof.
\end{proof}

\begin{Lemma} \label{normality_lemma3}
As $n\to\infty$,
$$\frac{1}{n}\sum_{t=1}^n \dot{u}_t(\hatdin^*) \convp -I_\text{in}.$$
\end{Lemma}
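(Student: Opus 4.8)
The plan is to prove the convergence at the true value $\din$ first, where it follows almost for free from the previous lemma, and then to transfer it to the random argument $\hatdin^*$ by a mean value estimate. First I would differentiate \eqref{ut_def} in $\delta$ to obtain
\[
\dot u_t(\delta) = -\frac{\ind_{\{J_t\in\lbrace 1,2\rbrace\}}}{\left(\Din^{(t-1)}(\vend_t)+\delta\right)^2} + \frac{N(t-1)^2\,\ind_{\{J_t\in\lbrace 1,2\rbrace\}}}{\left(t-1+\delta N(t-1)\right)^2}.
\]
Comparing with the decomposition $n^{-1}\sum_t u_t^2(\din) = T_1 - 2T_2 + T_3$ used in the proof of Lemma~\ref{normality_lemma2}, I observe that the $T_2$ cross term drops out and
\[
-\frac1n\sum_{t=1}^n \dot u_t(\din) = T_1 - T_3,
\]
where $T_1$ and $T_3$ are precisely the quantities analyzed there.

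Second, I would simply quote the limits already established in that proof, namely $T_1 \convp \sum_{i\ge0}\pin_{>i}/(i+\din)^2 - \gamma/\din^2$ and $T_3 \convp (\alpha+\beta)(1-\beta)^2/(1+\din(1-\beta))^2$. Subtracting and recalling the definition \eqref{I_in} of $I_\text{in}$ yields $-n^{-1}\sum_t \dot u_t(\din) \convp I_\text{in}$, that is $n^{-1}\sum_t \dot u_t(\din) \convp -I_\text{in}$. So the statement at the fixed point $\din$ requires no new computation.

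Third, I would transfer this to $\hatdin^*$. Since $\hatdin^* = \din + \xi(\hatdin^{MLE} - \din)$ with $\xi\in[0,1]$ and $\hatdin^{MLE}\convas\din$ by Theorem~\ref{thm:consistency}, we have $\hatdin^*\convas\din$, and $\hatdin^*\in[\epsilon,K]$ as a convex combination of points in $[\epsilon,K]$. Differentiating once more,
\[
\ddot u_t(\delta) = \frac{2\,\ind_{\{J_t\in\lbrace 1,2\rbrace\}}}{\left(\Din^{(t-1)}(\vend_t)+\delta\right)^3} - \frac{2 N(t-1)^3\,\ind_{\{J_t\in\lbrace 1,2\rbrace\}}}{\left(t-1+\delta N(t-1)\right)^3},
\]
and on $[\epsilon,K]$ each term is bounded in absolute value by $\epsilon^{-3}$, using $(\Din^{(t-1)}(\vend_t)+\delta)^{-1}\le\epsilon^{-1}$ and $N(t-1)/(t-1+\delta N(t-1))\le\delta^{-1}\le\epsilon^{-1}$; hence $|\ddot u_t(\delta)|\le 4\epsilon^{-3}$ uniformly in $t$ and in $\delta\in[\epsilon,K]$. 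A mean value expansion then gives
\[
\left|\frac1n\sum_{t=1}^n \dot u_t(\hatdin^*) - \frac1n\sum_{t=1}^n \dot u_t(\din)\right| \le \frac{4}{\epsilon^3}\,\bigl|\hatdin^* - \din\bigr| \convas 0,
\]
and combining with the second step yields $n^{-1}\sum_t \dot u_t(\hatdin^*)\convp -I_\text{in}$, as claimed.

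The only genuinely new ingredient is the handling of the random evaluation point $\hatdin^*$, which is the step I expect to need the most care. However, the uniform bound on $\ddot u_t$ over $[\epsilon,K]$ together with the strong consistency of $\hatdin^{MLE}$ reduces it to the one-line mean value estimate above, so the substantive analytic work is entirely inherited from the limits of $T_1$ and $T_3$ computed in Lemma~\ref{normality_lemma2}.
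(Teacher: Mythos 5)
Your proposal is correct and follows the same two-step skeleton as the paper: first establish $n^{-1}\sum_{t=1}^n \dot u_t(\din) \convp -I_\text{in}$ by recycling the limits of $T_1,T_2,T_3$ from the proof of Lemma~\ref{normality_lemma2}, then transfer to $\hatdin^*$ via the strong consistency of $\hatdin^{MLE}$. The execution of both steps differs, however. For the first step, the paper uses the identity $\dot u_t(\delta) = -u_t^2(\delta) - 2u_t(\delta)\,\frac{N(t-1)}{t-1+\delta N(t-1)}$ and combines $n^{-1}\sum_t u_t^2(\din)\convp I_\text{in}$ (i.e.\ \eqref{conv_ut_sq}) with $n^{-1}\sum_t u_t(\din)\frac{N(t-1)}{t-1+\din N(t-1)} = T_2 - T_3\convp 0$, whereas you observe directly that $-n^{-1}\sum_t \dot u_t(\din) = T_1 - T_3$; the two are algebraically equivalent, and yours is slightly more economical since $T_2$ never enters. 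For the second step, the paper proves \eqref{conv_udot2} by explicitly bounding $|u_t(\hatdin^*)-u_t(\din)|$, $|u_t^2(\hatdin^*)-u_t^2(\din)|$ and the cross-term differences, obtaining bounds of the form $\frac{2|\hatdin^*-\din|}{\hatdin^*\din}(\cdots)$; you instead bound the second derivative uniformly, $|\ddot u_t(\delta)|\le 4\epsilon^{-3}$ on $[\epsilon,K]$, and invoke the mean value theorem, which yields a single Lipschitz estimate uniform in $t$ and makes explicit the role of the compact parameter space $[\epsilon,K]$ (the paper's denominators $\hatdin^*\din$ implicitly require the same lower bound $\hatdin^*\ge\epsilon$, which it records only as $\hatdin^*>0$). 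Both routes are valid: yours buys a cleaner transfer step, while the paper's avoids computing $\ddot u_t$ at all. One cosmetic slip in your write-up: each of the two terms in $\ddot u_t$ is bounded by $2\epsilon^{-3}$, not $\epsilon^{-3}$, but your stated total bound $4\epsilon^{-3}$ is nonetheless correct.
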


\begin{proof}
The result of this lemma can be established by showing first 
\beqq \label{conv_udot1}
\frac{1}{n}\sum_{t=1}^n \dot{u}_t(\din) \convp -I_\text{in}
\eeqq
and then
\beqq \label{conv_udot2}
\left|\frac{1}{n}\sum_{t=1}^n \dot{u}_t(\hatdin^*) - \frac{1}{n}\sum_{t=1}^n \dot{u}_t(\din)\right| \convp 0.
\eeqq

We first observe that
\begin{align*}
\dot{u}_t(\delta) =&\ -\left(\frac{1}{\Din^{(t-1)}(\vend_t)+\delta}\right)^2 \ind_{\{J_t\in\lbrace 1, 2\rbrace\}}
+ \left(\frac{N(t-1)}{t-1+\delta N(t-1)}\right)^2\ind_{\{J_t\in\lbrace 1, 2\rbrace\}} \\
=&\ -u_t^2(\delta) - 2u_t(\delta)  \frac{N(t-1)}{t-1+\delta N(t-1)}.
\end{align*}
Recall the definition and convergence result for $T_2$ and $T_3$ in Lemma~\ref{normality_lemma2}, we have
\begin{align*}
\frac{1}{n}\sum_{t=1}^n  u_t(\din)  \frac{N(t-1)}{t-1+{\din} N(t-1)}=&\ T_2-T_3 \convp 0.
\end{align*}
Also from \eqref{conv_ut_sq},
$$
\frac{1}{n}\sum_{t=1}^n u^2_t(\din) \convp I_\text{in}.
$$
Hence
$$
\frac{1}{n}\sum_{t=1}^n \dot{u}_t(\din) = -\frac{1}{n}\sum_{t=1}^nu_t^2(\din) - \frac{2}{n}\sum_{t=1}^nu_t(\din)  \frac{N(t-1)}{t-1+{\din} N(t-1)} \convp -I_\text{in}
$$
and \eqref{conv_udot1} is established.

By construction and definition, we have $\hatdin,\hatdin^*,\din>0$. To prove \eqref{conv_udot2}, note that
\begin{align*}
|u_t(\hatdin^*) - u_t(\din)| \le&\ \ind_{\{J_t\in\lbrace 1, 2\rbrace\}} \left|\frac{1}{\Din^{(t-1)}(\vend_t)+\hatdin^*}  - \frac{1}{\Din^{(t-1)}(\vend_t)+\din}\right| \\
&\ \quad +  \ind_{\{J_t\in\lbrace 1, 2\rbrace\}} \left|\frac{N(t-1)}{t-1+\hatdin^* N(t-1)} -\frac{N(t-1)}{t-1+\din N(t-1)}\right| \\
=&\ \ind_{\{J_t\in\lbrace 1, 2\rbrace\}} \left|\frac{\din-\hatdin^*}{\left(\Din^{(t-1)}(\vend_t)+\hatdin^*\right)\left(\Din^{(t-1)}(\vend_t)+\din\right)}\right| \\
&\ \quad +  \ind_{\{J_t\in\lbrace 1, 2\rbrace\}} \left|\frac{(N(t-1))^2(\din-\hatdin^*)}{\left(t-1+\hatdin^* N(t-1)\right)\left(t-1+\din N(t-1)\right)}\right| \\
\le &\ \frac{2|\hatdin^* - \din|}{\hatdin^*\din}.
\end{align*}
Then
\begin{align*}
|u_t^2(\hatdin^*) - u_t^2(\din)| =\ \left|u_t(\hatdin^*) - u_t(\din)\right|\left|u_t(\hatdin^*) + u_t(\din)\right| 
\le &\ \frac{2\left|\hatdin^* - \din\right|}{\hatdin^*\din}\left(\frac{2}{\hatdin^*} + \frac{2}{\din}\right),
\end{align*}
and
\begin{align*}
\left| u_t(\hatdin^*)\right.&\left.\frac{N(t-1)}{t-1+\hatdin^* N(t-1)}
 - u_t(\din)\frac{N(t-1)}{t-1+\din N(t-1)}\right|\\
  \le&\ \left|u_t(\hatdin^*)
 - u_t(\din)\right|\frac{\frac{N(t-1)}{t-1}}{1+\din \frac{N(t-1)}{t-1}}  + \left|u_t(\hatdin^*)\right|\left|\frac{\frac{N(t-1)}{t-1}}{1+\hatdin^* \frac{N(t-1)}{t-1}} - \frac{\frac{N(t-1)}{t-1}}{1+\din \frac{N(t-1)}{t-1}}\right| \\
 \le &\ \frac{2\left|\hatdin^* - \din\right|}{\hatdin^*\din}\frac{1}{\din} + \frac{2}{\hatdin^*}\frac{\left|\hatdin^* - \din\right|}{\hatdin^*\din}.
\end{align*}
From Theorem~\ref{thm:consistency}, $\hatdin^{MLE}$ is consistent for $\din$, hence
$$
\left|\hatdin^*-\din\right| \le \left|\hatdin^{MLE} - \din\right| \convp 0.
$$
We have
\begin{align*}
\left| \frac{1}{n}\sum_{t=1}^n \right. & \left. \dot{u}_t(\hatdin^*) - \frac{1}{n}\sum_{t=1}^n \dot{u}_t(\din)\right|\\ 
\le &\ 
\frac{1}{n}\sum_{t=1}^n \left|\dot{u}_t(\hatdin^*) - \dot{u}_t(\din)\right| 
\le \ \frac{1}{n}\sum_{t=1}^n \left|{u}^2_t(\hatdin^*) - {u}^2_t(\din)\right| \\
&\ + \frac{2}{n}\sum_{t=1}^n \left|u_t(\hatdin^*)\frac{N(t-1)}{t-1+\hatdin^* N(t-1)} - u_t(\din)\frac{N(t-1)}{t-1+\din N(t-1)}\right| \\
\le &\ \frac{2\left|\hatdin^* - \din\right|}{\hatdin^*\din}\left(\frac{2}{\hatdin^*} + \frac{2}{\din}\right) + \frac{4\left|\hatdin^* - \din\right|}{\hatdin^*\din}\frac{1}{\din} + \frac{4}{\hatdin^*}\frac{\left|\hatdin^* - \din\right|}{\hatdin^*\din} 
\convp\ 0.
\end{align*}
This proves \eqref{conv_udot2} and completes the proof of Lemma~\ref{normality_lemma3}.
\end{proof}


\subsection{Proof of Theorem~\ref{ss_consist}}\label{subsec:proof3}
\begin{proof}
First observe that $\sum_i i \Nin_{i}(n)$ sums up to the total number of edges $n$, so
\[
\sum_{i=0}^\infty\frac{\Nin_{>i}(n)}{n} = \sum_{i=0}^\infty\frac{i\Nin_{i}(n)}{n} = 1.
\]
We can re-write \eqref{onesnap1} as
\begin{align}
\alpha + \tilde\beta\ &= \  \left(\frac{1}{\din}-\sum_{i=0}^\infty \frac{\Nin_{>i}(n)/n}{i+\din}\right)\bigg/\left(\frac{1}{\din} - \frac{1-\tilde\beta}{1+\din(1-\tilde\beta)}\right)  \nonumber \\
&= \  \left(\sum_{i=0}^\infty \frac{\Nin_{>i}(n)/n}{\din}-\sum_{i=0}^\infty \frac{\Nin_{>i}(n)/n}{i+\din}\right)\bigg/\left(\frac{1}{\din(1+\din(1-\tilde\beta))}\right) \nonumber \\
&=\ \sum_{i=1}^\infty \frac{\Nin_{>i}(n)}{n}\frac{i}{i+\din}\left(1+\din(1-\tilde\beta)\right)
=:\ f_n(\din) \label{fndef},
\end{align}
and \eqref{onesnap2} as
\begin{align}
\alpha +\tilde\beta\ &= \   \left(\frac{\Nin_0(n)}{n} + \tilde\beta \right) \bigg/ \left(1-\frac{\Nin_0(n)}{n} \frac{\din}{1+(1-\tilde\beta)\din}\right) 
=:\ g_n(\din). \nonumber
\end{align}

Then $\tildedin$ can be obtained by solving
$$
 f_n(\delta) - g_n(\delta) =0 ,\qquad \delta\in[\epsilon, K].
$$
Similar to the proof of Theorem~\ref{thm:consistency}, we define the limit versions of $f_n$, and $g_n$ as follows: 
\begin{align*}
f(\delta) := &\ \sum_{i=1}^\infty \pin_{>i}\frac{i}{i+\delta}(1+\delta(1-\beta)),   \\
g(\delta):= & \ \left(\pin_0 + \beta \right) \bigg/ \left(1-\pin_0 \frac{\delta}{1+(1-\beta)\delta}\right),
\qquad \delta\in[\epsilon,K].
\end{align*}
Now we apply the re-parametrization 
\beqq\label{reparam}
\eta := \frac{\delta}{1+\delta(1-\beta)}\in\left[\frac{1}{\epsilon^{-1}+1-\beta},\, \frac{1}{K^{-1}+1-\beta}\right] =:\mathcal{I}
\eeqq
to $f$ and $g$, such that
\begin{align*}
\tilde f(\eta) := &\ f(\delta(\eta)) =  \sum_{i=1}^\infty \frac{ \pin_{>i}}{1+(i^{-1}-(1-\beta))\eta}, \\
\tilde g(\eta) := &\ g(\delta(\eta)) = \frac{\pin_0 + \beta}{1 - \eta\pin_0}.
\end{align*}
Note that for all 
$\eta\in\mathcal{I}$:
\begin{itemize}
\item Set $b_i(\eta):=(i^{-1}-(1-\beta))\eta$, then $1+b_i(\eta) >0$ for all $i\ge 1$. So $\tilde{f}(\eta)>0$ on $\mathcal{I}$;
\item $\tilde{f} (\eta) \le \frac{1}{1-(1-\beta)\eta} \sum_{i = 0}^\infty\pin_{>i} \le 1+(1-\beta)K < \infty$.
\end{itemize}
Meanwhile, $\tilde{g}$ is also well defined and strictly positive for $\eta\in\mathcal{I}$ because 
\beqq\label{gfcn}
1/\pin_0 > 1/(1-\beta) > \eta.
\eeqq
The first inequality holds since:
\begin{align*}
1/\pin_0 > 1/(1-\beta) & \Leftrightarrow \pin_0 < 1-\beta \\
& \Leftrightarrow \frac{\alpha}{1+\frac{(\alpha+\beta)\din}{1+(1-\beta)\din}} < 1-\beta \\
& \Leftrightarrow \alpha+\beta < 1+ \frac{(1-\beta)(\alpha+\beta)\din}{1+(1-\beta)\din} \\
& \Leftrightarrow \alpha+\beta < 1+(1-\beta)\din.
\end{align*}
We know $\alpha+\beta<1$ by our model assumption, thus verifying \eqref{gfcn}.

Define for $\eta\in\mathcal{I}$,
$$
\tilde h (\eta) := \frac{1}{\tilde f(\eta)} - \frac{1}{\tilde g(\eta)} = \left( \sum_{i=1}^\infty \frac{ \pin_{>i}}{1+(i^{-1}-(1-\beta))\eta}\right)^{-1} - \frac{1 - \eta\pin_0}{\pin_0 + \beta},
$$
then it follows that 
\[
\tilde{h}(\eta) = 0 \quad\Leftrightarrow \quad \tilde{f}(\eta) = \tilde{g}(\eta) ,\qquad \eta\in\mathcal{I}.
\]
We now show that $\tilde{h}$ is concave and $\tilde{h}(\eta)\to 0$ as $\eta\to 0$, then the uniqueness of the solution follows.

First observe that 
\begin{align}
&\quad\quad\frac{\partial^2}{\partial \eta^2}\tilde h(\eta) 
= \frac{\partial^2}{\partial \eta^2} \left( \sum_{i=1}^\infty \frac{ \pin_{>i}}{1+(i^{-1}-(1-\beta))\eta}\right)^{-1}
= \frac{\partial^2}{\partial \eta^2} \left( \sum_{i=1}^\infty \frac{ \pin_{>i}}{1+b_i(\eta)}\right)^{-1}\nonumber\\
&= 2 \left(\sum_{i=1}^\infty\frac{\pin_{>i}}{1+b_i(\eta)}\right)^{-3}\left[\frac{\partial}{\partial \eta}\left(\sum_{i=1}^\infty\frac{\pin_{>i}}{1+b_i(\eta)}\right)\right]^2
- \left(\sum_{i=1}^\infty\frac{\pin_{>i}}{1+b_i(\eta)}\right)^{-2}\frac{\partial^2}{\partial \eta^2}\left(\sum_{i=1}^\infty\frac{\pin_{>i}}{1+b_i(\eta)}\right).\label{tildeh}
\end{align}
We now claim that 
\begin{align}
\frac{\partial}{\partial \eta}\left(\sum_{i=1}^\infty\frac{\pin_{>i}}{1+b_i(\eta)}\right) 
& = \sum_{i=1}^\infty \frac{\partial}{\partial \eta} \left(\frac{\pin_{>i}}{1+b_i(\eta)}\right)
=  -\sum_{i=1}^\infty \frac{\pin_{>i}(i^{-1}-(1-\beta))}{(1+b_i(\eta))^2}, \label{firstdiff}\\
\frac{\partial^2}{\partial \eta^2}\left(\sum_{i=1}^\infty\frac{\pin_{>i}}{1+b_i(\eta)}\right) 
& =\sum_{i=1}^\infty \frac{\partial^2}{\partial \eta^2} \left(\frac{\pin_{>i}}{1+b_i(\eta)}\right)
=  2\sum_{i=1}^\infty \frac{\pin_{>i}(i^{-1}-(1-\beta))^2}{(1+b_i(\eta))^3}.\label{secdiff}
\end{align}
It suffices to check:
\[
\sum_{i=1}^\infty\sup_{\eta\in \mathcal{I}}\left|\frac{\partial}{\partial \eta}\left(\frac{\pin_{>i}}{1+b_i(\eta)}\right)\right| <\infty,
\qquad 
\sum_{i=1}^\infty\sup_{\eta\in \mathcal{I}}\left|\frac{\partial^2}{\partial \eta^2}\left(\frac{\pin_{>i}}{1+b_i(\eta)}\right)\right| <\infty.
\]
Note that for $i\ge 1$,
\begin{align*}
\sup_{\eta\in\mathcal{I}}\left|\frac{\partial}{\partial \eta}\left(\frac{\pin_{>i}}{1+b_i(\eta)}\right) \right|
&= \sup_{\eta\in\mathcal{I}} \frac{\pin_{>i}|i^{-1}-(1-\beta)|}{(1+b_i(\eta))^2} \\
&\le (2-\beta) \sup_{\eta\in\mathcal{I}}\frac{\pin_{>i}}{(1+b_i(\eta))^2}
\le (2-\beta)(1+(1-\beta)K)^2\pin_{>i}.
\end{align*}
Recall \eqref{ipi_limit}, we then have
\[
\sum_{i=0}^\infty \pin_{>i} = \sum_{i=0}^\infty \sum_{k>i} \pin_k = \sum_{k=0}^\infty \sum_{i= 0}^{k-1} \pin_k
= \sum_{k=0}^\infty k\pin_k = 1.
\]
Hence, 
\begin{align*}
\sum_{i=1}^\infty  \sup_{\eta\in \mathcal{I}}\left|\frac{\partial}{\partial \eta}\left(\frac{\pin_{>i}}{1+b_i(\eta)}\right)\right| 
 &\le (2-\beta)(1+(1-\beta)K)^2 \sum_{i=0}^\infty\pin_{>i} \\
 &=  (2-\beta)(1+(1-\beta)K)^2<\infty,
\end{align*}
which implies \eqref{firstdiff}. Equation \eqref{secdiff} then follows by a similar argument.
Combining \eqref{tildeh}, \eqref{firstdiff} and \eqref{secdiff} gives
\begin{align*}
\frac{\partial^2}{\partial \eta^2}\tilde h(\eta) 
=  2&\left( \sum_{i=1}^\infty \frac{ \pin_{>i}}{1+b_i(\eta)}\right)^{-3} \\ 
\times \left[ \left( \sum_{i=1}^\infty \frac{\pin_{>i}(i^{-1}-(1-\beta))}{(1+b_i(\eta))^2}\right)^2 \right.& \left. -\left( \sum_{i=1}^\infty \frac{ \pin_{>i}}{1+b_i(\eta)}\right) \left(\sum_{i=1}^\infty \frac{\pin_{>i}(i^{-1}-(1-\beta))^2}{(1+b_i(\eta))^3}\right) \right] <0,
\end{align*}
by the Cauchy-Schwarz inequality.
 Hence $\tilde h$ is concave on $\mathcal{I}$. 

From Lemma~\ref{phi}, $\psi(\din)=0$ where $\psi(\cdot)$ is as defined in \eqref{defpsi}. Hence we have $f(\din)=\alpha + \beta$ in a similar derivation to that of \eqref{fndef}. Also from \eqref{alpha-p0}, we have $g(\din)=\alpha + \beta$. 
Hence, $\din$ is a solution to $f(\delta)=g(\delta)$. 

Under the $\delta\mapsto \eta$ reparametrization in \eqref{reparam}, we have that $\tilde f(\eta_\text{in})=\tilde g(\eta_\text{in})$ where $\eta_\text{in}:=\din/(1+\din(1-\beta))$, and also
$$
\lim_{\eta\downarrow 0} \tilde f(\eta) = \sum_{i=1}^\infty \pin_{>i} 
= 1- \pin_{>0} = \beta+\pin_0 = \lim_{\eta\downarrow 0} \tilde g(\eta).
$$
This, along with the concavity of $\tilde h$, implies that $\eta_\text{in}$ is the unique solution to $\tilde h(\eta)=0$, or equivalently, to $\tilde f(\eta)=\tilde g(\eta)$ on $\mathcal{I}$. 

Let $\tilde f_n(\eta):= f_n(\delta(\eta))$, $\tilde g_n(\eta):= g_n(\delta(\eta))$. We can show in a similar fashion that $\tilde\eta:= \tildedin/(1-\tildedin(1-\tilde\beta))$ is the unique solution to $\tilde f_n(\eta)=\tilde g_n(\eta)$. 
Using an analogue of the arguments in the proof of Theorem~\ref{unifconv}, we have 
$$ 
\sup_{\eta\in\mathcal{I}} |\tilde f_n(\eta)-\tilde f(\eta)|\convas 0, \quad \sup_{\eta\in\mathcal{I}} |\tilde g_n(\eta)-\tilde g(\eta)|\convas 0,
$$
and therefore $\tilde\eta \convas \eta_\text{in}$. Since $\delta \mapsto \eta$ is a one-to-one transformation from $[\epsilon,K]$ to $\mathcal{I}$, we have that $\tildedin$ is the unique solution to $f_n(\delta) = g_n(\delta)$ and that $\tildedin\convas\din$. On the other hand, $\tilde\alpha$ can be solved uniquely by plugging $\tildedin$ into \eqref{fndef} and is also strongly consistent, which completes the proof.

\end{proof}

\end{document}